\newcommand{\multiline}[1]{%
  \begin{tabularx}{\dimexpr\linewidth-\ALG@thistlm}[t]{@{}X@{}}
    #1
  \end{tabularx}
}
\def\BState{\State\hskip-\ALG@thistlm}
\newcommand{\eps}{\ensuremath{\varepsilon}}
\DeclareMathOperator*{\argmax}{arg\,max}
\newcommand{\ignore}[1]{}
\def \poly { \text{\rm poly~} }
\def \polylog { \text{\rm polylog~} }
\newtheorem{theorem}{Theorem}
\newtheorem{lemma}[theorem]{Lemma}
\newtheorem{corollary}[theorem]{Corollary}
\newtheorem{observation}[theorem]{Observation}
\newtheorem{claim}[theorem]{Claim}
\newcommand{\Nin}{N^{\text{in}}}
\newcommand{\Nout}{N^{\text{out}}}
\newcommand{\tO}{\tilde{O}}
\newcommand{\Ot}{\tilde{O}}
\begin{document}

%

	\begin{titlepage}
	\title{Towards Tight Approximation Bounds for Graph Diameter and Eccentricities\footnote{Appeared in STOC '18}}
	\date{}
	\author{
		Arturs Backurs \footnote{\texttt{backurs@mit.edu}, Supported by an IBM PhD Fellowship, the NSF and the Simons Foundation}\\ MIT 
		\and Liam Roditty \footnote{\texttt{liam.roditty@biu.ac.il}}\\ Bar Ilan University
		\and Gilad Segal\footnote{\texttt{giladsegal123@gmail.com}}\\ Bar Ilan University
		\and Virginia Vassilevska Williams \footnote{\texttt{virgi@mit.edu}, Supported by an NSF CAREER Award, NSF
Grants CCF-1417238, CCF-1528078 and CCF-1514339, a BSF Grant
BSF:2012338 and a Sloan Research Fellowship.}\\ MIT 
		\and Nicole Wein \footnote{\texttt{nwein@mit.edu}, Supported by an NSF Graduate Fellowship and NSF Grant CCF-1514339}\\ MIT
	}
	\clearpage
	\maketitle
	\thispagestyle{empty}
	\begin{abstract}
		Among the most important graph parameters is the Diameter, the largest distance between any two vertices.
		There are no  known very efficient algorithms for computing the Diameter exactly. Thus, much research has been devoted to how fast this parameter can be {\em approximated}. Chechik et al.\ [SODA 2014] showed that the diameter can be approximated within a multiplicative factor of $3/2$ in $\tO(m^{3/2})$ time. 
		Furthermore, Roditty and Vassilevska W.\ [STOC 13] showed that unless the Strong Exponential Time Hypothesis (SETH) fails, no $O(n^{2-\eps})$ time algorithm can achieve an approximation factor better than $3/2$ in sparse graphs. Thus the above algorithm is essentially optimal for sparse graphs for approximation factors less than $3/2$.
		It was, however, completely plausible that a $3/2$-approximation is possible in linear time. In this work we conditionally rule out such a possibility by showing that unless SETH fails no $O(m^{3/2-\eps})$ time algorithm can achieve an approximation factor better than $5/3$.
		
		Another fundamental set of graph parameters are the Eccentricities. The Eccentricity of a vertex $v$ is the distance between $v$ and the farthest vertex from $v$. Chechik et al.\ [SODA 2014] showed that the Eccentricities of all vertices can be approximated within a factor of $5/3$ in $\tO(m^{3/2})$ time and Abboud et al.\ [SODA 2016] showed that no $O(n^{2-\eps})$ algorithm can achieve better than $5/3$ approximation in sparse graphs. We show that the runtime of the $5/3$ approximation algorithm is also optimal by proving that under SETH, there is no $O(m^{3/2-\eps})$ algorithm that achieves a better than $9/5$ approximation. We also show that no near-linear time algorithm can achieve a better than $2$ approximation for the Eccentricities. This is the first lower bound in fine-grained complexity that addresses near-linear time computation. 
		
		We show that our lower bound for near-linear time algorithms is essentially tight by giving an algorithm that approximates Eccentricities within a $2+\delta$ factor in $\tO(m/\delta)$ time for any $0<\delta<1$. This beats all Eccentricity algorithms in Cairo et al.\ [SODA 2016] and is the first constant factor approximation for Eccentricities in directed graphs.
		
		To establish the above lower bounds we study the $S$-$T$ Diameter problem: Given a graph and two subsets $S$ and $T$ of vertices, output the largest distance between a vertex in $S$ and a vertex in $T$.
		We give new algorithms and show tight lower bounds that serve as a starting point for all other hardness results.
		
		Our lower bounds apply only to sparse graphs.
		We show that for dense graphs, there are near-linear time algorithms for $S$-$T$ Diameter, Diameter and Eccentricities, with almost the same approximation guarantees as their $\tilde{O}(m^{3/2})$ counterparts, improving upon the best known algorithms for dense graphs.
		
	\end{abstract}
		\thispagestyle{empty}
\end{titlepage}

	\section{Introduction}
Among the most important graph parameters are the graph's Diameter and the Eccentricities of its vertices.
The Eccentricity of a vertex $v$ is the (shortest path) distance to the furthest vertex from $v$, and the Diameter is the largest Eccentricity over all vertices in the graph.

The Eccentricities and Diameter measure how fast information can spread in networks. Efficient algorithms for their computation are highly desired (see e.g.~\cite{diam-prac1,diam-prac6,diam-prac3}). 
Unfortunately, the fastest known algorithms for these parameters are very slow on large graphs. For unweighted graphs on $n$ vertices and $m$ edges, the fastest Diameter algorithm runs in $\tilde{O}(\min\{mn,n^\omega\})$~\footnote{$\tilde{O}$ notation hides polylogarithmic factors} time~\cite{cyganbaur} where $\omega<2.373$ is the exponent of square matrix multiplication~\cite{williams2012multiplying,legallmult,stothers}.
For weighted graphs, the fastest Eccentricity and Diameter algorithms
actually compute all distances in the graph, i.e. they solve the All-Pairs Shortest Paths (APSP) problem. The fastest known algorithms for APSP in weighted graphs run in $\min\{\tilde{O}(mn),n^3/\exp(\sqrt{\log n})\}$~\cite{ryanapsp,Pettie04,PettieR05}.

Whether one can solve Diameter faster than APSP is a well-known open problem (e.g. see Problem 6.1 in \cite{fanchung} and \cite{aingworth,chan06j}). Whether one can solve Eccentricities faster than APSP was addressed by \cite{focsy} (for dense graphs) and by \cite{lincolnsoda} (for sparse graphs).
Vassilevska W. and Williams~\cite{focsy} showed that Eccentricities and APSP are equivalent under subcubic reductions, so that either both of them admit $O(n^{3-\eps})$ time algorithms for $\eps>0$, or neither of them do. Lincoln et al.~\cite{lincolnsoda} proved that under a popular conjecture about the complexity of weighted Clique, the $O(mn)$ runtime for Eccentricities cannot be beaten by any polynomial factor for any sparsity of the form $m=\Theta(n^{1+1/k})$ for integer $k$.

Due to the hardness of exact computation, efficient approximation algorithms are sought. A folklore $\tilde{O}(m+n)$ time algorithm achieves a $2$-approximation for Diameter in directed weighted graphs and a $3$-approximation for Eccentricities in undirected weighted graphs. Aingworth et al.~\cite{aingworth} presented an almost-$3/2$ approximation~\footnote{An almost-$c$ approximation of $X$ is an estimate $X'$ so that $X\leq X'\leq cX+O(1)$.} algorithm for Diameter running in $\tilde{O}(n^2+m\sqrt n)$ time. Roditty and Vassilevska W.~\cite{RV13} improved the result of~\cite{aingworth} with
an $\tilde{O}(m\sqrt n)$ expected time
almost-$3/2$ approximation algorithm.
Chechik et al.~\cite{ChechikLRSTW14} obtained a (genuine) $3/2$ approximation algorithm for Diameter (in directed graphs) and a (genuine) $5/3$-approximation algorithm for Eccentricities (in undirected graphs), running in $\tilde{O}(\min \{m^{3/2},mn^{2/3}\})$ time.
These are the only known non-trivial approximation algorithms for Diameter in directed graphs. So far, there are no known faster than $mn$ algorithms for approximating Eccentrities in directed graphs within any constant factor.

Cairo et al.~\cite{cairo} generalized the above results for {\em undirected} graphs and obtained a time-approximation tradeoff: for every $k\geq 1$ they obtained an $\tO(mn^{1/(k+1)})$ time algorithm that achieves an almost-$2-1/2^k$ approximation for Diameter and an almost $3-4/(2^k+1)$-approximation for Eccentricities.

\subsection{Our contributions.}
We address the following natural question:
\begin{center}{\bf Main Question:} {\em Are the known approximation algorithms for Diameter and Eccentricities optimal?}\end{center}

A partial answer is known. Under the Strong Exponential Time Hypothesis (SETH), every $3/2-\eps$ approximation algorithm (for $\eps>0$) for Diameter in undirected unweighted graphs with $O(n)$ vertices and edges must use $n^{2-o(1)}$ time~\cite{RV13}. Similarly, every $5/3-\eps$ approximation algorithm for the Eccentricities of undirected unweighted graphs with $O(n)$ vertices and edges must use $n^{2-o(1)}$ time~\cite{AbboudWW16}. This however does not answer the question of whether the runtimes of the known $3/2$ and $5/3$ approximation algorithms can be improved. It is completely plausible that there is a $3/2$-approximation algorithm for Diameter or a $5/3$-approximation for Eccentricities running in linear time.

We address our Main Question for both sparse and dense graphs. Our results are shown in Table~\ref{tab}.

\begin{table*}[h!]
    \setlength\extrarowheight{3pt}

{\small
    \begin{tabular}{|l|l|l|}
    \hline
     \bf Runtime & \bf Approximation & \bf Comments \\
      \hline
    \multicolumn{3}{|l|}{\bf Diameter Upper Bounds}\\
\hline

      $\tilde{O}(n^2)$ expected&$(3/2,5/3)$ & undirected unweighted\\
      \hline
      $O(n^{2.05})$&$(3/2,1/3)$& undirected unweighted\\
      \hline
      $O(m^2/n)$&  $<2$ for constant even Diameter& directed unweighted\\
      \hline
        \multicolumn{3}{|l|}{\bf Diameter Lower Bounds (under SETH)}\\
\hline
      $\Omega(n^{3/2-o(1)})$&$8/5-\eps$ & undirected unweighted, implies \cite{RV13,ChechikLRSTW14} alg is {\bf tight}\\
      \hline
      $\Omega(n^{3/2-o(1)})$&$5/3-\eps$& undirected weighted\\
      \hline
      $\Omega(n^{1+1/(k-1)-o(1)})$&$(5k-7)/(3k-4)-\eps$& directed unweighted, any $k\geq 3$\\
      \hline
              \multicolumn{3}{|l|}{\bf Eccentricities Upper Bounds}\\
              \hline
              $\tilde{O}(m\sqrt{n})$&2& directed weighted, approximation factor is {\bf tight}\\
              \hline
              $\tilde{O}(m/\delta)$&$2+\delta$& directed weighted, essentially {\bf tight} \\
              \hline
              $\tilde{O}(n^2)$&$(5/3,1)$& undirected unweighted\\
              \hline
                            $O(n^{2.05})$&$(5/3,1/5)$& undirected unweighted\\
\hline
                \multicolumn{3}{|l|}{\bf Eccentricities Lower Bounds (under SETH)}\\
              \hline
               $\Omega(n^{1+1/(k-1)-o(1)})$&$2-1/(2k-1)-\eps$&undirected unweighted, any $k\geq 2$, {\bf tight} for extremal $k$\\
               \hline
               $\Omega(n^{2-o(1)})$&$2-\eps$&directed unweighted, essentially {\bf tight} \\
               \hline
                \multicolumn{3}{|l|}{\bf $S$-$T$ Diameter Upper Bounds}\\
                \hline
                $O(m)$&3&{\bf tight}\\
                \hline
                $\tilde{O}(m\sqrt{n})$&$(2,3/2)$&{essentially \bf tight}\\
                \hline
                $\tilde{O}(n^2)$&$(2,7/2)$& \\
                \hline
                              $O(n^{2.05})$&$(2,3/2)$& \\
\hline
                 \multicolumn{3}{|l|}{\bf $S$-$T$ Diameter Lower Bounds (under SETH)}\\
                 \hline
                 $\Omega(n^{1+1/(k-1)-o(1)})$&$3-2/k-\eps$& any $k\geq 2$, {\bf tight} for extremal $k$\\

\hline
    \end{tabular}}
        \caption{Our results. An $(\alpha,\beta)$-approximation means that if $D$ is the true value and $D'$ is our estimate, then $D/\alpha-\beta\leq D'\leq D$. All of the lower bounds hold even for sparse graphs. $S$-$T$ Diameter is a variant of Diameter introduced later in this section.}
\label{tab}
\end{table*}


%
%
%

\paragraph{Sparse graphs.}
Our first result (restated as Theorem~\ref{diameter_5_8}) regards approximating Diameter in undirected unweighted sparse graphs.

\begin{theorem}[$3/2$-Diameter Approx. is Tight]\label{thm:32tight}
Under SETH, no $O(n^{3/2-\delta})$ time algorithm for $\delta>0$ can output a $8/5-\eps$ approximation for $\eps>0$ for the Diameter of an undirected unweighted sparse graph.
\end{theorem}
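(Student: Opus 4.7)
The plan is to reduce from the $3$-Orthogonal Vectors problem: given three sets $A, B, C$ of $n$ vectors each in $\{0,1\}^d$ with $d = \mathrm{polylog}(n)$, decide whether there exist $a \in A$, $b \in B$, $c \in C$ with $\sum_{i=1}^d a_i b_i c_i = 0$. Under SETH, this problem requires $n^{3-o(1)}$ time. I would build an undirected unweighted sparse graph $G$ with $N = \tilde{O}(n^2)$ vertices and edges whose Diameter is at most $D_1$ when no orthogonal triple exists and at least $D_2$ with $D_2/D_1 \geq 8/5$ when one exists. An $O(N^{3/2-\delta})$-time $(8/5-\eps)$-approximation algorithm for Diameter would then solve 3-OV in $\tilde{O}(n^{3-2\delta})$ time, contradicting SETH.

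The approach proceeds in two phases, with $S$-$T$ Diameter as an intermediate problem (matching the role advertised in the abstract). In phase one, from the 3-OV instance I would create vertex sets $L_A$ (one vertex per $a \in A$), $L_{BC}$ (one vertex per pair $(b,c) \in B \times C$, so $|L_{BC}| = n^2$), and $L_D$ (one vertex per coordinate $i \in [d]$). Add an edge $(a, i)$ iff $a_i = 1$ and an edge $((b,c), i)$ iff $b_i = c_i = 1$. Then the distance from $a$ to $(b,c)$ through $L_D$ is $2$ iff some coordinate $i$ satisfies $a_i = b_i = c_i = 1$, i.e., iff the triple is not orthogonal; if it is orthogonal, the shortest path must route elsewhere. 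Setting $S = L_A$ and $T = L_{BC}$, I would then attach small auxiliary gadgets (e.g., apex vertices connected to portions of $L_D$, $L_A$, $L_{BC}$, and to each other) so that the only way for an $(a, (b,c))$ pair to be far is via an orthogonal witness, giving an $S$-$T$ Diameter instance with a constant gap between the two cases.

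In phase two I would convert the $S$-$T$ Diameter instance into a Diameter instance. The idea is to attach a short pendant (say a single edge) to each vertex of $S$ and each vertex of $T$, so that the leaves of an $S$-pendant and a $T$-pendant realize distance $d_{S,T} + 2$. To ensure these $S$-to-$T$ pendant-leaf pairs truly realize the Diameter, I would add further auxiliary edges, such as a central hub connected to all vertices not in $S \cup T$ and suitable shortcuts inside $S$ and inside $T$, so that all other pairwise distances are bounded by a small constant. Choosing the constants carefully, the construction yields Diameter $5$ in the no-triple case and $\geq 8$ in the orthogonal-triple case, a ratio of at least $8/5$. Since $|V| + |E| = \tilde{O}(n^2)$, an $O(N^{3/2-\delta})$-time algorithm runs in $\tilde{O}(n^{3-2\delta})$ time on the reduction, contradicting SETH.

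The main technical obstacle is the coordinated design of the auxiliary gadgets across both phases. The gadgets must simultaneously (a) keep every pairwise distance among vertices outside of $S \cup T$, and between them and $S \cup T$, bounded by a small constant so they do not dominate the Diameter in either case; (b) leave the $L_A$-to-$L_{BC}$ distance genuinely large in the YES case, i.e., not inadvertently introduce short cross-shortcuts via the auxiliary structure; and (c) keep the total vertex and edge counts $\tilde{O}(n^2)$ so the size bound on $N$ is preserved. Verifying these three constraints simultaneously requires a careful case analysis, pair by pair, over the vertex types in the construction, and is where the bulk of the work in establishing the theorem lies.
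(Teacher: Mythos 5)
Your overall framework (reduce from $3$-OV, go through $S$-$T$ Diameter, build a graph with $\tilde{O}(n^2)$ vertices and edges, aim for a $5$ vs $8$ gap so that an $O(N^{3/2-\delta})$ algorithm would break SETH) matches the paper, but your phase-one construction is too shallow to support the claimed gap, and this is a genuine gap rather than deferred bookkeeping. In your three-layer graph $L_A$ -- $L_D$ -- $L_{BC}$ with single coordinate vertices, a non-orthogonal pair $(a,(b,c))$ is at distance $2$, but an orthogonal pair is only at distance about $4$: the path $a - i - a' - i' - (b,c)$ (or $a-i-(b',c')-i'-(b,c)$) exists whenever some other vector shares a coordinate with $a$ and some coordinate lies in the common support of $b$ and $c$, and any hub adjacent to $L_D$ forces distance $\leq 4$ outright. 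From a $2$-vs-$4$ base, attaching pendant edges to $S$ and $T$ and hubs/shortcuts elsewhere yields at best roughly $4$ vs $6$, i.e. ratio $3/2$ --- exactly the old quadratic-time barrier --- and cannot reach $8/5$. Your claimed ``Diameter $5$ in the no-triple case and $\geq 8$'' does not follow from the construction you describe.

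What the paper does differently, and what your sketch is missing, is a deeper and more structured $S$-$T$ instance: a four-layer graph $S=L_0,L_1,L_2,L_3=T$ where $S$ consists of pairs $(a,b)\in A\times B$, $T$ of pairs $(b,c)\in B\times C$ (note the shared $B$-coordinate), and the internal layers carry \emph{tuples} of coordinates $\bar{x}=(x_0,x_1,x_2)$ together with partial vector information that is ``forgotten'' and replaced layer by layer. This gives $d(s,t)=3$ for all $s\in S,t\in T$ in the NO case, $d((a,b),(b,c))\geq 7$ for the orthogonal triple (Property 4), and crucially the graded Property 5: $d((a,b'),(b,c))\geq 5$ and $d((a,b),(b',c))\geq 5$ when only the shared $B$-vector is replaced. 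The Diameter gadget then consists of a matched copy $S'$ plus a clique $S''$ indexed by $A$ (each $a\in S''$ adjacent to $(a,b)\in S$ for all $b$), and symmetrically $T',T''$ indexed by $C$; the point is that from the diameter endpoint $(a,b)\in S'$, reaching a ``close'' vertex of $S$ costs $4$ (a clique edge is needed), reaching an ``intermediate'' vertex $(a,b')$ costs $3$ but Property 5 keeps its distance to $(b,c)$ at $\geq 5$, and only then does every route give $\geq 8$ while the NO case stays $\leq 5$. Without the $3$-vs-$7$ base gap and the partial-match distance guarantees of Property 5, no choice of apexes, pendants, or internal shortcuts can deliver the $8/5$ ratio, so the missing ingredient is the layered tuple construction itself, not the case analysis you postponed.
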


In particular, any $3/2$-approximation algorithm in sparse graphs must take $n^{3/2-o(1)}$ time. Hence the $\tilde{O}(m^{3/2})$ time $3/2$-approximation algorithm of \cite{RV13,ChechikLRSTW14} is optimal in two ways: improving the approximation ratio to $3/2-\eps$ causes a runtime blow-up to $n^{2-o(1)}$ (\cite{RV13}) and improving the runtime to $O(m^{3/2-\delta})$ causes an approximation ratio blow-up to $8/5$.

Our lower bound instance says that in $O(m^{3/2-\delta})$ time one cannot return $6$ when the Diameter is $8$.
One may be tempted to extend the above lower bound, by showing that, say, in $O(m^{4/3-\delta})$ time one cannot even return $5$ when the Diameter is $8$. This approach, however fails: in Theorem~\ref{thm:fastersparsediam} we give an $O(m^2/n)$ time algorithm that does return $5$ in this case, and in general when the Diameter is $2h$, it returns at least $h+1$. Notice that when the Diameter is $2h$, the folklore linear time algorithm returns an estimate of only $h$. Hence for sparse graphs, our algorithm runs in linear time and outperforms the folklore algorithm. Also, for constant even Diameter, it gives a better than $2$ approximation.



We obtain stronger Diameter hardness results for weighted graphs and for directed unweighted graphs. In particular, assuming SETH:
\begin{enumerate}
\item For weighted sparse graphs, no $O(n^{3/2-\delta})$ time algorithm for $\delta>0$ can output a
$5/3-\eps$ Diameter approximation (for $\eps>0$) (Theorem~\ref{diameter_6_10}).

\item For directed unweighted sparse graphs, using a general time-accuracy tradeoff lower bound (Theorem~\ref{diameter_k_simple}), we show that no near-linear time algorithm can achieve an approximation factor better than $5/3$.
\end{enumerate}

Figure~\ref{fig:diam1} shows our Diameter lower bounds.

\begin{figure*}[t!]
    \centering
    \begin{subfigure}[b]{0.35\textwidth}
        \includegraphics[width=\textwidth]{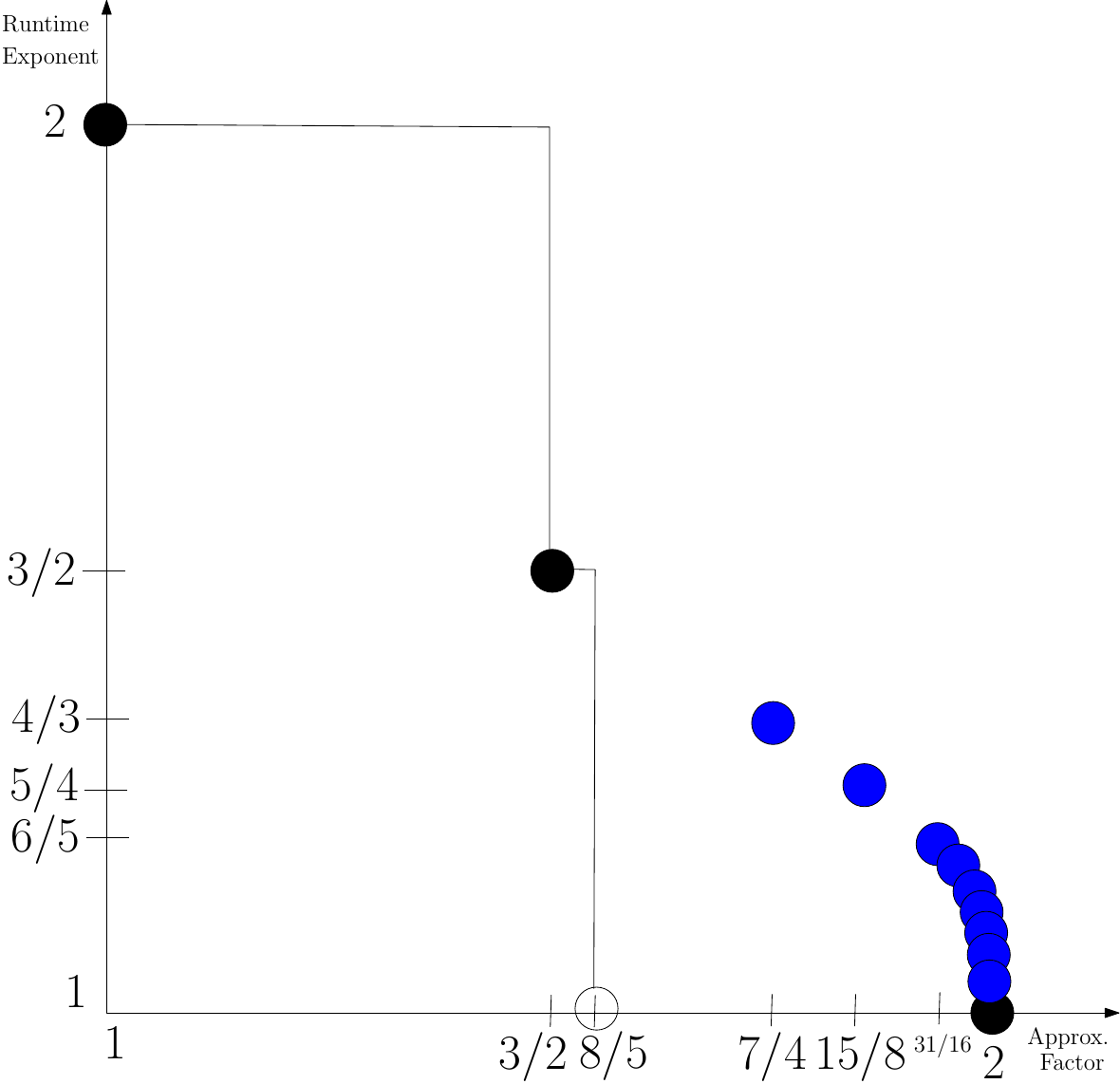}
		\caption{Undirected unweighted Diameter}
		\label{fig:unundiam}
    \end{subfigure}%
    \begin{subfigure}[b]{0.33\textwidth}
        \includegraphics[width=\textwidth]{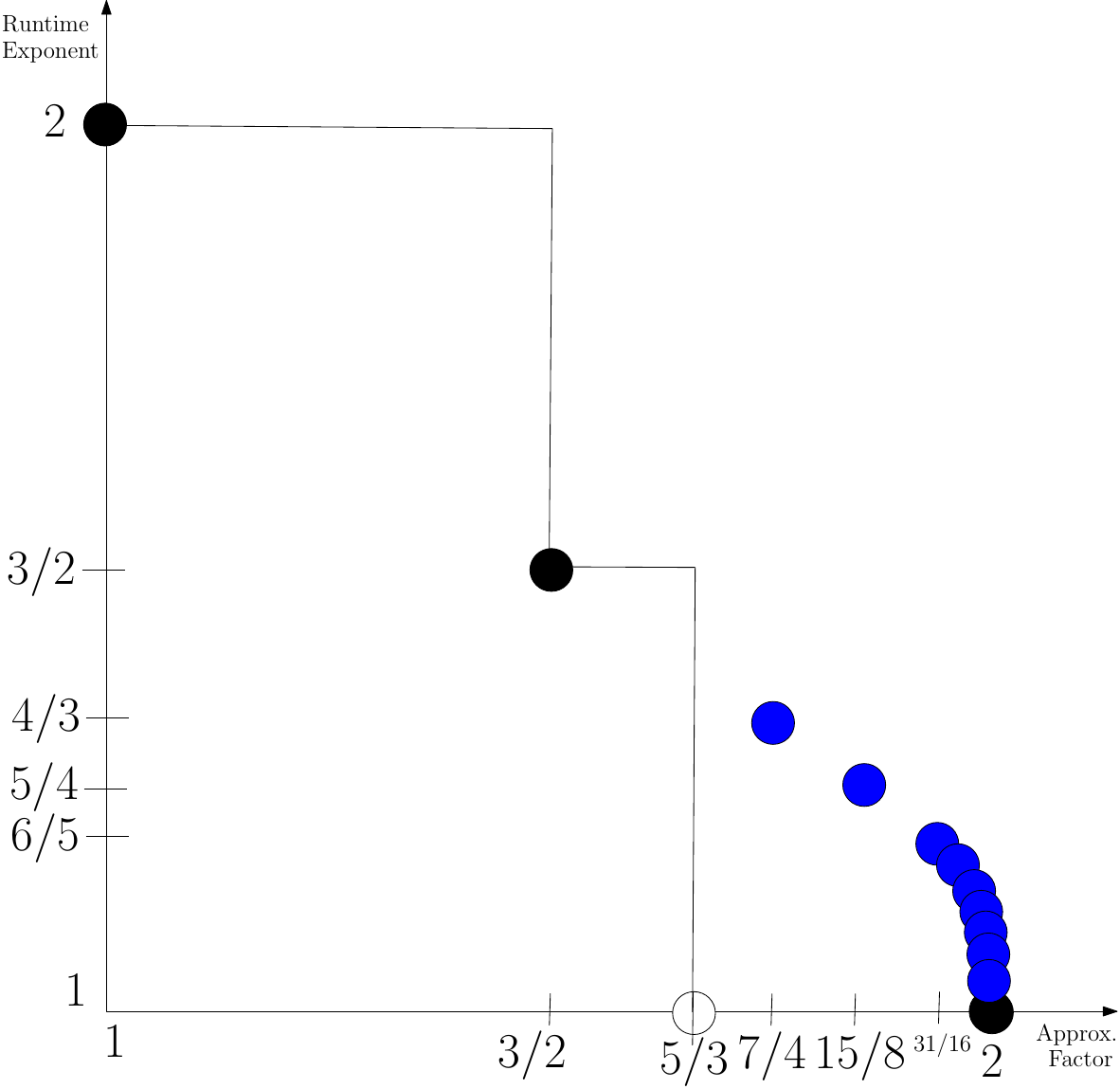}
		\caption{Undirected weighted Diameter}
		\label{fig:unwtddiam}
    \end{subfigure}
	~
	\begin{subfigure}[b]{0.32\textwidth}
        \includegraphics[width=\textwidth]{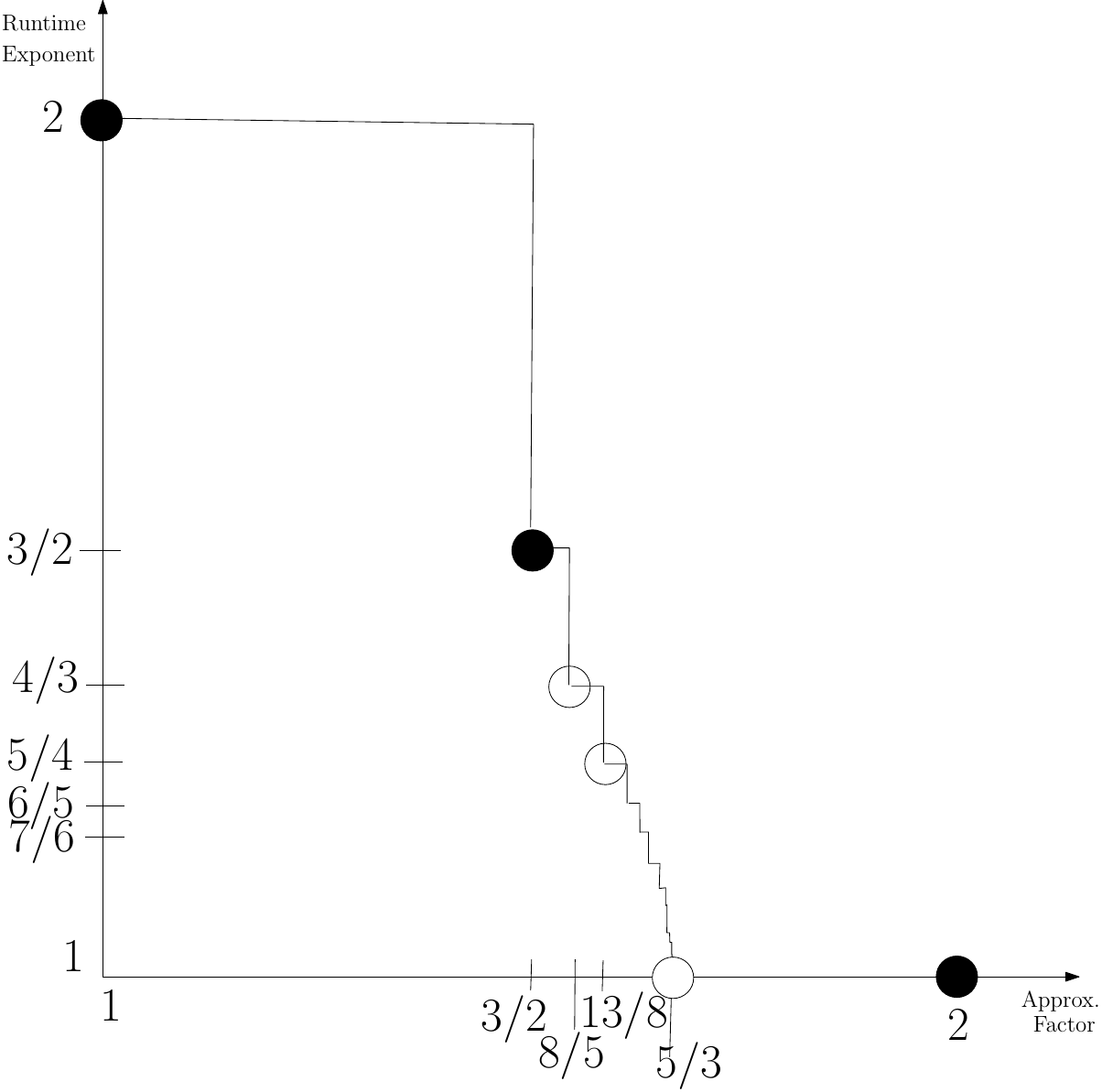}
		\caption{Directed unweighted Diameter}
		\label{fig:dirundiam}
    \end{subfigure}
    \caption{\small Our hardness results for Diameter. The $x$-axis is the approximation factor and the $y$-axis is the runtime exponent. Black lines represent lower bounds. Black dots represent existing algorithms. Blue dots represent existing algorithms whose approximation is potentially off by an additive term (the algorithms of~\cite{cairo}). Transparent dots represent algorithms that might exist and would be tight with our lower bounds. }\label{fig:diam1}
\end{figure*}

We address our Main Question for Eccentricities as well. Our main result for Eccentricities is Theorem~\ref{thm:undirecchard}. Its first consequence is as follows:

\begin{theorem}[$5/3$-Eccentricities Alg. is Tight]\label{thm:95eccintro}
Under SETH, no $O(n^{3/2-\delta})$ time algorithm for $\delta>0$ can output a 
$9/5-\eps$ approximation for $\eps>0$ for the Eccentricities of  an undirected unweighted sparse graph.
\end{theorem}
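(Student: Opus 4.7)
We view this statement as the $k=3$ case of a general time-accuracy trade-off: for every integer $k\ge 2$, under SETH no $O(n^{1+1/(k-1)-\delta})$-time algorithm can compute a $(2-1/(2k-1)-\eps)$-approximation of the Eccentricities of an undirected unweighted sparse graph. Substituting $k=3$ gives $2-1/(2k-1)=9/5$ and $1+1/(k-1)=3/2$, exactly the claimed hardness. The reduction starts from the $k$-Orthogonal Vectors problem ($k$-OV), which under SETH requires $N^{k-o(1)}$ time on $k$ sets of $N$ Boolean vectors of dimension $d=\polylog N$; for $k=3$ this is the $N^{3-o(1)}$ hardness used below.

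The graph construction builds on the $S$-$T$ Diameter lower bound (row ``$3-2/k-\eps$'' of Table~\ref{tab}), whose underlying construction for $k=3$ yields, from a $3$-OV instance, an undirected graph on $\tO(N^2)$ vertices and edges whose $S$-$T$ Diameter is at most $3$ in the NO case and at least $7$ in the YES case. I would augment this graph with a length-$2$ tail: add two new vertices $u_1$ and $u_2$, with $u_1$ adjacent to every $s\in S$ and $u_2$ adjacent only to $u_1$. For any $t\in T$, we have $d(u_2,t)\le 2+d_{ST}(s,t)$, and in fact $\mathrm{ecc}(u_2)\ge 2+d_{ST}$, so $\mathrm{ecc}(u_2)\ge 9$ in the YES case.

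To enforce that in the NO case \emph{every} vertex has eccentricity at most $5$, I would additionally add a small number of ``hub'' vertices, each adjacent to a carefully chosen subset of the layers of the $S$-$T$ Diameter construction, so that any two non-tail vertices lie at distance at most $3$, while no $S$-$T$ path in the YES case is shortened to length $\le 6$. Once both properties are established, a $(9/5-\eps)$-approximation algorithm for Eccentricities distinguishes the two cases (since the estimate for $u_2$ would exceed any NO-case estimate), and running such an algorithm in $O(m^{3/2-\delta})$ time on the $\tO(N^2)$-edge graph would solve $3$-OV in $O(N^{3-2\delta})$ time, contradicting SETH.

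\textbf{Main obstacle.} The delicate step is the simultaneous verification that the added shortcut edges cap every eccentricity at $5$ in the NO case while leaving $\mathrm{ecc}(u_2)\ge 9$ in the YES case. Unlike for $S$-$T$ Diameter, we must now bound the distance from each of the $\tO(N^2)$ vertices (tail, $S$, $T$, coordinate-gadget, and hub vertices) to \emph{every} other vertex. Too many shortcuts collapse the $S$-$T$ gap in the YES case; too few leave some auxiliary vertex with eccentricity exceeding $5$ in the NO case. The balance is obtained by attaching hubs only to ``even'' layers of the layered $S$-$T$ construction, and the proof of correctness proceeds by casework on all pairs of vertex types in the augmented graph.
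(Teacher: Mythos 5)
Your overall framing (the $k=3$ case of a $k$-OV/$S$-$T$ Diameter-based trade-off, sparse graph of size $\tO(N^2)$, runtime accounting) matches the paper, but the central gadget is wrong. Attaching $u_1$ to \emph{every} $s\in S$ collapses exactly the quantity you need: $d(u_2,t)=2+\min_{s\in S}d(s,t)$, so $\mathrm{ecc}(u_2)=2+\max_{t}d(S,t)$ (up to other vertex types), \emph{not} $2+D_{S,T}$. In a YES instance with a single orthogonal triple, only the one pair $(s^*,t^*)$ is guaranteed far, while other vertices of $S$ are still at distance $k=3$ from $t^*$; hence $\mathrm{ecc}(u_2)\approx 5$ in both cases and the claimed inequality $\mathrm{ecc}(u_2)\geq 2+D_{S,T}$ is false. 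A second gap: even with a correct witness, a $7$-vs-$5$ separation only gives ratio $7/5$; you never introduce the amplification needed to reach $9$ vs $5$. Finally, insisting that \emph{every} vertex have eccentricity at most $5$ in the NO case is both unnecessary and works against you: hub vertices attached to whole layers are precisely the kind of shortcut that destroys the YES-case gap (the same failure mode as $u_1$), and your plan for them is left unspecified at the one point where all the difficulty lies.

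The paper's proof (Theorem~\ref{thm:undirecchard}) avoids all three issues by choosing the \emph{original $S$ vertices} as the witnesses whose eccentricities are measured, so only their eccentricities need to be small in the NO case. Concretely, for general $k$: each $t\in T$ gets its own private pendant path of length $k-1$, which amplifies $d(s,t)$ to $d(s,t_{k-1})=d(s,t)+(k-1)$, turning $3k-2$ into $4k-3$ (for $k=3$: $7\mapsto 9$) while keeping the NO case at $k+(k-1)=2k-1$ (for $k=3$: $5$); and each $s\in S$ gets a private path of length $k-2$ to a single hub $y$, which brings every $s$ within $2k-2$ of all of $S\cup S\text{-tails}\cup\{y\}$ in the NO case, yet any detour from $s$ through $y$ back into the layered graph and on to $T$ costs at least $(k-1)+(k-1)+k=3k-2$, so the YES-case pair stays far. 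Distances from $s\in S$ to internal layers are already at most $2k-1$ in the NO case via Property 6 of Theorem~\ref{kOV}, so no further hubs are needed. If you want to salvage your write-up, replace the $u_1,u_2$ tail and the even-layer hubs with these private $T$-side tails and the private-path hub $y$, and argue only about $\mathrm{ecc}(s)$ for $s\in S$.
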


In other words, the $\tilde{O}(m^{3/2})$ time $5/3$-approximation algorithm of \cite{RV13,ChechikLRSTW14} is tight in two ways. Improving the approximation ratio to $5/3-\eps$ causes a runtime blow-up to $n^{2-o(1)}$ (\cite{AbboudWW16}) and improving the runtime to $O(m^{3/2-\delta})$ causes an approximation ratio blow-up to $9/5$.

More generally, we prove (in Theorem~\ref{thm:undirecchard}): for every $k\geq 2$, under SETH, distinguishing between Eccentricities $2k-1$ and $4k-3$ in unweighted undirected sparse graphs requires $n^{1+1/(k-1)-o(1)}$ time. Thus, no near-linear time algorithm can achieve a $2-\eps$-approximation for Eccentricities for $\eps>0$.

The best (folklore) near-linear time approximation algorithm for Eccentricities currently only achieves a $3$-approximation, and only in undirected graphs. There is no known constant factor approximation algorithm for directed graphs! Is our limitation result for linear time Eccentricity algorithms far from the truth?

We show that our lower bound result is essentially tight, for both directed and undirected graphs by producing the first non-trivial near-linear time approximation algorithm for the Eccentricities in weighted directed graphs (Theorem~\ref{alg2}).
\begin{theorem}[$2$-Approx. for Eccentricities in near-linear time.]
Under SETH, no $n^{1+o(1)}$ time algorithm can output a
$2-\eps$ approximation for $\eps>0$ for the Eccentricities of an undirected unweighted sparse graph.

For every $\delta>0$, there is a randomized $\Ot(m/\delta)$ time algorithm that with high probability produces a $(2+\delta)$-approximation for the Eccentricities of any directed weighted graph.\label{thm:2ecc}
\end{theorem}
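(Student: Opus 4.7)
The first statement follows directly from Theorem~\ref{thm:undirecchard}. Given $\eps > 0$, pick an integer $k$ with $k \ge 1/(2\eps)+1$, so that $(4k-3)/(2k-1) = 2 - 1/(2k-1) > 2-\eps$. Any $(2-\eps)$-approximation of Eccentricities must then distinguish the cases Eccentricity $=2k-1$ from Eccentricity $=4k-3$, and Theorem~\ref{thm:undirecchard} rules out any such distinguisher running in $n^{1+1/(k-1)-o(1)}$ time under SETH. Since $k$ depends only on $\eps$, we have $1+1/(k-1) > 1$, so no $n^{1+o(1)}$-time algorithm suffices.

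For the algorithmic upper bound, the plan is a landmark-based estimator. Pick a set $S$ of $|S| = \tO(1/\delta)$ landmarks, and for each $s \in S$ run both out-Dijkstra and in-Dijkstra from $s$ in $\tO(m)$ time, extracting $\epsilon(s) := \max_u d(s,u)$ and the distance vectors $d(s,\cdot)$ and $d(\cdot,s)$; the total cost is $\tO(m/\delta)$. Then for every vertex $v$ output
\[
\hat\epsilon(v) := \min_{s \in S}\bigl(d(v,s) + \epsilon(s)\bigr).
\]
The inequality $\hat\epsilon(v) \ge \epsilon(v)$ is immediate from the triangle inequality: for any $s$ and any farthest vertex $f_v$ from $v$, $d(v,s) + \epsilon(s) \ge d(v,s) + d(s, f_v) \ge d(v,f_v) = \epsilon(v)$. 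The nontrivial task is to construct $S$ so that for every $v$ some $s^\star \in S$ achieves $d(v,s^\star) + \epsilon(s^\star) \le (2+\delta)\epsilon(v)$.

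A sufficient condition on $S$ is that it contains an approximate out-center $s^\star$, i.e.\ a vertex with $\epsilon(s^\star) \le (1+\delta) R_{\text{out}}$, where $R_{\text{out}} := \min_u \epsilon(u)$. Since $R_{\text{out}} \le \epsilon(v)$ and $d(v,s^\star) \le \epsilon(v)$, this yields $\hat\epsilon(v) \le \epsilon(v) + (1+\delta)\epsilon(v) = (2+\delta)\epsilon(v)$. The subroutine to build $S$ runs Dijkstras from an iteratively chosen sequence of $\tO(1/\delta)$ source vertices, using a geometric bucketing of candidate eccentricities into $O((\log n)/\delta)$ levels and picking each new source greedily to refine the smallest observed eccentricity. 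I claim that within $\tO(1/\delta)$ rounds the smallest observed $\epsilon(s)$ reaches within a $(1+\delta)$ factor of $R_{\text{out}}$.

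The main obstacle is the asymmetry of directed distances. In the undirected setting, a single BFS from any vertex immediately yields a $2$-approximation of the radius via the farthest-vertex trick, because $d(v,u)=d(u,v)$ forces every vertex to have eccentricity within a factor of $2$ of the radius. In directed graphs both this symmetry and the ``radius $\le$ diameter $/2$'' relation fail, and no analogous one-shot $O(1)$-approximation for the out-radius is known. The technical heart of the proof is therefore showing that the iterative bucketing scheme converges to an approximate out-center using only $\tO(1/\delta)$ Dijkstras, requiring a delicate argument that each new Dijkstra either improves the current best-observed eccentricity by a $(1+\Omega(\delta))$ factor or certifies that the current best is already a $(1+\delta)$-approximation to $R_{\text{out}}$.
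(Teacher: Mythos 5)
Your derivation of the lower bound half is exactly the paper's: it is a direct instantiation of Theorem~\ref{thm:undirecchard} with $k$ chosen large enough that $2-1/(2k-1)>2-\eps$, and that part is fine.

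The algorithmic half has a genuine gap, and it sits precisely at the step you defer. Your estimator $\hat\epsilon(v)=\min_{s\in S}(d(v,s)+\epsilon(s))$ only yields a factor $2+\delta$ if $S$ contains a vertex $s^\star$ with $\epsilon(s^\star)\leq(1+\delta)R_{\text{out}}$, i.e.\ a $(1+\delta)$-approximate solution to directed Source Radius. No argument is given for the claim that $\tO(1/\delta)$ greedily chosen Dijkstras with geometric bucketing converge to such an $s^\star$; the asserted dichotomy (``each new Dijkstra either improves the best eccentricity by $(1+\Omega(\delta))$ or certifies near-optimality'') is exactly the certification problem that is not known to be solvable, and it is conditionally impossible: as the paper notes, Abboud et al.~\cite{AbboudWW16} show that under the Hitting Set Conjecture any $(2-\eps)$-approximation for Source Radius requires $n^{2-o(1)}$ time, and your subroutine would be a $(1+\delta)$-approximation in near-linear time. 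Moreover, within your estimator the strong requirement cannot simply be relaxed: if $S$ only contains a $c$-approximate center with $c$ close to $2$, then for vertices $v$ with $\epsilon(v)\approx R_{\text{out}}$ you only get $\hat\epsilon(v)\leq(1+c)\epsilon(v)\approx 3\epsilon(v)$, so the landmark scheme as designed stands or falls with near-exact radius approximation. The paper's Theorem~\ref{alg2} avoids this entirely: it never seeks an approximate center, but maintains a set $S$ of still-unestimated vertices together with an upper bound $D$ on their eccentricities, samples $O(\log n)$ vertices $A\subseteq S$, takes the vertex $w$ farthest from $A$, and shows that in each phase either all vertices outside the half $S_w$ of $S$ closest to $w$ can be assigned the (under-)estimate $\tfrac{1-\tau}{2}D$ (halving $|S|$), or every surviving vertex has eccentricity at most $(1-\tau)D$ (shrinking $D$); after $O(\log n/\tau)$ phases every vertex has an estimate with $\tfrac{1-\tau}{2}\epsilon(v)\leq\epsilon'(v)\leq\epsilon(v)$. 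To repair your write-up you would need either to prove the center-finding claim (which would refute the cited hardness) or to switch to an argument of this case-analysis type.
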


The approximation hardness result is the first result within fine-grained complexity that gives tight hardness for {\em near linear time} algorithms.

The $2+\delta$ approximation ratio that our algorithm produces beats {\em all} approximation ratios for Eccentricities given by Cairo et al.~\cite{cairo}. It also constitutes the first known constant factor approximation algorithm for Eccentricities in directed graphs.

Our approximation algorithm also implies as a corollary an approximation algorithm for the Source Radius problem\footnote{The Source Radius problem is a natural extension of the undirected Radius definition. The goal is to return $\min_x\max_vd(x,v)$.} studied in~\cite{AbboudWW16} with the same runtime and approximation factor ($2+\delta$). Abboud et al.~\cite{AbboudWW16} showed that, under the Hitting Set Conjecture, any $(2-\eps)$-approximation algorithm (for $\eps>0$) for Source Radius requires $n^{2-o(1)}$ time, and hence our Source Radius algorithm is also essentially tight.

Our lower bound in Theorem~\ref{thm:2ecc} holds already for undirected unweighted graphs, and the upper bound works even for directed weighted graphs. The algorithm produces a $(2+\delta)$-approximation, which while close, is not quite a $2$-approximation.
We design (in Theorem~\ref{alg1}) a genuine $2$-approximation algorithm running in $\tilde{O}(m\sqrt n)$ time that also works for directed weighted graphs.
We then complement it (in Theorem~\ref{thm:direcchard}) with a tight lower bound under SETH: in sparse directed graphs, if you go below factor $2$ in the accuracy, the runtime blows up to quadratic.

\begin{theorem}[Tight $2$-Approx. for Eccentricities]\label{thm:otherecc}
Under SETH, no $n^{2-\delta}$ time algorithm for $\delta>0$ can output a $2-\eps$ approximation for the Eccentricities of a directed unweighted sparse graph.

There is a randomized $\Ot(m\sqrt n)$ time algorithm that with high probability produces a $2$-approximation for the Eccentricities of any directed weighted graph.
\end{theorem}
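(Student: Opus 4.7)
\emph{Lower bound.} The plan is to reduce from Orthogonal Vectors (OV): given sets $A,B$ of $n$ vectors in $d=\omega(\log n)$ dimensions, I will construct in $\Ot(n d)$ time a sparse directed unweighted graph $G$ such that the maximum eccentricity in $G$ is at most $k$ when no orthogonal pair exists in $A\times B$, and at least $2k$ when one does, for some absolute constant $k$. A $(2-\eps)$-approximation algorithm for eccentricities on $G$ would then decide OV and contradict SETH. The graph is layered: a source $s$, vertex-copies of $A$, coordinate vertices $C$, and vertex-copies of $B$, with edges $s\to a$ for each $a\in A$, $a\to c$ iff $a[c]=1$, and $c\to b$ iff $b[c]=1$, so that $d(s,b)\le 3$ iff some $a\in A$ is non-orthogonal to $b$. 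To turn the ``orthogonal'' case into a large but finite eccentricity ($\ge 2k$) without inflating the eccentricity in the no-OV case, I will attach a bypass gadget of length $2k$ from $s$ to each $b\in B$, together with shortcut edges keeping all bypass vertices within distance $\le k$ of $s$. The main obstacle is calibrating this bypass so that the factor-$2$ gap survives in every vertex's eccentricity; as a starting point I will leverage the $S$-$T$ Diameter hardness construction developed earlier in the paper.

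\emph{Upper bound -- the algorithm.} The plan is a sampling algorithm. Sample a set $S$ of $\tilde{\Theta}(\sqrt n)$ vertices uniformly at random and run Dijkstra from each $s\in S$ in both $G$ and its reverse, yielding $d(s,v)$ and $d(v,s)$ for every $v$; this costs $\Ot(m\sqrt n)$. In parallel, for each vertex $v$ perform a partial Dijkstra from $v$ that settles only the $\sqrt n$ closest out-neighbors of $v$, and let $R_v$ be the distance to the farthest of these (amortized total cost $\Ot(m\sqrt n)$). Output $\hat\epsilon(v)=\max\bigl(R_v,\ \max_{s\in S}d(v,s)\bigr)$, which is at most $\epsilon(v)$ by construction.

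\emph{Upper bound -- correctness.} To show $\hat\epsilon(v)\ge\epsilon(v)/2$ with high probability, fix $v$ and let $u^\star$ realize $\epsilon(v)$. If $u^\star$ lies among the $\sqrt n$ closest out-neighbors of $v$, the partial Dijkstra discovers it and $R_v=\epsilon(v)$. Otherwise at least $\sqrt n$ vertices are farther from $v$ than $R_v$, and by the hitting-set property of $S$ some $s\in S$ lies in this ``far'' set with high probability; a triangle-inequality case analysis on $d(v,s)$ relative to $\epsilon(v)/2$, using the partial-Dijkstra radius $R_v$ as a lower-bound witness, then yields the factor-$2$ guarantee. The hardest step is closing this case analysis with factor exactly $2$ rather than $2+o(1)$, which is precisely where the interplay between the partial-Dijkstra information and the sampled distances is essential.
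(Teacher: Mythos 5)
Your single-source gadget encodes the wrong quantifier. With edges $s\to a$, $a\to c$ iff $a[c]=1$, $c\to b$ iff $b[c]=1$, the distance $d(s,b)$ is small iff \emph{some} $a\in A$ shares a coordinate with $b$; so the eccentricity of $s$ (or the maximum eccentricity) only detects whether there exists $b\in B$ orthogonal to \emph{every} $a\in A$. That problem is solvable in $O(nd)$ time (compare the support of each $b$ with the union of the supports of the $a$'s), so no SETH hardness can be extracted from it: an orthogonal pair $(a^\star,b^\star)$ does not make $d(s,b^\star)$ large, because some other $a'$ may be non-orthogonal to $b^\star$. In addition, in your layered DAG the coordinate vertices and the $B$-copies cannot reach anything, so their out-eccentricities are infinite, and a bypass emanating only from $s$ does not repair this. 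The paper's construction instead makes \emph{each} vertex $u\in U$ carry its own signal: each $v\in V$ is stretched into a directed path $v_0\to\cdots\to v_L$, and a directed return path $P_x$ of length $L$ (with edges $(u,x_1)$ and $(x_L,u)$ for all $u$) makes every relevant eccentricity finite; then $\epsilon(u)=L+2$ if $u$ is orthogonal to no $v$, while $\epsilon(u)\geq 2L+3$ if $u$ is orthogonal to some $v$, since the only route to that $v_0$ goes around $P_x$ through a different $u'$. The existential over $u$ is recovered because a $(2-\eps)$-approximation of \emph{all} eccentricities reveals whether any $u$ has the large value, and taking $L=\lceil 1/\eps\rceil$ pushes the gap ratio $(2L+3)/(L+2)$ above what a $(2-\eps)$-approximation can blur. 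None of these mechanisms (per-vertex signal, return path, length-$L$ amplification) appears in your sketch, and your hoped-for fixed gap ``$k$ vs $2k$'' is not what such constructions deliver.

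\textbf{On the algorithm.} Your algorithm is not a $2$-approximation, and the precise false step is ``otherwise at least $\sqrt n$ vertices are farther from $v$ than $R_v$'': if the eccentricity witness $u^\star$ is not in $\Nout_{\sqrt n}(v)$, you only learn that at least $\sqrt n$ vertices are \emph{closer} than $u^\star$, not that many are far. Concretely, let $v$ have edges to $n-O(\log n)$ ``cloud'' vertices at distance $1$, hang a directed path of length $\log n$ off one cloud vertex, and add an edge from every vertex back to $v$ for strong connectivity. Then $R_v=1$, and with high probability every sampled $s\in S$ is a cloud vertex, so $\max_{s\in S}d(v,s)=1$, while $\epsilon(v)=\log n+1$: your estimate is off by an unbounded factor. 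The set of vertices beyond $\epsilon(v)/2$ from $v$ may consist of a single vertex, so no uniform sample of size $\Ot(\sqrt n)$ hits it, and no triangle-inequality casework closes this hole. The paper's algorithm never tries to hit $v$'s far set: it selects the single vertex $w$ maximizing $d(S,w)$, computes \emph{exact} eccentricities for the $\sqrt n$ vertices of $S'=\Nin_{\sqrt n}(w)$, and outputs $\max_{s\in S\cup\{w\}}d(v,s)$ for $v\notin S'$. If all of $S$ lies within $\epsilon(v)/2$ of $v$, the witness $u^\star$ forces $d(S,w)>\epsilon(v)/2$, hence $S'$ (which meets $S$) contains every vertex whose distance to $w$ is at most $\epsilon(v)/2$; since $v\notin S'$ this certifies $d(v,w)>\epsilon(v)/2$ via the reverse Dijkstra from $w$. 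Finally, your per-vertex partial Dijkstras do not amortize to $\Ot(m\sqrt n)$ in general (a high-degree hub lying in every ball forces $\Theta(m)$ work per vertex, and the $O(s^2)$-per-vertex trick gives $O(n^2)$, which exceeds $m\sqrt n$ on sparse graphs), whereas the paper runs only $\Ot(\sqrt n)$ full Dijkstra computations.
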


We thus give an essentially complete answer to our Main Question for Eccentricities. Our results are summarized in Figures~\ref{fig:ununecc} and \ref{fig:dirundecc}.

\begin{figure*}[t!]
    \centering
    \begin{subfigure}[b]{0.33\textwidth}
        \includegraphics[width=\textwidth]{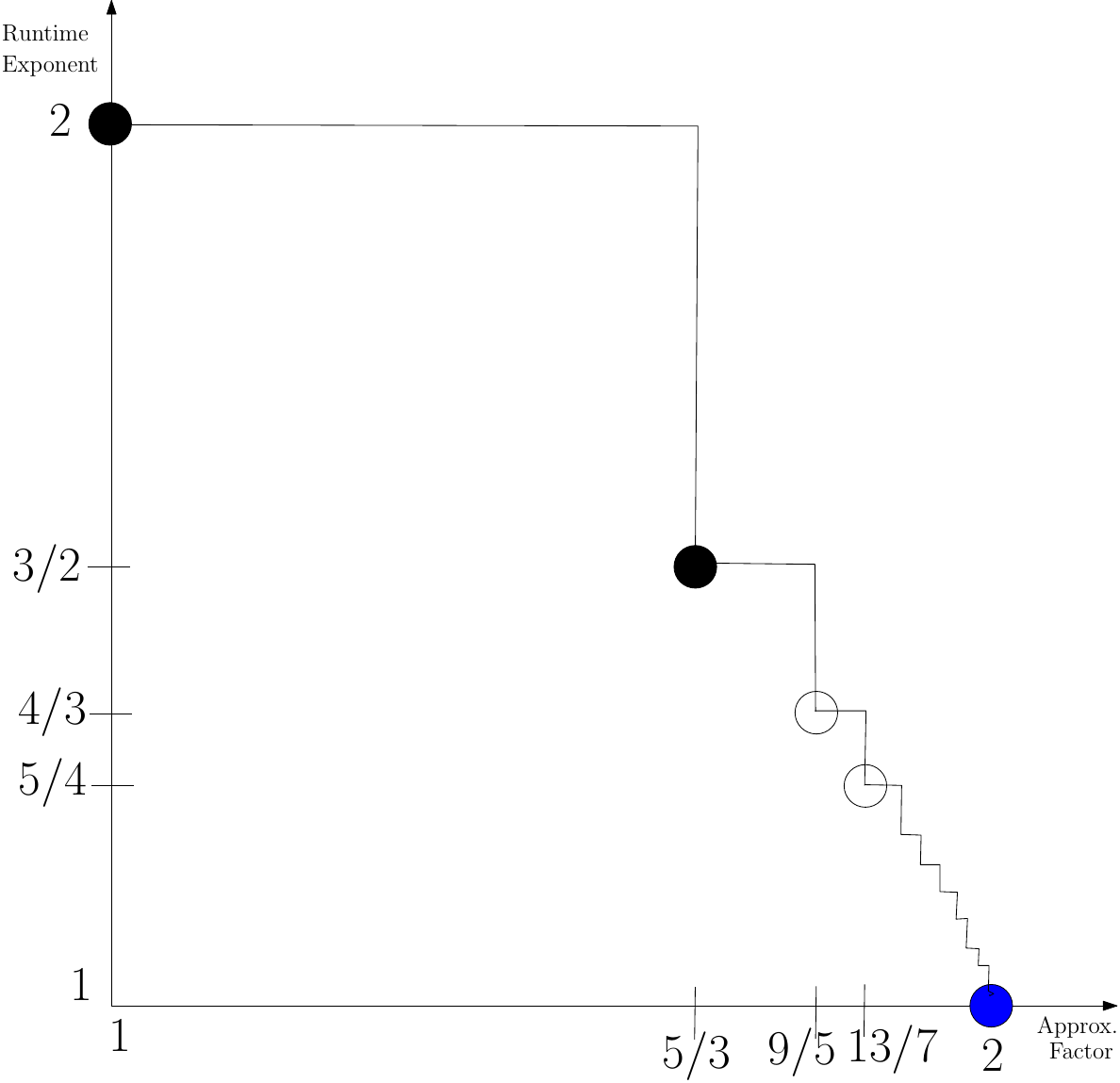}
		\caption{Undirected Eccentricities}
		\label{fig:ununecc}
    \end{subfigure}%
    \hspace{1em}
    \begin{subfigure}[b]{0.33\textwidth}
		\includegraphics[width=\textwidth]{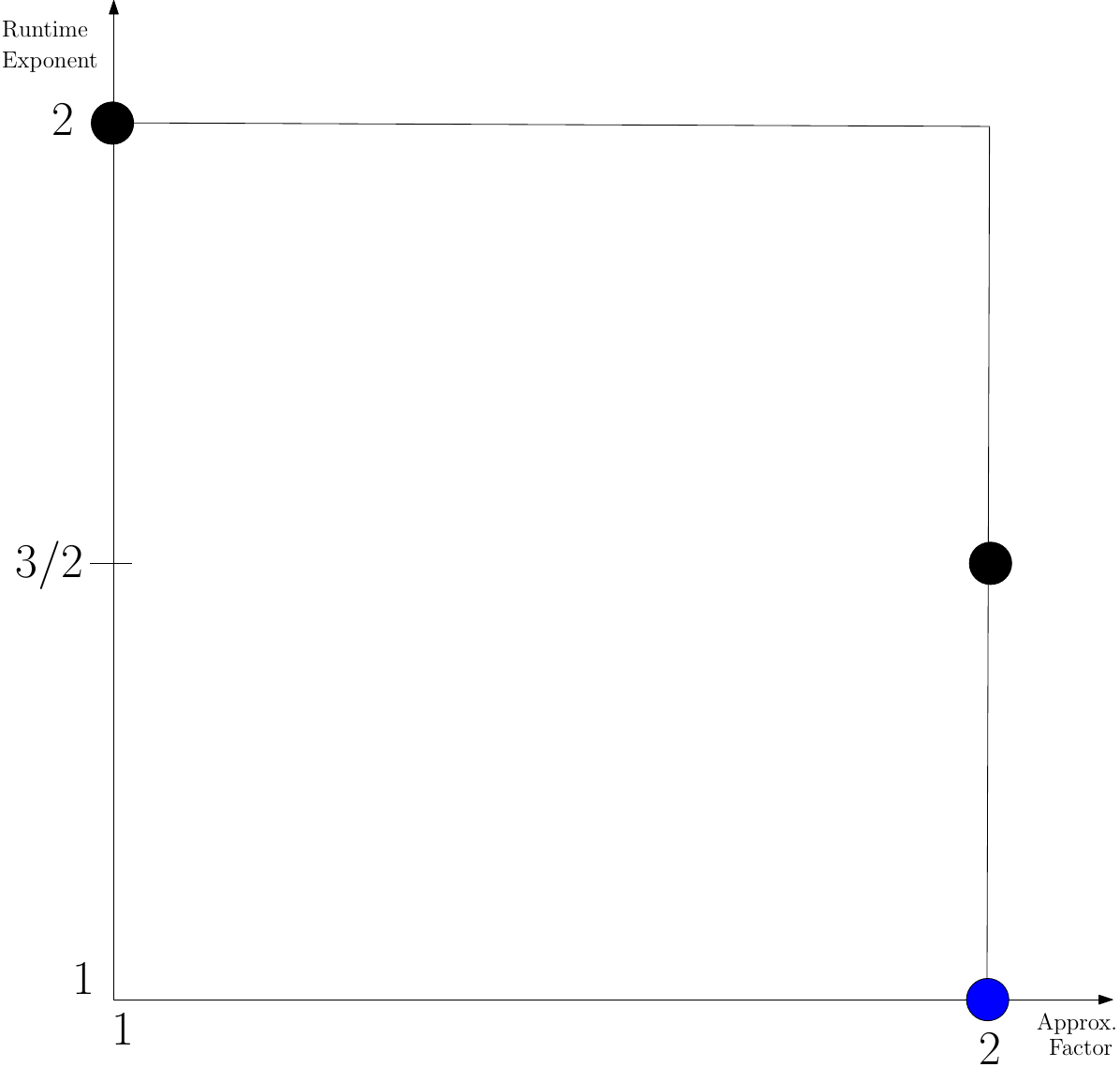}
		\caption{Directed Eccentricities}
		\label{fig:dirundecc}
    \end{subfigure}
    \caption{\small Our algorithms and hardness results for Eccentricities. The lower bounds are for unweighted graphs and the upper bounds are for weighted graphs. The $x$-axis is the approximation factor and the $y$-axis is the runtime exponent. Black lines represent lower bounds. Black dots represent existing algorithms (including our algorithm at $(2,3/2)$ in figure b). Blue dots represent existing algorithms whose position may not be exactly as it appears in the figure. Here, the blue dots represent our $(2+\delta)$-approximation algorithm running in $\tilde{O}(m/\delta)$ time. Transparent dots represent algorithms that might exist and would be tight with our lower bounds.}
\end{figure*}

Our conditional lower bounds for both Diameter and Eccentricities are all based on a common construction: a conditional lower bound for a problem called $S$-$T$ Diameter. In $S$-$T$ Diameter, the input is an undirected graph $G=(V,E)$ and two subsets $S,T\subseteq V$, not necessarily disjoint, and the output is $D_{S,T}:=\max_{s\in S,t\in T} d(s,t)$.

$S$-$T$ Diameter may be a problem of independent interest. It is related to the bichromatic furthest pair problem studied in geometry (e.g. as in~\cite{katoh}), but for graphs (if we set $T=V\setminus S$).

It is easy to see that if one can compute the $S$-$T$ Diameter, then one can also compute the Diameter in the same time: just set $S=T=V$. We show that actually, when it comes to exact computation, the $S$-$T$ Diameter and Diameter in weighted graphs are computationally equivalent (Theorem~\ref{thm:stdiamequiv}).

We show that $S$-$T$ Diameter also has similar approximation algorithms to Diameter. We give a $3$-approximation running in linear time (Claim~\ref{claim:3appxstdiam} based on the folklore Diameter $2$-approximation algorithm), and a $2$-approximation running in $\tilde{O}(m^{3/2})$ time (Theorem~\ref{thm:stdiam32} based on the $3/2$-approximation algorithm of \cite{RV13,ChechikLRSTW14}).

We prove the following lower bound for $S$-$T$ Diameter (restated as Theorem~\ref{thm:hardness}), the proof of which is
the starting point for all of our conditional lower bounds.

\begin{theorem}\label{thm:stintro}
Under SETH, for every $k\geq 2$, every algorithm that can distinguish between $S$-$T$ Diameter $k$ and $3k-2$ in undirected unweighted graphs requires $n^{1+1/(k-1)-o(1)}$ time.
\end{theorem}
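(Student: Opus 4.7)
\emph{Approach.} I would reduce from $k$-Orthogonal Vectors ($k$-OV): given $k$ sets $A_1,\ldots,A_k\subseteq\{0,1\}^d$ each of size $N$ with $d=\polylog N$, decide whether some $(a_1,\ldots,a_k)\in A_1\times\cdots\times A_k$ has no coordinate $c$ satisfying $a_1[c]=\cdots=a_k[c]=1$. Under SETH, $k$-OV requires $N^{k-o(1)}$ time. The goal is to construct from such an instance an undirected unweighted graph $G$ on $n=\tilde O(N^{k-1})$ vertices with distinguished subsets $S,T\subseteq V(G)$ such that $D_{S,T}\le k$ when the instance is negative and $D_{S,T}\ge 3k-2$ when it is positive. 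An $n^{1+1/(k-1)-\delta}$-time distinguisher would then solve $k$-OV in $\tilde O(N^{(k-1)(1+1/(k-1))-\delta(k-1)})=\tilde O(N^{k-\Omega(1)})$ time, contradicting SETH.

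\emph{Construction.} Put in $S$ one vertex per tuple $(a_1,\ldots,a_{k-1})\in A_1\times\cdots\times A_{k-1}$, so $|S|=N^{k-1}$, and set $T=A_k$. Between them add $k-1$ ``partial-suffix'' layers: for each $i\in\{1,\ldots,k-1\}$, layer $L_i$ contains a vertex for every pair (suffix $(a_{i+1},\ldots,a_{k-1})$, coordinate $c\in[d]$), so $|L_i|=N^{k-1-i}\cdot d$. Wire up the backbone so that the canonical path
\[
s=(a_1,\ldots,a_{k-1})\to(a_2,\ldots,a_{k-1},c)\to\cdots\to(c)\to a_k=t
\]
of length $k$ is realized iff a single coordinate $c$ satisfies $a_1[c]=a_2[c]=\cdots=a_k[c]=1$. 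To plug the obvious shortcuts --- in particular the sideways substitution $s=(a_1,\ldots,a_{k-1})\leftrightarrow s'=(a'_1,a_2,\ldots,a_{k-1})$ via a shared $L_1$ vertex, which would otherwise give $d(s,t)\le k+2$ --- attach a carefully designed fallback gadget (for example, a hub connected only to the coordinate layer together with ``locks'' on each element of $S$ and $T$) whose role is to keep the graph connected while forcing any walk that fails to align on a common $1$-coordinate to accumulate exactly $2k-2$ extra hops. The overall vertex count remains $\tilde O(N^{k-1})$.

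\emph{Correctness and main obstacle.} Completeness is immediate: in a negative instance every tuple $(a_1,\ldots,a_k)$ shares some $1$-coordinate, which witnesses a length-$k$ path, so $D_{S,T}\le k$. The hard direction is soundness: given an orthogonal tuple $(a_1^\star,\ldots,a_k^\star)$, show $d(s^\star,t^\star)\ge 3k-2$. The bulk of the work is ruling out hybrid walks that traverse part of the backbone, jump to a different tuple $(a'_1,\ldots,a'_{k-1})$ through a shared intermediate vertex or through the hub, and finish along the backbone from the new tuple. The gadget must be engineered so that every such sideways jump or hub traversal costs at least two extra edges per step of ``cheating''; the resulting case analysis over the interleaving of backbone and fallback edges along any purported short $s^\star$-$t^\star$ walk is the main technical challenge, and it must in each case either terminate at a wrong destination or certify a non-orthogonal tuple. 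Once this case analysis is complete, combining completeness, soundness, and the parameter count gives the claimed $n^{1+1/(k-1)-o(1)}$ lower bound.
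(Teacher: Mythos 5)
Your high-level skeleton (reduce from $k$-OV, take $|S|=N^{k-1}$, graph size $\tilde O(N^{k-1})$, and the runtime arithmetic giving $N^{k-\Omega(1)}$) matches the paper, and you correctly identify soundness against hybrid/weaving walks as the crux. But the concrete construction you sketch fails, and the proposed fix cannot work even in principle. With a single coordinate $c$ carried through the middle layers and $T=A_k$, a YES instance generally still has $d(s^\star,t^\star)\le k+2$: walk the backbone from $s^\star=(a_1^\star,\ldots,a_{k-1}^\star)$ down to the pure-coordinate layer along any $c$ with $a_1^\star[c]=\cdots=a_{k-1}^\star[c]=1$, step to some other $a_k'$ with $a_k'[c]=1$, step back to the coordinate layer on a $c'$ with $a_k'[c']=a_k^\star[c']=1$, and finish at $t^\star=a_k^\star$; analogous two-edge detours through substituted tuples exist at the $S$ end. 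Since $k+O(1)<3k-2$ for $k\ge 3$, the promise gap is destroyed. Crucially, your plan to repair this by attaching a hub-plus-locks gadget is a non-starter: adding vertices and edges only creates new paths and hence can only \emph{decrease} distances, so no gadget can ``force extra hops'' on cheating walks that already live inside your backbone. (A hub of that flavor is what the paper needs later for the Diameter and Eccentricities lower bounds, where \emph{all} pairs must be close in NO instances; for $S$-$T$ Diameter no such gadget is needed, and it would only hurt.)

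What is missing is a redesign of the backbone itself, which is the real content of the paper's proof: (i) $T$ also consists of $(k-1)$-tuples $(b_1,\ldots,b_{k-1})$, so the hard pair is $(a_0,\ldots,a_{k-2})$ versus the shifted overlap $(a_1,\ldots,a_{k-1})$; (ii) each middle vertex carries a full $(k-1)$-tuple of coordinates $\bar x=(x_0,\ldots,x_{k-2})$ together with a triangular pattern of constraints ($a_j$ must equal $1$ in coordinates $x_0,\ldots,x_{k-2-j}$ on the $S$ side, and symmetrically on the $T$ side), so that any crossing of the middle without weaving, between a vertex retaining the prefix $a_0,\ldots,a_q$ and one retaining the suffix $a_{q+1},\ldots,a_{k-1}$, certifies a single coordinate in which all $k$ vectors equal $1$, contradicting orthogonality; and (iii) a quantitative substitution lemma (the paper's Property 5 of Theorem~\ref{kOV}): after substituting $s$ of the overlapping vectors at one end, the distance is still at least $3(k-2)-2s+4=3k-2-2s$, which is exactly what makes each weave expensive enough, and which is then threaded through the case analysis on where the shortest path last leaves $L_0$ and first enters $L_k$. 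Without (i)--(iii) the case analysis you defer cannot ``either terminate at a wrong destination or certify a non-orthogonal tuple''; with single-coordinate layers and $T=A_k$ it demonstrably fails.
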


Theorem~\ref{thm:stintro} implies that under SETH, our aforementioned $2$ and $3$-approximation algorithms are optimal.

For all of our lower bounds, we also address the question of whether they can be extended to higher values of Diameter and Eccentricities. All of our lower bounds, with the exception of directed Eccentricities, are of the form ``any algorithm that can distinguish between Diameter (or Eccentricity) $a$ and $b$ requires a certain amount of time" for small values of $a$ and $b$. This doesn't exclude the possibility of an algorithm that distinguishes between higher Diameters (or Eccentricities) of the same ratio i.e. between $a\ell$ and $b\ell$ for some $\ell$. For weighted Diameter, our lower bound easily extends to higher values of Diameter by simply scaling up the edge weights. For $S-T$ Diameter and undirected Eccentricities, our lower bounds easily extend to higher values of Diameter and Eccentricities by simply subdividing the edges. For unweighted directed Diameter, our lower bound extends to higher values of Diameter with a slight loss in approximation factor by subdividing some of the edges. For unweighted undirected Diameter, our lower bound does not seem to easily extend to higher values of Diameter.

\paragraph{Dense graphs.}
Can we address our Main Question for dense graphs as well? In particular, can we extend our runtime lower bounds of the form $n^{1+1/\ell-o(1)}$ to $mn^{1/\ell-o(1)}$, thus matching the known algorithms for larger values of $m$? We show that the answer is ``no''. For undirected unweighted graphs, we obtain $\tilde{O}(n^2)$ time algorithms for Diameter
achieving an almost $3/2$-approximation (Theorem~\ref{thm:densediam}), and for all Eccentricities achieving an almost $5/3$-approximation algorithm (Theorem~\ref{thm:fastdenseecc}). These algorithms run in near-linear time in dense graphs, improving the previous best runtime of $\tilde{O}(m\sqrt n)$ by Roditty and Vassilevska W.~\cite{RV13}, and subsuming (for dense unweighted graphs) the results of Cairo et al.~\cite{cairo}.

\begin{theorem}\label{thm:denseintro} There is an expected $O(n^2\log n)$ time algorithm that for any undirected unweighted graph with Diameter $D=3h+z$ for $h\geq 0,z\in\{0,1,2\}$, returns an extimate $D'$ such that $2h-1\leq D'\leq D$ if $z=0,1$ and $2h\leq D'\leq D$ if $z=2$.

There is an expected $O(n^2\log n)$ time algorithm that for any undirected unweighted graph returns estimates $\eps'(v)$ of the Eccentricities $\eps(v)$ of all vertices such that $3\eps(v)/5-1\leq \eps'(v)\leq \eps(v)$ for all $v$.
\end{theorem}

We also show (in Theorem~\ref{thm:205}) that one can improve the estimates slightly with an $O(n^{2.05})$ time algorithm. 

	\subsection{Related work}\label{sec:rel}
The fastest known algorithm for APSP in dense weighted graphs  is by R. Williams~\cite{ryanapsp} and runs in $O(n^3/2^{\Theta(\sqrt{\log n})})$ time.
For sparse undirected graphs, the fastest known APSP algorithm is by Pettie~\cite{Pettie04} running in $O(mn+n^2\log\log n)$ time.
The fastest APSP algorithm for sparse undirected weighted graphs is by Pettie and Ramachandran~\cite{PettieR05} and runs in $O(mn\log \alpha(m,n))$ time. For APSP on undirected unweighted graphs with $m>n\log\log n$, Chan~\cite{chan06j} presented an $O(mn \log\log n/\log n)$ time algorithm.
In graphs with small integer edge weights bounded in absolute
value by $M$, APSP can be computed in $\tilde{O}(Mn^\omega)$ time (by Shoshan and Zwick~\cite{sz99}
building upon Seidel~\cite{Seidel} and Alon, Galil and Margalit~\cite{AlonGM97}) in undirected graphs and in $\tilde{O}(M^{0.681}n^{2.5302})$ time
(by Zwick~\cite{zwickbridge}) in directed graphs. Zwick~\cite{zwickbridge} also showed that APSP in directed weighted graphs admits
an $(1+\eps)$-approximation algorithm for any $\eps>0$, running in time $\tilde{O}(n^\omega/\eps \log(M/\eps))$.
For Diameter in graphs with integer edge weights bounded by $M$, Cygan et al.~\cite{cyganbaur} obtained an algorithm running in time $\tilde{O}(Mn^\omega)$.

The pioneering work of Aingworth et al.~\cite{aingworth} on Diameter and shortest paths approximation was the root to many
subsequent works. Building upon Aingworth et al.~\cite{aingworth}, Dor, Halperin and Zwick~\cite{DorHZ00} presented additive approximation
algorithms for APSP in undirected unweighted graphs, achieving among other things, an additive $2$-approximation in $\tilde{O}(n^{7/3})$ time
(notably, the best known bound on $\omega$ is $>7/3$). They  also presented an $\tilde{O}(n^2)$ time additive $O(\log n)$-approximation algorithm.
These algorithms were generalized by Cohen and Zwick~\cite{CoZw01} who showed that in undirected weighted graphs APSP has a (multiplicative)
$3$-approximation in $\tilde{O}(n^2)$ time, a $7/3$-approximation in $\tilde{O}(n^{7/3})$ time,
and a $2$-approximation in $\tilde{O}(n \sqrt{mn})$ time.
Baswana and Kavitha~\cite{BaKa10} presented an $\tilde{O}(m\sqrt n+n^2)$ time multiplicative $2$-approximation algorithm and an $\tilde{O}(m^{2/3} n+n^2)$
time $7/3$-approximation algorithm for APSP in weighted undirected graphs.

Spanners are closely related to shortest paths approximation. A subgraph $H$ is an $(\alpha, \beta)$-spanner of $G=(V,E)$
if for every $u,v\in V$, $d_H(u,v) \leq \alpha\cdot d_G(u,v) + \beta$, where $d_{G'}(u,v)$ is the distance between $u$ and $v$ in $G'$.
Any weighted undirected graph has a $(2k-1,0)$-spanner with $O(n^{1+1/k})$ edges~\cite{AlthoferDDJS93}. Baswana and Sen~\cite{BaswanaS07} presented a randomized linear time algorithm for constructing a  $(2k-1,0)$-spanner with $O(kn^{1+1/k})$ edges.
Dor, Halperin and Zwick~\cite{DorHZ00} showed that a $(1,2)$-spanner with $O(n^{1.5})$ edges can be constructed in $\tilde{O}(n^2)$ time.
Elkin and Peleg~\cite{ElkinP04} showed that for every integer $k\ge 1$ and $\eps>0$ there is a $(1+\eps, \beta)$-spanner with $O(\beta n^{1+1/k})$ edges, where $\beta$ depends on $k$ and $\eps$ but is independent of $n$.
Baswana et. al~\cite{BaswanaKMP10} presented a $(1,6)$-spanner with $O(n^{4/3})$ edges. Woodruff~\cite{Woodruff} presented an $\tilde{O}(n^2)$ time algorithm that computes a $(1,6)$-spanner with $O(n^{4/3})$ edges. Chechik~\cite{Chechik13} presented a $(1,4)$-spanner with $O(n^{7/5})$ edges.
Recently, Abboud and Bodwin\cite{AbBo16} showed that there is no additive spanner with constant error and $O(n^{4/3-\eps})$ edges.

Thorup and Zwick~\cite{ThZw05} introduced the notion of distance oracles, a data structure that stores approximate distances for a weighted undirected graph. Thorup and Zwick designed a distance oracle that for any $k$ takes $O(mn^{1/k})$ time to construct and, is of size $O(kn^{1+1/k})$, and given a pair of vertices $u,v \in V$ it returns in $O(k)$ time a $(2k-1)$-approximation for $d(u,v)$. Baswana and Sen~\cite{BaswanaS06} showed that the construction time can be made $O(n^2)$ for
unweighted graphs. Baswana and Kavitha~\cite{BaKa10} extended the $O(n^2)$ construction time to
weighted graphs. Subsequently, Baswana, Gaur, Sen, and Upadhyay~\cite{BaswanaGSU08} obtained subquadratic construction
time in unweighted graphs, at the price of having additive constant error in addition to the $2k-1$ multiplicative error.

 Chechik~\cite{Chechik15} gave an oracle with space $O(n^{1+1/k})$ and $O(1)$ query time, which like previous work, returns a $(2k-1)$-approximation.
P\v{a}tra\c{s}cu and Roditty~\cite{patrascuroditty} obtained a distance oracle that
uses $\tilde{O}(n^{5/3})$ space, has $O(1)$ query time, and returns a $(2k+1)$-approximation. Sommer~\cite{Sommer16} presented an $\tilde{O}(n^2)$ time algorithm that constructs such a distance oracle. The construction time was recently improved to $O(n^2)$ by Knudsen~\cite{Knudsen17}.
P\v{a}tra\c{s}cu et. al~\cite{PatrascuRT12} presented infinitely many distance oracles with fractional approximation factors that for graphs with $m=\tilde{O}(n)$ converge exactly to the integral stretch factors and the corresponding space bound of Thorup and Zwick.
Thorup and Zwick~\cite{ThZw01} also extended their techniques from \cite{ThZw05} to compact routing schemes.

The lower bounds presented in this paper were inspired by a lower bound by P\v{a}tra\c{s}cu and Roditty~\cite{patrascuroditty}
who showed conditional hardness based on a conjecture on the hardness of a set intersection problem
for the space usage of any distance oracle that can distinguish between distances $3$ and $7$.

\paragraph{Subsequent work}

Our approximation algorithms for Eccentricities in Theorems~\ref{thm:2ecc} and~\ref{thm:otherecc} have recently been improved to a true 2-approximation in $\tilde{O}(m)$ time by Choudhary and Gold~\cite{choudhary2020extremal}. From the lower bounds side, there have been several very recent improvements for Diameter. Li~\cite{3vs5} improved our unweighted undirected construction for Diameter to match our weighted undirected construction. That is, he showed that under SETH, any $5/3-\varepsilon$ approximation algorithm for Diameter in undirected unweighted graphs requires $m^{3/2-o(1)}$ time. Then, Bonnet~\cite{4ov} surpassed this $5/3$ bound for \emph{directed weighted} graphs, by showing that under SETH, any $7/4-\varepsilon$ approximation algorithm for Diameter requires $m^{4/3-o(1)}$ time. Then, concurrent and independent work by Dalirrooyfard and Wein~\cite{dalirrooyfard2020tight}, and Li~\cite{3vs5} closed the gap for near-linear time algorithms for Diameter in \emph{directed unweighted} graphs, by showing that for any fixed integer $k\ge 2$, SETH implies that for all $\delta>0$, any $(\frac{2k-1}{k}-\delta)$-approximation algorithm for Diameter in an unweighted directed graph on $m$ edges requires $m^{\frac{k}{k-1}-o(1)}$ time. This implies that the folklore $\tilde{O}(m)$ time 2-approximation is tight. Li also showed that better SETH-based reductions are impossible, assuming other nondeterministic versions of SETH. Additionally, there have been two surveys with sections devoted to Diameter approximation~\cite{rubinstein2019seth,williams2018some}.

	\subsection{Organization}
In Section~\ref{sec:st_hard} we prove our lower bounds for $S$-$T$ Diameter which serve as a basis for the rest of our lower bounds. We also show equivalence between Diameter and $S$-$T$ Diameter. In Section~\ref{sec:ecchardness} we prove our lower bounds for Eccentricities: one for directed graphs and one for undirected graphs. In Section~\ref{sec:diam_lb} we prove our lower bounds for Diameter. This section is divided into four subsections, one for each of the results in Table~\ref{tab}. In Section~\ref{sec:eccalgs} we describe our algorithms for sparse graphs: our 2-approximation and $(2+\delta)$-approximation for Eccentricities, our 2-approximation and 3-approximation for $S$-$T$ Diameter, and our less than 2-approximation for Diameter. In Section~\ref{sec:diamnewalgs} we describe our algorithms for dense graphs: our nearly $3/2$-approximations for Diameter and our nearly $5/3$-approximations for Eccentricities.
	
	\section{Preliminaries}\label{sec:prelims}
%

Let $G=(V,E)$ be a weighted or unweighted, directed or undirected graph, where $|V|=n$ and $|E|=m$. For every $u,v\in V$ let $d_G(u,v)$ be the length of the shortest path from $u$ to $v$. When the graph $G$ is clear from the context we omit the subscript $G$.

The eccentricity $\eps(v)$ of a vertex $v$ is defined as $\max_{u\in V} d(v,u)$. The diameter $D$ of a graph is $\max_{v\in V}\eps(v)$.
In a directed graph we additionally let $\eps^{in}(v)=\max_{u\in V} d(u,v)$. In a directed graph, we sometimes use $\eps^{out}(v)$ to denote $\eps(v)$ to emphasize the distinction between $\eps^{in}(v)$ and $\eps^{out}(v)$. For all of these definitions, a distance $d(u,v)$ is considered to be $\infty$ if $v$ is not reachable from $u$.

Let $deg(v)$ be the degree of $v$ and let $N_s(u)$ be the set of the $s$ closest vertices of $v$, where ties are broken by taking the vertex with the smaller ID.
In  a directed graph let $deg^{out}(v)$  (resp., $deg^{in}(v)$) be the outgoing (incoming) degree of $v$.
Let $\Nout_s(v)$ (resp., $\Nin_s(v)$) be the set of the $s$ closest  outgoing (incoming) vertices of $v$, where ties are broken by taking the vertex with the smaller ID. For a subset $S\subseteq V$ of vertices and a vertex $v \in V$ we write $d(S,v):=\min_{s \in S}d(s,v)$ to denote the distance from the set $S$ to the vertex $v$.

Let $k\geq 2$. The $k$-Orthogonal Vectors Problem ($k$-OV) is as follows: given $k$  sets $S_1,\ldots,S_k$, where each $S_i$ contains $N$ vectors in $\{0,1\}^d$, determine whether there exist $v_1\in S_1,\ldots,v_k\in S_k$ so that their generalized inner product is $0$, i.e. $\sum_{i=1}^d \prod_{j=1}^k v_j[i]=0$.

R. Williams~\cite{TCS05} (see also \cite{ipecsurvey}) showed that if for some $\eps>0$ there is an $N^{k-\eps}\poly(d)$ time algorithm for $k$-OV, then CNF-SAT on formulas with $N'$ variables and $M$ clauses can be solved in $2^{N'(1-\eps/k)}\poly(M)$ time. In particular, such an algorithm would contradict
the Strong Exponential Time Hypothesis (SETH) of Impagliazzo, Paturi and Zane~\cite{ipz2} which states that for every $\eps>0$ there is a $K$ such that $K$-SAT on $N'$ variables cannot be solved in $2^{(1-\eps)N'}\poly N'$
time (say, on a word-RAM with $O(\log N')$ bit words). 

This also motivates the following $k$-OV Conjectures (implied by SETH) for all constants $k\geq 2$: $k$-OV requires $N^{k-o(1)}$ time on a word-RAM with $O(\log N)$ bit words. Most of our conditional lower bounds are based on the $k$-OV Conjecture for a particular constant $k$, and thus they also hold under SETH. 

A main motivation behind SETH is that despite decades of research, the best upper bounds for $K$-SAT on $N'$ variables and $M$ clauses remain of the form $2^{N'(1-c/K)}\poly{(M)}$ for constant $c$ (see e.g. \cite{hirschsat,PPSZ05,Scho99sat}). The best algorithms for the $k$-OV problem for any constant $k\geq 2$ on $N$ vectors and dimension $c\log N$ run in time $N^{k-1/O(\log c)}$ (Abboud, Williams and Yu~\cite{AbboudWY15} and Chan and Williams~\cite{ChanW16}).

%
%

\section{$S$-$T$ Diameter hardness}\label{sec:st_hard}
In this section we will prove that under SETH, our $S$-$T$ Diameter algorithms are essentially optimal. Our $S$-$T$ Diameter construction serves as the basis for all of our conditional lower bounds.
We prove the following theorem, which implies Theorem~\ref{thm:stintro}.

\begin{theorem}
Let $k\geq 2$ be an integer. There is an $O(k N^{k-1} d^{k-1})$ time reduction that transforms any instance of $k$-OV on sets of $N$ $d$-dimensional vectors into an unweighted undirected graph on $O(N^{k-1}+k N^{k-2} d^{k-1})$ vertices and $O(k N^{k-1} d^{k-1})$ edges and two disjoint sets $S$ and $T$ on $N^{k-1}$ vertices each, so that if the $k$-OV instance has a solution, then $D_{S,T}\geq 3k-2$, and if it does not, $D_{S,T}\leq k$.\label{thm:hardness}
\end{theorem}
%

From Theorem~\ref{thm:hardness} we get that if there is some $k\geq 2$, $\eps>0$ and $\delta>0$ so that there is an $O(m^{1+1/(k-1)-\eps})$ time $(3-2/k-\delta)$-approximation algorithm for $S$-$T$ Diameter in $m$-edge graphs, then $k$-OV has an $n^{k-\gamma}\poly(d)$-time algorithm for some $\gamma>0$ and SETH is false.

We obtain an immediate corollary.
\begin{corollary} 
For $S$-$T$ Diameter,
under SETH, there is 
\begin{itemize}
\item no $O(m^{2-\delta})$ time $(2-\eps)$-approximation algorithm for any $\eps>0,\delta>0$,
\item no $O(m^{3/2-\delta})$ time $(7/3-\eps)$-approximation algorithm for any $\eps>0,\delta>0$,
\item no $m^{1+o(1)}$ time, $(3-\eps)$-approximation algorithm for any $\eps>0$. 
\end{itemize}
\end{corollary}

\subsection{Warm-up: Construction for $k=3$}
We first review the construction of~\cite{RV13} for $S$-$T$ Diameter for $k=2$. Their construction is explicitly written as a construction for the standard Diameter problem, but it implicitly gives a construction for $S$-$T$ Diameter. Then, we will describe our construction for $k=3$. This will provide some intuition for the general construction.

\paragraph{Review of the $k=2$ case.} In the $k=2$ case, we are given an OV instance consisting of sets $W_0,W_1\subseteq \{0,1\}^d$, each of size $N$. Our goal is to construct a graph on $\tilde{O}(N)$ vertices and edges so that if the OV instance is a NO instance then the $S$-$T$ Diameter is 2, and if the OV instance is a YES instance then the $S$-$T$ Diameter is at least 4. 

We construct a layered graph $G$ on three layers $L_0, L_1, L_2$  where edges only go between adjacent layers. We set $S=L_0$ and $T=L_2$ for the $S$-$T$ Diameter instance. $L_0$ consists of one vertex for each vector $a\in W_0$, and $L_2$ consists of one vertex for each vector $b\in W_1$. $L_1$ consists of one vertex for each coordinate in $[d]$. 

There is an edge between $a\in L_0$ and $i\in L_1$ if and only if $a$ is 1 in coordinate $i$. There is an edge between $b\in L_2$ and $i\in L_1$ if and only if $b$ is 1 in coordinate $i$. This completes the description of the construction.

If the OV instance is a NO instance, then by definition, for every pair $a\in W_0$, $b\in W_1$, there exists a coordinate $x$ that is 1 for both $a$ and $b$. Thus, there is a path of length 2 in $G$ from $a\in L_0$ to  $b\in L_2$ through $x\in L_1$. On the other hand, if the OV instance is a YES instance with orthogonal pair $a\in W_0$, $b\in W_1$, then by definition there is no coordinate such that $a$ and $b$ are both 1. Therefore, the distance between $a\in L_0$ and $b\in L_2$ is more than 2, and it must be at least 4 due to the layered structure of the graph.

\paragraph{The $k=3$ case.} We are given a $3$-OV instance consisting of sets $W_0, W_1, W_2\subseteq \{0,1\}^d$, each of size $N$. Our goal is to construct a graph on $\tilde{O}(N^2)$ vertices and edges so that if the $3$-OV instances is a NO instance, the $S$-$T$ Diameter is 3, and if the $3$-OV instance is a YES instance, the $S$-$T$ Diameter is at least 7. Our construction is shown in Figure~\ref{fig:k3}.

We construct a layered graph $G$ on four layers $L_0, L_1, L_2, L_3$ where the edges go only between adjacent layers. We will set $S=L_0$ and $T=L_3$ for the $S-T$ Diameter instance. $L_0$ consists of one vertex for each pair of vectors $a_0\in W_0$, $a_1\in W_1$, and $L_3$ consists of one vertex for each pair of vectors $b_1\in W_1$, $b_2\in W_2$. 

Now, we would like to define $L_1$, $L_2$, and the edges so that the $S$-$T$ Diameter is 3 if and only if the $3$-OV instance is a NO instance. To provide some intuition, we fix a pair of vertices $(a_0,a_1)\in S$, $(b_1,b_2)\in T$ and ask the question: what can we say about the vectors $a_0$, $a_1$, $b_1$, and $b_2$ in a NO instance? By definition, in a NO instance the vectors $a_0$, $a_1$, and $b_2$ are all 1 at some coordinate $x_0$. Similarly, the vectors $a_0$, $b_1$, and $b_2$ are all 1 at some coordinate $x_1$. Because the $S$ side of the graph concerns the vectors $a_0$ and $a_1$ and the $T$ side of the graph concerns the vectors $b_1$ and $b_2$, we separate the conditions on $a_0$, $a_1$, $b_1$, and $b_2$ according to each side of the graph. For the $S$ side, we have that $a_0[x_0]=a_1[x_0]=a_0[x_1]=1$. For the $T$ side, we have that $b_1[x_1]=b_2[x_1]=b_2[x_0]=1$.

This motivates a first attempt for how to define the rest of the graph. Suppose $L_1$ and $L_2$ both consist of one vertex for every pair of coordinates $x_0,x_1\in [d]$. We add an edge from $(a_0,a_1)\in L_0$ to $(x_0,x_1)\in L_1$ if $a_0[x_0]=a_1[x_0]=a_0[x_1]=1$. We add an edge from $(b_1,b_2)\in L_3$ to $(x_0,x_1)\in L_2$ if $b_1[x_1]=b_2[x_1]=b_2[x_0]=1$. Finally, we add an edge from $(x_0,x_1)\in L_1$ to $(x_0',x_1')\in L_2$ if $x_0=x_0'$ and $x_1=x_1'$. While this construction has $S$-$T$ Diameter 3 for a NO instance of $3$-OV, it does not have $S$-$T$ Diameter 7 for a YES instance, as we would like. In particular, suppose $a_0\in W_0$, $a_1\in W_1$, $a_2\in W_2$ is an orthogonal triple. We would like the distance between $(a_0,a_1)\in L_0$ and $(a_1,a_2)\in L_3$ to be at least 7, however, with the current construction, there could be a path of length 5 from $(a_0,a_1)\in L_0$ to some $(x_0,x_1)\in L_1$, to some $(a_0',a_1)\in L_0$, to some $(x_0',x_1')\in L_1$, to $(x_0',x_1')\in L_2$, to $(a_1,a_2)\in L_3$. The issue is that from $(a_0,a_1)\in L_0$ we can reach $(a_0',a_1)\in L_0$ with a path of length 2 for some convenient choice of $a_0'$. 

To overcome this issue, we also include the vector $a_0$ in the representation of vertices in $L_1$; that is, $L_1$ consists of one vertex for every triple $(a_0\in W_0,x_0\in[d],x_1\in[d])$. There is an edge from $(a_0,a_1)\in L_0$ to $(a_0',x_0,x_1)\in L_1$ if and only if $a=a'$ and $a_0[x_0]=a_1[x_0]=a_0[x_1]=1$. Symmetrically, $L_2$ consists of one vertex for every triple $(b_2\in W_2,x_0\in[d],x_1\in[d])$ and there is an edge from $(b_1,b_2)\in L_3$ to $(b_2',x_0,x_1)\in L_2$ if and only if $b_2=b_2'$ and $b_1[x_1]=b_2[x_1]=b_2[x_0]=1$. Lastly, there is an edge between $(a_0,x_0,x_1)\in L_1$ and $(b_2,x_0',x_1')\in L_2$ if and only if $x_0=x_0'$ and $x_1=x_1'$. This completes the description of the construction. One can verify that this construction indeed satisfies the property that the $S$-$T$ Diameter is 3 for a NO instance of $3$-OV, and the $S$-$T$ Diameter is 7 for a YES instance of $3$-OV.

\begin{figure*}[h]
  \centering
    \includegraphics[width=0.575\textwidth]{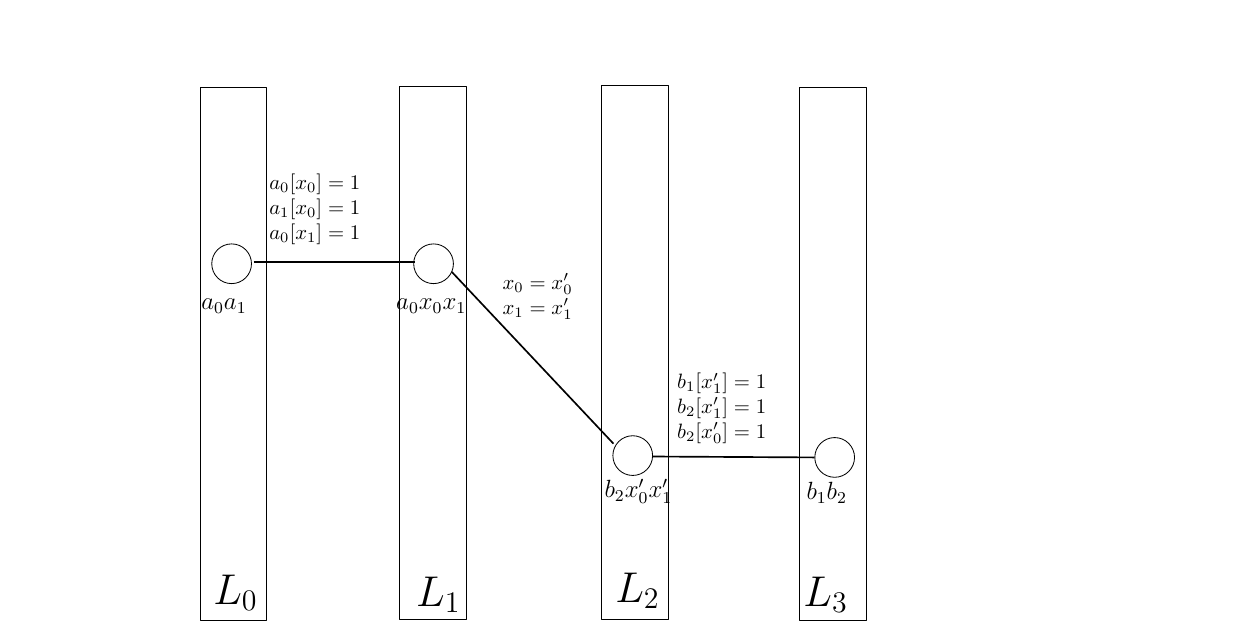}
    \caption{The construction when $k=3$.}\label{fig:k3}
\end{figure*}

\subsection{Construction for general $k$}
We will prove the following more detailed theorem, which will be useful for our Diameter lower bounds.

\begin{theorem} \label{kOV} Let $k\geq 2$.
	Given a $k$-OV instance consisting of sets $W_0,W_1,\dots,W_{k-1} \subseteq \{0,1\}^d$, each of size $N$, we can in $O(kN^{k-1}d^{k-1})$ time construct an unweighted, undirected graph with \newline $O(N^{k-1}+k N^{k-2} d^{k-1})$ vertices and $O(k N^{k-1} d^{k-1})$ edges that satisfies the following properties.
	\begin{enumerate}[itemsep=0mm]
		\item The graph consists of $k+1$ layers of vertices $S=L_0,L_1,L_2,\dots,\allowbreak L_k=T$. The number of vertices in the sets is $|S|=|T|=N^{k-1}$ and $|L_1|,|L_2|,\dots,|L_{k-1}|\leq N^{k-2}d^{k-1}$.
		\item $S$ consists of all tuples $(a_0,a_1,\ldots, a_{k-2})$ where for each $i$, $a_i\in W_i$. Similarly, $T$ consists of all tuples $(b_1,b_2,\ldots, b_{k-1})$ where for each $i$, $b_i\in W_i$.
		\item If the $k$-OV instance has no solution, then $d(u,v)=k$ for all $u \in S$ and $v \in T$.
		\item If the $k$-OV instance has a solution $a_0, a_1,\dots,a_{k-1}$ where for each $i$, $a_i\in W_i$ then if $\alpha=(a_0,\dots a_{k-2})\in S$ and $\beta= (a_1,\dots,a_{k-1}) \in T$, then $d(\alpha,\beta)\geq 3k-2$. 
		\item Suppose the $k$-OV instance has a solution $a_0, a_1,\dots,a_{k-1}$ where for each $i$, $a_i\in W_i$. Let $t=k-2$.
		Let $s$ be such that $0\leq s\leq t$.

Let $b_{t-s+j}\in W_{t-s+j}$ for all $j\in [1,\ldots,s]$ be some other vectors, potentially different from $a_{t-s+j}$.  
Consider $\alpha=(a_0,a_1,\ldots,a_{t-s},\allowbreak b_{t-s+1},\ldots,b_{t})\in L_0$ and $\beta=(a_{1},\ldots,a_{t+1})\in L_{t+2}$.
Then the distance between $\alpha$ and $\beta$ is at least $3t-2s+4$.

Symmetrically, let $c_{j}\in W_{j}$ for all $j\in [1,\ldots,s]$ be some other vectors, potentially different from $a_{j}$.  
Consider $\alpha=(a_0,a_1,\ldots,a_{t})\in L_0$ and $\beta=(c_{1},\ldots,c_s,a_{s+1},\dots, a_{t+1})\in L_{t+2}$.
Then the distance between $\alpha$ and $\beta$ is at least $3t-2s+4$.
		\item For all $i$ from 1 to $k-1$, for all $v \in L_i$ there exists a vertex in $L_{i-1}$ adjacent to $v$ and a vertex in $L_{i+1}$ adjacent to $v$. We can assume that this property holds because we can remove all vertices that do not satisfy this property from the graph and the resulting graph will still satisfy the other properties.
	\end{enumerate}
\end{theorem}

\paragraph{Proof of Theorem~\ref{kOV}}
We will prove the theorem for $k=t+2$ for any $t\geq 0$.

We will create a layered graph $G$ on $t+3$ layers, $L_0,\ldots,L_{t+2}$, where the edges go only between adjacent layers $L_i,L_{i+1}$. We will set $S=L_0$ and $T=L_{t+2}$ for the $S$-$T$ Diameter instance. In particular, $D_{S,T}\geq t+2$ because of the layering.

Let us describe the vertices of $G$. $L_0$ consists of $N^{t+1}$ vertices, each corresponding to a $t+1$-tuple $(a_0,a_1,\ldots, a_t)$ where for each $i$, $a_i\in W_i$. Similarly, $L_{t+2}$ consists of $N^{t+1}$ vertices, each corresponding to a $t+1$-tuple $(b_1,b_2,\ldots, b_{t+1})$ where for each $i$, $b_i\in W_i$. Layer $L_1$ consists of $N^t d^{t+1}$ vertices, each corresponding to a tuple $(a_0,\ldots,a_{t-1}, \bar{x})$ where for each $i$, $a_i\in W_i$ and $\bar{x}=(x_0,\ldots,x_t)$ is a $(t+1)$-tuple of coordinates in $[d]$. Similarly, $L_{t+1}$ consists of $N^t d^{t+1}$ vertices, each corresponding to a tuple $(b_2,\ldots,b_{t+1}, \bar{x})$ where for each $i$, $b_i\in W_i$ and $\bar{x}$ is a $(t+1)$-tuple of coordinates. For every $j\in \{2,\ldots,t\}$, $L_j$ consists of $N^t d^{t+1}$ vertices  
$(a_0, \ldots,  a_{t-j},\allowbreak b_{t+3-j},\ldots,  b_{t+1},\allowbreak  \bar{x})$, where for each $i$, $a_i\in W_i$, $b_i\in W_i$ and $\bar{x}=(x_0,\ldots,x_t)$ is a $(t+1)$-tuple of coordinates in $[d]$. In other words, there is a vector from $W_i$ for every $i\notin \{t-j+1,t-j+2\}$.

Now let us define the edges. Consider a vertex $(a_0,\ldots,a_t)\in L_0$. For every $\bar{x}=(x_0,\ldots,x_t)$, connect $(a_0,\ldots,a_t)$ to  $(a_0,\ldots,a_{t-1},\bar{x})\in L_1$ if and only if for every $j\in \{0,\ldots,t\}$, $a_j$ is $1$ in coordinates $x_0,\ldots,x_{t-j}$. 
For any $i\in \{1,\ldots,t\}$ let's define the edges between $L_i$ and $L_{i+1}$. For \newline $(a_0,\ldots,a_{t-i},b_{t+3-i},\ldots,b_{t+1}, \bar{x})\in L_i$~\footnote{Here if $i=1$, there are no $b$'s in the tuple.} and for any $c_{t+2-i}\in W_{t+2-i}$, add an edge to \newline $(a_0,\ldots,a_{t-i-1},c_{t+2-i},b_{t+3-i},\ldots,b_{t+1}, \bar{x})\in L_{i+1}$. Here we ``forget'' vector $a_{t-i}$ and replace it with $c_{t+2-i}$, leaving everything else the same. 

Finally, the edges between $L_{t+1}$ and $L_{t+2}$ are as follows. Consider some $(b_1,\ldots,b_{t+1})\in L_{t+2}$. For every $\bar{x}=(x_0,\ldots,x_t)$, connect $(b_1,\ldots,b_{t+1})$ to $(b_2,\ldots,b_{t+1},\bar{x})\in L_{t+1}$ if and only if for every $j\in \{1,\ldots,{t+1}\}$, $b_j$ is $1$ in coordinates $x_{t+1-j},\ldots,x_t$. 
Figure~\ref{fig:example2} shows the construction of the graph for $t=2$.

\begin{figure*}
  \centering
    \includegraphics[width=0.7\textwidth]{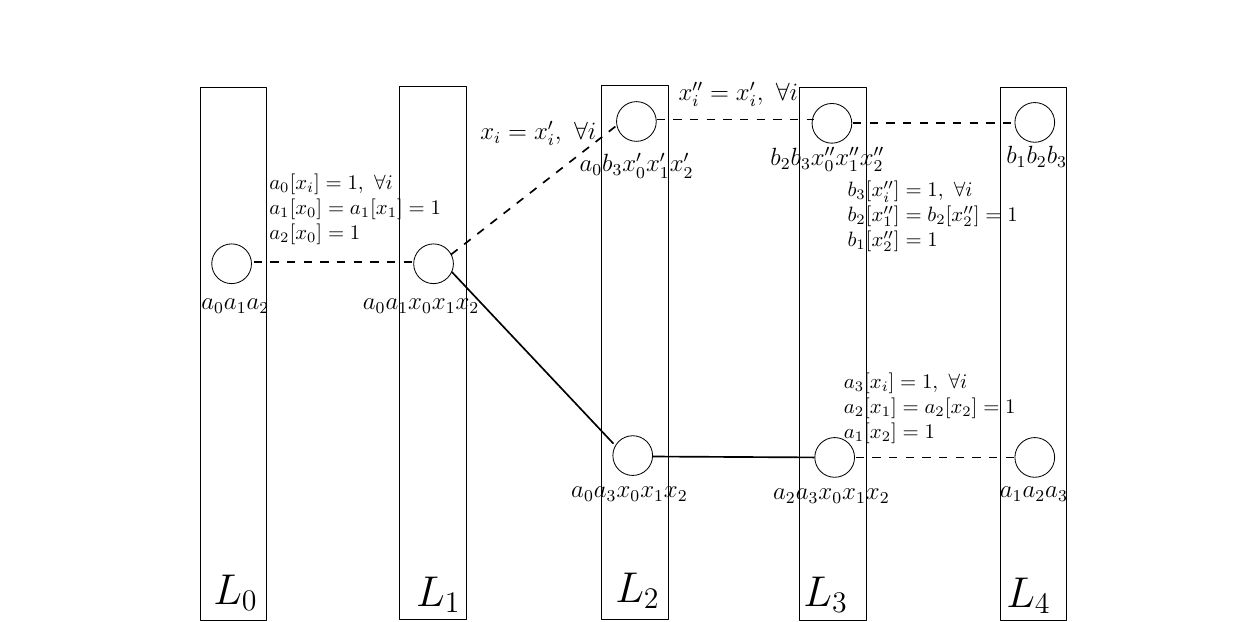}
    \caption{The reduction graph from $(t+2)$-OV for $t=2$. The figure depicts when a path of length $t+2$ exists between arbitrary $a_0a_1a_2\in L_0$ and $b_1b_2b_3\in L_{t+2}$. It also shows that when there is a path of length $t+2$ between $a_0a_1a_2\in L_0$ and $a_1a_2a_3\in L_{t+2}$, $a_0,a_1,a_2,a_3$ cannot be an orthogonal $4$-tuple.}\label{fig:example2}
\end{figure*}

An important claim is as follows:
\begin{claim}
For every $\bar{x}$, each $(a_0,\ldots,a_{t-1},\bar{x})\in L_1$ is at distance $t$ to every $(b_2,\ldots,b_t,\bar{x})\in L_{t+1}$.\label{claim:important}
\end{claim}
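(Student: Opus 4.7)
The plan is to establish the claim in two parts: a matching lower and upper bound on the distance.

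For the lower bound, I would simply invoke the layered structure of $G$ set up just before the claim. Because edges exist only between consecutive layers $L_i$ and $L_{i+1}$, any path from a vertex in $L_1$ to a vertex in $L_{t+1}$ must traverse at least $t$ edges, so the distance is at least $t$. This is immediate and needs no further argument.

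For the upper bound I would exhibit an explicit path of length $t$ that keeps the $\bar{x}$-coordinate frozen throughout. Starting from $u_1=(a_0,\ldots,a_{t-1},\bar{x})\in L_1$, I would inductively define $u_{i+1}\in L_{i+1}$ for $i=1,\ldots,t$ by applying the edge rule between $L_i$ and $L_{i+1}$ with the specific choice $c_{t+2-i}:=b_{t+2-i}\in W_{t+2-i}$. Concretely, using the format for $L_i$-vertices from the construction, $u_i=(a_0,\ldots,a_{t-i},b_{t+3-i},\ldots,b_{t+1},\bar{x})$, and applying the edge definition (which forgets $a_{t-i}$ and inserts $c_{t+2-i}$ while leaving every other coordinate, including $\bar{x}$, unchanged) yields $u_{i+1}=(a_0,\ldots,a_{t-i-1},b_{t+2-i},b_{t+3-i},\ldots,b_{t+1},\bar{x})\in L_{i+1}$. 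After $t$ such steps I reach $u_{t+1}=(b_2,\ldots,b_{t+1},\bar{x})\in L_{t+1}$, which is the desired endpoint.

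I expect no real obstacle: every transition used is a legal edge by construction (the choice $c_{t+2-i}=b_{t+2-i}$ is always available because $b_{t+2-i}\in W_{t+2-i}$ is arbitrary in the edge rule), and the key structural fact is simply that the middle-layer edges never touch the $\bar{x}$-suffix, so the same $\bar{x}$ appearing at both endpoints can be carried unchanged along the entire path. The only minor bookkeeping is to check the boundary indices $i=1$ and $i=t$, where the $b$- or $a$-part of the intermediate tuple is empty; both cases match the vertex-format conventions spelled out in the construction, so the induction goes through cleanly.
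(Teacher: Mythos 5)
Your proposal is correct and matches the paper's proof: the paper also establishes the claim by exhibiting exactly this explicit path that repeatedly forgets $a_{t-i}$ and inserts $b_{t+2-i}$ while carrying the same $\bar{x}$ through all middle layers, giving a path of length $t$. Your explicit mention of the lower bound from the layered structure is a harmless addition the paper leaves implicit.
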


\begin{proof}
Consider the path starting from
$(a_0,\ldots,a_{t-1},\bar{x})$,
and then for each $i\geq 1$ following the edges $(a_0,\ldots,a_{t-i},b_{t+3-i},\ldots,\allowbreak b_{t+1},\allowbreak \bar{x})\in L_i$ to $(a_0,\ldots,a_{t-1-i},\allowbreak b_{t+2-i},\ldots,b_{t+1},\bar{x})\in L_{i+1}$, until we reach $(b_2,\ldots,b_{t+1},\bar{x})\in L_{t+1}$. This path exists by construction and has length $t$.
\end{proof}

Now we proceed to prove the bounds on the $S$-$T$ Diameter.

\begin{lemma}[Property 3 of Theorem~\ref{kOV}]If the $(t+2)$-OV instance has no solution, then $D_{S,T}=t+2$.
\end{lemma}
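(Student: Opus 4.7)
The bound $D_{S,T} \ge t+2$ is forced by the layering alone, since every $L_0$-to-$L_{t+2}$ path must traverse at least $t+2$ edges. So the plan is to show that, assuming the $(t+2)$-OV instance has no solution, every pair $u = (a_0,\ldots,a_t) \in L_0 = S$ and $v = (b_1,\ldots,b_{t+1}) \in L_{t+2} = T$ admits a path of length exactly $t+2$. The construction will hinge on Claim~\ref{claim:important}, which already supplies a length-$t$ path between $(a_0,\ldots,a_{t-1},\bar x) \in L_1$ and $(b_2,\ldots,b_{t+1},\bar x) \in L_{t+1}$ for any coordinate tuple $\bar x$: it therefore suffices to produce \emph{one} $\bar x = (x_0,\ldots,x_t)$ for which $u$ is adjacent to $(a_0,\ldots,a_{t-1},\bar x)$ and $v$ is adjacent to $(b_2,\ldots,b_{t+1},\bar x)$. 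Concatenating those two edges with the middle segment yields a path of length $1+t+1 = t+2$.

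The next step is to translate the two adjacency definitions into a per-coordinate requirement on $\bar x$. Unpacking the $L_0$-$L_1$ and $L_{t+1}$-$L_{t+2}$ edge rules, for each $\ell \in \{0,\ldots,t\}$ the coordinate $x_\ell$ must be a $1$-entry of every vector in the combined list $a_0,a_1,\ldots,a_{t-\ell}$ together with $b_{t+1-\ell},\ldots,b_{t+1}$. The two index ranges $\{0,\ldots,t-\ell\}$ and $\{t+1-\ell,\ldots,t+1\}$ are disjoint and consecutive, so this list is a $(t+2)$-tuple containing exactly one representative from each of $W_0,W_1,\ldots,W_{t+1}$.

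I would then close the argument by invoking the $k$-OV hypothesis with $k=t+2$. If no valid $x_\ell$ existed for some $\ell$, then $\sum_c a_0[c]\, a_1[c] \cdots a_{t-\ell}[c] \cdot b_{t+1-\ell}[c] \cdots b_{t+1}[c] = 0$, exhibiting a $(t+2)$-OV solution and contradicting the assumption. Hence a valid $x_\ell$ can be chosen independently for each $\ell$, producing the desired $\bar x$ and completing the upper bound. The only real obstacle is index bookkeeping: one must verify carefully that the two ranges of $a$-indices and $b$-indices together cover $\{0,1,\ldots,t+1\}$ exactly once, so that the OV hypothesis genuinely applies. Once that is checked the argument is essentially immediate.
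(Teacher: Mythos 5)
Your proposal is correct and matches the paper's proof: both choose, for each $\ell\in\{0,\ldots,t\}$, a coordinate $x_\ell$ in which $a_0,\ldots,a_{t-\ell},b_{t+1-\ell},\ldots,b_{t+1}$ (one vector from each $W_i$) are all $1$ — which exists precisely because there is no $(t+2)$-OV solution — and then concatenate the edge from $(a_0,\ldots,a_t)$ into $L_1$, the length-$t$ middle path from Claim~\ref{claim:important}, and the edge from $L_{t+1}$ to $(b_1,\ldots,b_{t+1})$, with the lower bound coming from the layering. Your index bookkeeping is exactly the verification the paper implicitly relies on, so nothing is missing.
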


\begin{proof}
If the $(t+2)$-OV instance has no solution, then for every $c_0\in W_0,c_1\in W_1,\ldots,c_{t+1}\in W_{t+1}$, there is some coordinate $x$ such that $c_0[x]=c_1[x]=\ldots=c_{t+1}[x]=1$.

Now consider the graph and any $(a_0,\ldots,a_t)\in L_0$, $(b_1,\ldots,b_{t+1})\in L_{t+2}$.
For every $j\in \{0,\ldots,t\}$, let $x_j$ be a coordinate so that $a_0,\ldots,a_{t-j},\allowbreak b_{t-j+1},\ldots,b_{t+1}$ are all $1$ in $x_j$. Let $\bar{x}=(x_0,\ldots,x_{t})$.

By construction, $(a_0,\ldots,a_t)$ has an edge to $(a_0,\ldots,a_{t-1},\bar{x})$ and $(b_2,\ldots,b_{t+1},\bar{x})$ has an edge to $(b_1,\ldots,b_{t+1})$. Also, by Claim~\ref{claim:important}, $(a_0,\ldots,a_{t-1},\bar{x})$ has a path of length $t$ to $(b_2,\ldots,b_{t+1},\bar{x})$.
 
This shows that $D_{S,T}\leq t+2$; equality follows because the graph is layered.
\end{proof}

Now we prove the guarantee for the case when an orthogonal tuple exists.
\begin{lemma}[Property 4 of Theorem~\ref{kOV}]\label{lem:above}
If there exist $a_0\in W_0,\ldots,a_{t+1}\in W_{t+1}$ that are orthogonal, then $D_{S,T}\geq 3t+4$.
\end{lemma}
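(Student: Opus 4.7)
The plan is to argue by contradiction: suppose there is a path $P$ from $\alpha=(a_0,\ldots,a_t)$ to $\beta=(a_1,\ldots,a_{t+1})$ of length $\ell<3t+4$, and derive a violation of the orthogonality of $(a_0,\ldots,a_{t+1})$. Because the graph is layered with $t+3$ layers and edges only between consecutive layers, $\ell$ has the same parity as $t+2$ and satisfies $\ell\ge t+2$; so it suffices to rule out $\ell\in\{t+2,t+4,\ldots,3t+2\}$.

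I would first warm up on the direct-path case $\ell=t+2$, which isolates the key orthogonality mechanism. Tracing through the edge rules on a straight $L_0\to L_1\to\cdots\to L_{t+2}$ path, the intermediate vectors must satisfy $c_i=a_i$ for $i=2,\ldots,t+1$ in order to terminate at $\beta$, and the two boundary-edge constraints then combine at every coordinate $x_r$ of the unique $\bar{x}$ used to force $a_j[x_r]=1$ for every $j\in\{0,\ldots,t-r\}\cup\{t+1-r,\ldots,t+1\}=\{0,\ldots,t+1\}$, contradicting orthogonality.

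For longer paths, the crucial structural fact is that $\bar{x}$ is preserved by every middle-layer edge $L_i\leftrightarrow L_{i+1}$ with $1\le i\le t$, so $\bar{x}$ can only change when the path revisits $L_0$ or $L_{t+2}$. I would focus on the last $L_0$-visit of $P$, call it $u=(u_0,\ldots,u_t)$, and the $\bar{x}^{(u)}$ carried on the next middle segment after $u$. By tracking how the $a_j$'s migrate through the subpath $u\rightsquigarrow\beta$---using that the subpath must terminate at $\beta=(a_1,\ldots,a_{t+1})$, so retracing the boundary-edge definitions back through any $L_{t+2}$-excursions pins the relevant slots of the intermediate $L_{t+2}$-vertices to the corresponding $a_j$'s---I would show that at coordinate $x^{(u)}_t$ the combined constraints force $u_0[x^{(u)}_t]=1$ together with $a_j[x^{(u)}_t]=1$ for $j=1,\ldots,t+1$. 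Avoiding an orthogonality violation then demands $u_0\neq a_0$.

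Finally I would lower-bound the length required to achieve $u_0\neq a_0$: starting at $\alpha=(a_0,\ldots,a_t)$, modifying the $W_0$-slot of the current $L_0$-vertex requires the path to reach a layer in which $W_0$ is absent, namely $L_{t+1}$ or $L_{t+2}$, and return to $L_0$. By the layered structure any such subpath $\alpha\rightsquigarrow u$ has length at least $2(t+1)$, and the remaining subpath $u\rightsquigarrow\beta$ from $L_0$ to $L_{t+2}$ has length at least $t+2$, giving $\ell\ge 2(t+1)+(t+2)=3t+4$ and contradicting $\ell<3t+4$. The main obstacle I anticipate is the third-paragraph propagation step: precisely justifying how the $u$-side constraint at coordinate $x^{(u)}_t$ is ``handed off'' through successive $L_{t+2}$-excursions without being diluted. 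Exploiting that an $L_{t+2}$-excursion of depth $d$ modifies only the first $d$ slots of the resulting $L_{t+2}$-vertex, the relevant $a_j$-constraints at the large-index slots remain invariant; one chains inductively from $\beta$ back to the first $L_{t+2}$-visit after $u$ to keep the orthogonality-critical forcing intact, and this bookkeeping is the delicate part.
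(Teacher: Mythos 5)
There is a genuine gap, and it sits exactly where you flagged the ``delicate part.'' Your dichotomy is: either the last $L_0$-visit $u$ still has $u_0=a_0$, in which case you claim the crossing forces $a_j[x^{(u)}_t]=1$ for \emph{all} $j=1,\ldots,t+1$ (contradiction at coordinate $x^{(u)}_t$), or $u_0\neq a_0$, which you charge $2(t+1)+(t+2)$ edges. The first branch is not true as stated. The coordinate constraint at $x^{(u)}_t$ coming from the $L_{t+1}\to L_{t+2}$ edge applies to the components of the \emph{first} $L_{t+2}$-vertex reached after $u$, call it $\beta^{(1)}$, not to the components of $\beta$. Your own observation that an $L_{t+2}$-excursion of depth $d$ only pins the slots of index $>d$ works against you: chaining back from $\beta$ pins only the \emph{high}-index slots of $\beta^{(1)}$ to the $a_j$'s, while the low-index slots are exactly the ones that may differ; and the return edges of later excursions carry fresh coordinate tuples $\bar{x}'$, so they impose constraints at $x'_t$, not at $x^{(u)}_t$. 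Concretely, take $u=\alpha$ and a path of length $t+4$: a direct crossing to $\beta^{(1)}=(c_1,a_2,\ldots,a_{t+1})$ with $c_1\neq a_1$, followed by a depth-$1$ excursion to $L_{t+1}$ and back that swaps $c_1$ for $a_1$. Here $u_0=a_0$, your length bound does not apply, and the forcing at $x^{(u)}_t$ yields only $a_0,c_1,a_2,\ldots,a_{t+1}$ all $1$ at that coordinate -- no contradiction with orthogonality of $(a_0,\ldots,a_{t+1})$. So neither branch of your argument excludes this path, even though it is far shorter than $3t+4$ (for $t\ge 1$).

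Such paths are in fact impossible, but seeing this requires the trade-off your accounting omits: you must charge the $T$-side weaving as well, and correspondingly you cannot insist on the single coordinate $x_t$. This is what the paper's proof does. Writing $i$ for the deepest middle layer touched before the last $L_0$-visit $\alpha^*$ and $j'=t+2-j$ for the symmetric depth on the $T$-side after the first $L_{t+2}$-visit $\beta^*$, the length is at least $(t+2)+2i+2j'$, so a path of length at most $3t+2$ forces $i+j'\le t$. Then $\alpha^*$ retains the prefix $a_0,\ldots,a_{q}$ and $\beta^*$ already carries the suffix $a_{q+1},\ldots,a_{t+1}$ for a common index $q$ with $j'\le q\le t-i$, and the contradiction is extracted at the \emph{interpolating} coordinate $x_{t-q}$ of the crossing's $\bar{x}$ (in the example above, $q=1$ and coordinate $x_{t-1}$ does the job). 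Your argument corresponds to the two extreme choices $q=t$ (no $T$-side weaving) and ``$W_0$ fully forgotten,'' and misses all intermediate cases; to repair it you would essentially have to reintroduce this balancing of $i$ against $j'$ and the coordinate $x_{t-q}$, i.e., the paper's argument.
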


To prove the lemma, we will actually prove a more general claim: Property 5 of Theorem~\ref{kOV}.

\begin{claim}[Property 5 of Theorem~\ref{kOV}]
Suppose that $a_0\in W_0,\ldots,a_{t+1}\in W_{t+1}$ are orthogonal. Let $s$ be such that $0\leq s\leq t$.

Let $b_{t-s+j}\in W_{t-s+j}$ for all $j\in [1,\ldots,s]$ be some other vectors, potentially different from $a_{t-s+j}$.  
Consider $\alpha=(a_0,a_1,\ldots,a_{t-s},\allowbreak b_{t-s+1},\ldots,b_{t})\in L_0$ and $\beta=(a_{1},\ldots,a_{t+1})\in L_{t+2}$.
Then the distance between $\alpha$ and $\beta$ is at least $3t-2s+4$.

Symmetrically, let $c_{j}\in W_{j}$ for all $j\in [1,\ldots,s]$ be some other vectors, potentially different from $a_{j}$.  
Consider $\alpha=(a_0,a_1,\ldots,a_{t})\in L_0$ and $\beta=(c_{1},\ldots,c_s,a_{s+1},\dots, a_{t+1})\in L_{t+2}$.
Then the distance between $\alpha$ and $\beta$ is at least $3t-2s+4$.

%
\label{claim:veryimportant}
\end{claim}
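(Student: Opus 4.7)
The plan is to prove Claim~\ref{claim:veryimportant} by contradiction: assume a path $P$ from $\alpha$ to $\beta$ of length $\ell < 3t-2s+4$ exists, and derive a coordinate at which $a_0, a_1, \ldots, a_{t+1}$ are simultaneously $1$, contradicting orthogonality. I will focus on the first (left-asymmetric) subclaim; the second follows by the reverse symmetry of the construction, which identifies $L_i \leftrightarrow L_{t+2-i}$ and swaps the roles of the $a_j$'s with $a_{t+1-j}$'s. Note Lemma~\ref{lem:above} is the special case $s=0$ of this statement, so the proposed argument simultaneously establishes Lemma~\ref{lem:above}.

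The key structural fact is that each edge in $G$ goes between adjacent layers, so writing $\ell = (t+2)+2b$ where $b$ is the number of backward edges, the assumption becomes $b \leq t-s$. Decompose $P$ into maximal forward runs $R_1, R_2, \ldots$ separated by maximal backward runs. Because the middle-layer edges $L_i \leftrightarrow L_{i+1}$ preserve the coordinate tuple $\bar{x}$ attached to a vertex, a forward run that starts in $L_0$ (resp.\ ends in $L_{t+2}$) is committed to a single tuple $\bar{x}^{(j)} = (x_0^{(j)}, \ldots, x_t^{(j)})$. The boundary edges $L_0 \leftrightarrow L_1$ and $L_{t+1} \leftrightarrow L_{t+2}$ impose the constraints that $a_j[x_0^{(\cdot)}, \ldots, x_{t-j}^{(\cdot)}] = 1$ on the left side (for the endpoints of $\alpha$-adjacent runs) and $b_j[x_{t+1-j}^{(\cdot)}, \ldots, x_t^{(\cdot)}] = 1$ on the right side.

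The crux is a counting/pigeonhole argument. Each backward step corresponds to ``paying'' to enter a new forward run with possibly a fresh $\bar{x}$ or to change one vector-slot of the middle-layer vertex. With $b \leq t-s$, there are at most $t-s+1$ distinct $\bar{x}$ tuples used by forward runs joining the boundary layers. I will charge each of the $t+2$ indices $0,1,\ldots,t+1$ (one per vector $a_j$) to the specific $\bar{x}$ tuple and coordinate position $x_\cdot^{(\cdot)}$ that the path ``uses'' to route $a_j$ (or, for the $s$ replaced positions, a $b_{\cdot}$ or an intermediate vector from $W_j$) from one side to the other. Because the slot positions constrained by $\alpha$ and the slot positions constrained by $\beta$ overlap in all indices $1, \ldots, t-s$, and the number of available $\bar{x}$'s and positions is too small, some coordinate will be forced to satisfy $a_j[x] = 1$ for every $j \in \{0, 1, \ldots, t+1\}$ simultaneously, contradicting the hypothesis that $a_0,\ldots,a_{t+1}$ are orthogonal.

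\textbf{Main obstacle.} The delicate part is handling paths that oscillate across the boundary layers, visiting $L_0$ or $L_{t+2}$ at intermediate vertices, since each such detour can ``refresh'' the coordinate tuple and also replace left-vectors by vectors not constrained by $\alpha$. The bookkeeping requires tracking, for each of the $t+2$ vector slots, precisely which forward run and which coordinate position $x_{\cdot}^{(\cdot)}$ is responsible for asserting that slot's vector is $1$ at that position. I expect the cleanest way to make this rigorous is to associate to each slot $j \in \{0,\ldots,t+1\}$ a pair (forward run index, coordinate index within $\bar{x}$) and argue that the total number of such pairs with fresh $\bar{x}$'s is at most $(t-s+1) \cdot (t+1)$ minus some overcounts, then use the orthogonality of $(a_0,\ldots,a_{t+1})$ — which gives a coordinate $x^\star$ such that at least one of these vectors is $0$ at $x^\star$ — together with the structure of the construction to force a contradiction whenever $b \leq t-s$.
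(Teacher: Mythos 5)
Your setup (the parity bound $\ell=(t+2)+2b$ and hence $b\leq t-s$; the fact that edges inside $M=L_1\cup\cdots\cup L_{t+1}$ preserve the tuple $\bar{x}$; the boundary constraints at $L_0$ and $L_{t+2}$) matches the paper's proof, but the heart of the argument is missing. To contradict orthogonality you must exhibit a \emph{single} coordinate of a \emph{single} $\bar{x}$ at which all of $a_0,\ldots,a_{t+1}$ are $1$; orthogonality only guarantees that each individual coordinate misses at least one $a_j$, so a count of pairs (forward run, coordinate position) against a budget like $(t-s+1)(t+1)$ cannot by itself force all $t+2$ constraints to land on the same coordinate. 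Your sketch asserts that ``some coordinate will be forced'' but never shows how the charging produces it, and you defer exactly this step (``I expect the cleanest way to make this rigorous is\ldots''), so the crucial argument is a plan rather than a proof. Moreover the quantitative frame is off: with length at most $3t+2-2s$ the path can afford only \emph{one} full crossing of $M$ (each crossing costs $t+2$, and weaving back would already cost at least $3t+6$), so there are not ``up to $t-s+1$ distinct $\bar{x}$ tuples used by crossings'' to pigeonhole over.

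The mechanism the paper uses, and that your plan lacks, is structural rather than counting-based: (i) since only one crossing of $M$ is affordable, $P$ decomposes into a prefix inside $L_0\cup M$ returning to $L_0$, one clean crossing, and a suffix inside $L_{t+2}\cup M$; (ii) if the prefix dips to depth $i$ and the suffix to depth $j'=t+2-j$, then $|P|\geq t+2+2(i+j')$, so $i+j'\leq t-s$; (iii) consequently the last $L_0$-vertex $\alpha^*$ still begins with the prefix $a_0,\ldots,a_{t-\max\{s,i\}}$ and the first $L_{t+2}$-vertex $\beta^*$ still ends with the suffix $a_{j'+1},\ldots,a_{t+1}$, with $j'\leq t-\max\{s,i\}$; the crossing between them uses one fixed $\bar{x}$ and has length exactly $t+2$ (Claim~\ref{claim:important}), and its two boundary edges then force $a_p[x_{t-j'}]=1$ for every $p\in\{0,\ldots,t+1\}$, the desired contradiction. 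Without the depth bound (ii) and the resulting prefix/suffix retention (iii) --- which is precisely what controls how many slots of $\alpha$ and $\beta$ can be ``forgotten'' before the crossing --- your bookkeeping over runs and coordinates does not close the argument, so as written the proposal has a genuine gap at its central step.
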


If the claim is true, then using $s=0$ we get that the Diameter is at least $3t+4$ so Lemma~\ref{lem:above} is true. The claim for $s>0$ is useful for the rest of our constructions.

\begin{proof}
We will show that the distance between
$\alpha=(a_0,a_1,\ldots,\allowbreak a_{t-s},\allowbreak b_{t-s+1},\ldots,b_{t})\in L_0$ and $\beta=(a_{1},\ldots,a_{t+1})\in L_{t+2}$ is strictly more than $3t+2-2s$. Because the graph is layered and hence bipartite and $t+2\equiv 3t+2\mod 2$, the distance must be at least $3t-2s+4$.

Let's assume for contradiction that the shortest path $P$ between $\alpha$ and $\beta$ is of length $\leq 3t+2-2s$. First let's look at any subpath $P'$ of $P$ strictly within $M=L_1\cup\ldots\cup L_{t+1}$.
All vertices on $P'$ must share the same $\bar{x}$. Furthermore, if $P'$ starts with a vertex of $L_1$ and ends with a vertex of $L_{t+1}$, then by Claim~\ref{claim:important}, $P'$ must be of length exactly $t$. Next, notice that 
$P$ cannot go from $L_0$ to $L_{t+2}$ and then back to $L_0$. This is because it needs to end up in $L_{t+2}$ and any time it crosses over $M$, it would need to pay a distance of $t+2$, so $P$ would have to have length at least $3t+6>3t+2-2s$. Hence, $P$ must be of the following form: a path from $\alpha$ through $L_0\cup M$ back to $L_0$ (possibly containing only $\alpha$), followed by a path crossing $M$ to reach $L_{t+2}$, followed by a path through $L_{t+2}\cup M$ to $L_{t+2}$ (possibly empty).

We will show that if $P$ has length $\leq 3t+2-2s$ then $P$ must contain a length $t+2$ subpath $Q$ between a vertex $(a_0,\ldots,a_{q},w_{q+1},\ldots, w_t)\in L_0$, for some choices of the $w$'s and some $q\leq t-s$, and a vertex $(v_1,\ldots,v_q,a_{q+1},\ldots,a_{t+1})\in L_{t+2}$, for some choices of $v$'s. That is, this path traverses $M$ without weaving, by following $(a_0,\ldots,a_{q},w_{q+1},\ldots, w_{t-1},\bar{x})\in L_1$, $(a_0, \ldots, a_{q},\allowbreak w_{q+1},\ldots, w_{t-2},\allowbreak a_{t+1},\bar{x})\in L_2$, $\ldots$, $(v_2, \ldots, v_q, \allowbreak a_{q+1}, \ldots, a_{t-s}, \ldots, a_{t+1}, \bar{x})\in L_{t+1}$. Suppose we show that such a subpath exists. Then by the construction of our graph we have that for every $i\in \{0,\ldots,q\}$, $a_i[x_j]=1$ for all $j\in \{0,\ldots,t-i\}$, and that for all $i\in \{q+1,\ldots,t+1\}$, $a_i[x_j]=1$ for all $j\in\{t+1-i,\ldots,t\}$. That is, for all $i$, $a_i[x_{t-q}]=1$, and we get a contradiction since the $a_i$ were supposed to be orthogonal.


Now let $\alpha^*$ be the last vertex from $L_0$ on $P$ and let $\beta^*$ be the first vertex of $L_{t+2}$ of $P$. Let $a^*\in L_1$ be the vertex right after $\alpha^*$ and let $b^*\in L_{t+1}$ be the vertex right before $\beta^*$. See Figure~\ref{fig:case3}.

\begin{figure*}
  \centering
    \includegraphics[width=0.6\textwidth]{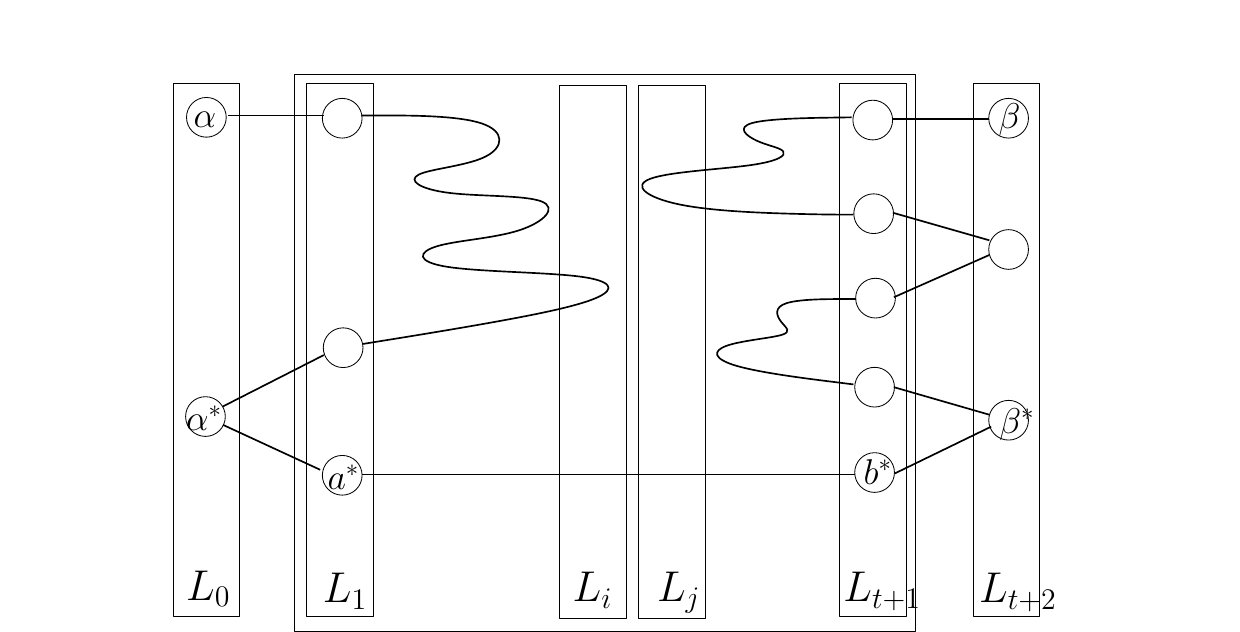}
    \caption{Here $P$ contains at least $2$ vertices in $L_0$ and at least $2$ in $L_{t+2}$, and $s=0$.}
	\label{fig:case3}
\end{figure*}

Since the subpath of $P$ between $a^*$ and $b^*$ is within $M$, it must share the same $\bar{x}$, and it must have length exactly $t$ by Claim~\ref{claim:important}. We will show that the subpath $Q$ that we are looking for is the subpath of $P$ between $\alpha^*$ and $\beta^*$. Its length is exactly what we want: $t+2$. It remains to show that for some $q\leq t-s$ and some choices of $w$'s and $v$'s, $\alpha^*=(a_0,\ldots,a_{q},w_{q+1},\ldots, w_t)$  and $\beta^*=(v_1,\ldots,v_q,a_{q+1},\ldots,a_{t+1})$. 

Consider the path $P_1$ between $\alpha=(a_0,a_1,\ldots,a_{t-s},b_{t-s+1},\ldots,b_{t})$ and $\alpha^*$ and the path $P_2$ between $\beta=(a_1,\ldots,a_{t+1})$ and $\beta^*$.
Let $L_i$ be the layer in $M$ with largest $i$ that $P_1$ hits and let $L_j$ be the layer in $M$ with smallest $j$ that $P_2$ hits. For convenience, let us define $j'=t+2-j$.
The length of $P_1$ is then at least $2i$ and the length of $P_2$ is at least $2j'$. 
The length $|P|$ of $P$ equals $t+2+|P_1|+|P_2|\geq t+2+2i+2j'=t+2+2(i+j')$. Since we have assumed that $|P|\leq 3t+2-2s$, we must have that $t+2+2(i+j')\leq 3t+2-2s$ and hence $i+j'\leq t-s$.
Now, since $P_1$ goes at most to $L_i$, then from getting from $\alpha$ to $\alpha^*$, at most the last $i$ elements of $(a_0,a_1,\ldots,a_{t-s},b_{t-s+1},\ldots,b_{t})$ can have been ``forgotten''. Hence, $\alpha^*=(a_0,\ldots,a_{t-\max\{s,i\}},\allowbreak b_{t-\max\{s,i\}+1},\ldots,b_{t-i},\allowbreak w_{t-i+1},\allowbreak \ldots,w_{t})$ for some $w$'s. (If $i\geq s$, the $b$'s do not appear.)

Similarly, between $\beta$ and $\beta^*$, at most the first $j'$ elements of $\beta$ can have been forgotten.
Thus, we have that $\beta^*=(v_1,\ldots,v_{j'},\allowbreak a_{j'+1},\ldots,a_{t+1})$ for some $v$'s. Now, since $i+j'\leq t-s$, we must have that $j'\leq t-s-i\leq t-\max\{s,i\}$, and hence the path between $\alpha^*$ and $\beta^*$ is the path $Q$ we are searching for. 
\end{proof}

\subsection{Equivalence between Diameter and $S$-$T$ Diameter}\label{subsec:diameq}
Here we will prove that when it comes to exact computation, $S$-$T$-Diameter and Diameter in weighted graphs are equivalent. The proof for directed graphs is much simpler, so we focus on the equivalence for undirected graphs. 
Also, it is clear that if one can solve $S$-$T$ Diameter, one can also solve Diameter in the same running time since one can simply set $S=T=V$. 
We prove:

\begin{theorem}\label{thm:stdiamequiv}
Suppose that there is a $T(n,m)$ time algorithm that can compute the Diameter of an $n$ vertex, $m$ edge graph with nonnegative integer edge weights. Then, the $S$-$T$ Diameter of any $n$ vertex $m$ edge graph with nonnegative integer edge weights can be computed in $T(O(n),O(m))$ time.
\end{theorem}

\begin{proof}
Let $G=(V,E),S,T$ be the $S$-$T$ Diameter instance; let $w:E\rightarrow \{0,\ldots,M\}$ be the edge weights.
First, we can always assume that $M$ is even: if it is not, multiply all edge weights by $2$; all distances (and hence also the $S$-$T$ Diameter) double. Let $S=\{s_1,\ldots,s_k\}$ and $T=\{t_1,\ldots,t_{\ell}\}$.

Now, let $W=Mn$. First add $|S|=k$ new vertices $S'=\{v_1,\ldots,v_k\}$. For each $i\in \{1,\ldots,k\}$ add a new edge $(v_i,s_i)$ of weight $W$. Let $G_S$ be this new graph. Let's consider the Diameter of $G_S$. For every pair of vertices $u,v\notin S'$, the distance is the same as in $G$. For $v_i\in S'$ and $x\notin S'$, the distance is $W+d_G(s_i,x)\leq W+M(n-1)<2W$.
For $v_i,v_j\in S'$, the distance is $2W+d(s_i,s_j)$. Hence the Diameter of $G_S$ is $2W+\max_{s_i,s_j\in S} d(s_i,s_j)$. 
 Hence by computing the Diameter of $G_S$, we can compute $D_S=\max_{s_i,s_j\in S} d(s_i,s_j)$.

We can create a similar graph $G_T$ whose Diameter will allow us to compute $D_T=\max_{t_i,t_j\in S} d(t_i,t_j)$.

After this, let's create a graph $G_{S,T}$ as follows (note that $G_{S,T}$ is not yet the final construction).
Add new vertices $S'=\{v_1,\ldots,v_k\}$ and $T'=\{u_1,\ldots,u_\ell\}$. For each $i\in \{1,\ldots,k\}$ add a new edge $(v_i,s_i)$ of weight $W$. For each $j\in \{1,\ldots,\ell\}$ add a new edge $(u_j,t_j)$ of weight $W$.
With a similar argument as above, the Diameter $D'$ of $G_{S,T}$ is $D'=2W+\max_{u,v\in S\cup T} d_G(u,v)$.

Let's assume without loss of generality that $D_S\geq D_T$.
If $D'> D_S$, then $D'=2W+\max_{u\in S,v\in T} d_G(u,v)$, and we can compute the $S$-$T$ Diameter of $G$ by subtracting $2W$.

Now suppose that we get $D'\leq D_S$; we must have then actually gotten $D'=D_S$. The $S$-$T$ Diameter of $G$ might be strictly smaller than $D_S$.
We add two new vertices $x$ and $y$ to $G_{S,T}$. We add an edge $(x,y)$ of weight $2W$, edges $(x,v_i)$ for every $v_i\in S'$ of weight $D_S/2$ and (symmetrically) edges $(y,u_j)$ for every $u_j\in T'$ of weight $D_S/2$.

Let $G'$ be the resulting graph. See Figure~\ref{fig:equiv}.

\begin{figure*}[h]
\centering
\includegraphics[width=0.7\textwidth]{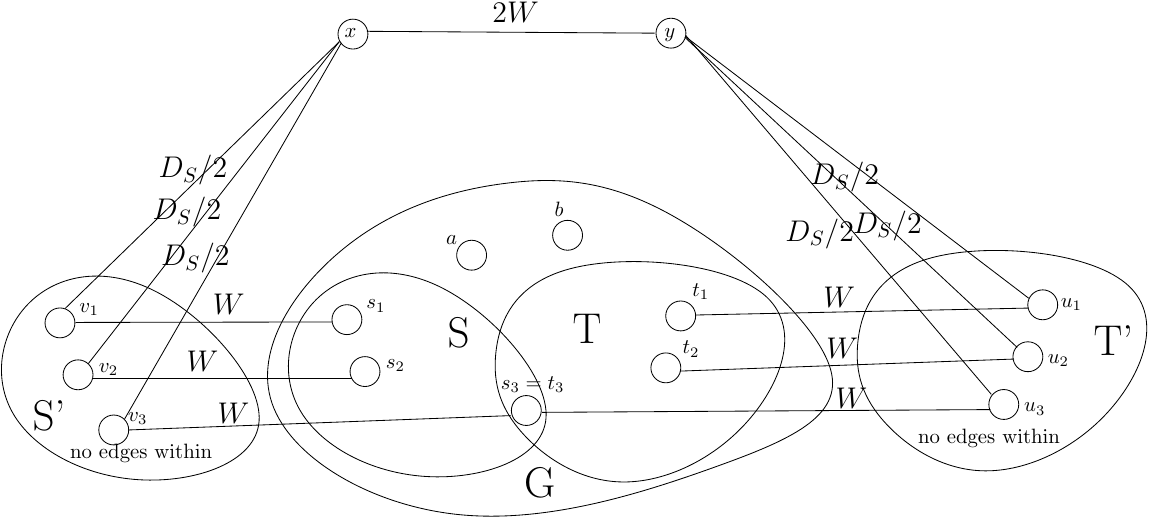}
\caption{A depiction of the construction of $G'$.}
\label{fig:equiv}
\end{figure*}

Let us consider the distances in this new graph $G'$.
\begin{enumerate}

\item For every $a,b\notin S'\cup T'\cup \{x,y\}$, $d(a,b)=d_G(a,b)<W$ as any path not in $G$ would have to use an edge of weight $W>d(a,b)$.

\item For every $b\notin S'\cup T'\cup \{x,y\}$, $d(x,b)=W+D_S/2 +\min_{a\in S} d_G(a,b)\leq 2W+D_S/2$. Similarly, $d(y,b)=W+D_S/2 +\min_{a\in T} d_G(a,b)\leq 2W+D_S/2$.

\item For every $v_i\in S'$, $d(x,v_i)=D_S/2$, and $d(y,v_i)=2W+D_S/2$. For every $u_i\in T'$, $d(y,u_i)=D_S/2$, and $d(x,u_i)=2W+D_S/2$.


\item For every $v_i,v_j\in S'$, $d(v_i,v_j)=D_S$. For every $u_i,u_j\in T'$, $d(u_i,u_j)=D_S$.


\item For every $v_i\in S'$ and $b\notin S'\cup T'\cup \{x,y\}$, $d(v_i,b)\leq W+d(s_i,b)\leq 2W.$ For every $u_i\in T'$ and $b\notin S'\cup T'\cup \{x,y\}$, $d(u_i,b)\leq W+d(t_i,b)\leq 2W.$

\item For every $v_i\in S'$ and $u_j\in T'$, $d(v_i,u_j)$ is the minimum of $D_S+2W, D_S+2W+\min_{s\in S} d_G(s,t_j), D_S+2W+\min_{t\in t} d_G(t,s_i)$ and $2W+d_G(s_i,t_j)$. The middle two terms are $\geq D_S+2W$, and hence $d(v_i,u_j)=2W+\min\{D_S,d_G(s_i,t_j)\}$.
\end{enumerate}

Consider $s_i\in S, t_j\in T$ that are the end points of the $S$-$T$ Diameter $D$ in $G$. Then $D=d_G(s_i,t_j)$.
Now, we have from before that $D\leq D_S$, as otherwise we have computed $D$ already.
Hence in $G'$, the distance $d(u_i,v_j)$ equals $2W+\min\{D_S,D\}=2W+D$ 

We note that for any $s_i,s_j\in S$, and any $t\in T$, $d_G(s_i,s_j)\leq d_G(s_i,t)+d_G(t,s_j)\leq 2\max_{s\in S,t\in T} d_G(s,t) = 2D$. Thus, $D\geq D_S/2$.
The distances in cases (1) to (5) are all $\leq 2W+D_S/2\leq 2W+D$. 
Hence the Diameter of $G'$ is actually exactly $2W+D$.
\end{proof}

	\section{Lower bounds for Eccentricities}\label{sec:ecchardness}

\subsection{Undirected graphs}

In this section we will prove the following theorem, which implies Theorem~\ref{thm:95eccintro} and the lower bound part of Theorem~\ref{thm:2ecc}.
\begin{theorem}\label{thm:undirecchard}
Let $k\geq 2$. Under the $k$-OV conjecture, for any $\delta>0$, any $(\frac{4k-3}{2k-1}-\delta)$-approximation algorithm for all Eccentricities in an unweighted undirected graph with $n$ vertices and $O(n)$ edges, requires at least $n^{1+1/(k-1)-o(1)}$ time on a $O(\log n)$-bit word-RAM.
\end{theorem}

\begin{proof}
Let's start with the $S$-$T$-diameter construction for $k$ obtained from a given $k$-OV instance.
We have a graph on $O(N^{k-1}d^{k-2})$ vertices and edges with the following properties: 

(1) Suppose that the $k$-OV instance has no $k$-OV solution. Then for every $s\in S,t\in T$, $d(s,t)=k$. Also, for every $s\in S$ and $u\notin S\cup T$, $d(s,u)\leq (k-1)+k=2k-1$ since we can take a $\leq (k-1)$ length path from $u$ to some vertex $t\in T$ and since $d(s,t)=k$.

(2) If there is a $k$-OV solution, there are two vertices $s\in S,t\in T$ with $d(s,t)\geq 3k-2$.

We modify the construction as follows. For every $s\in S$, we create an undirected path on $k-2$ new vertices $s_1\rightarrow s_2\rightarrow\ldots\rightarrow s_{k-2}$ and add an edge $(s,s_1)$; let's call $s$ by $s_0$. Now, the distance between $s_0$ and $s_i$ is $i$.
Add a new vertex $y$ and create edges $(s_{k-2},y)$ for every $s\in S$. Now, $d(y,s_0)=k-1$ for every $s\in S$, and also for every $s,s'\in S$ and all $i,j\in \{0,\ldots,k-2\}$, we have that $d(s_i,s'_j)\leq 2k-2$.

Now, we also attach paths to the vertices in $T$. In particular, for each $t$, add an undirected path $t\rightarrow t_1\rightarrow\ldots \rightarrow t_{k-1}$. See Figure~\ref{fig:undirecclb}.

\begin{figure*}[h]
  \centering
    \includegraphics[width=0.7\textwidth]{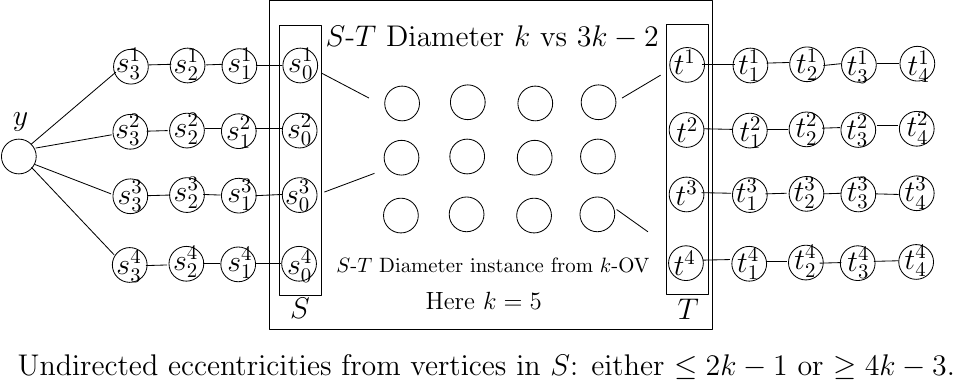}
    \caption{The undirected Eccentricities lower bound for $k=5$.}
	\label{fig:undirecclb}
\end{figure*}

The distance between any $s\in S$ and any $t_i$ is $i+d(s,t)$. Hence when there is no $k$-OV solution, the Eccentricities of all $s_0$ for $s\in S$ are $\leq k+(k-1)=2k-1$. 

For every $s\in S$, $t\in T$, there is now potentially a path between them through $y$ that was not present in the original $S$-$T$ Diameter construction. This path goes from $s$ to $y$ in $k-1$ steps, then to some other $s'$ in $k-1$ steps and then to $t$ using $\geq k$ steps. The length is $\geq 2(k-1)+k=3k-2$. Thus, when there is a $k$-OV solution, there is still a pair $s,t$ at distance at least $3k-2$. Then, due to the paths attached to $T$, we have $d(s_0,t_{k-1})\geq (3k-2)+(k-1)=4k-3$.\end{proof}

	\subsection{Directed graphs}

In this section we prove the following theorem, which implies the lower bound part of Theorem~\ref{thm:otherecc}.

\begin{theorem}\label{thm:direcchard}
Under the $2$-OV conjecture, for any $\delta>0$, any $(2-\delta)$-approximation algorithm for all Eccentricities in an $n$ vertex, $O(n)$-edge directed unweighted graph, requires $n^{2-o(1)}$ time on a $O(\log n)$-bit word-RAM.
\end{theorem}

\begin{proof}
Suppose we are given an instance of $2$-OV: two sets of vectors $U,V$ over $\{0,1\}^d$, each of size $N$, and we want to know whether there are $u\in U,v\in V$ with $u\cdot v=0$. 

Let $L\geq 1$ be any integer.
Let us create a directed unweighted graph $G$; an illustration can be found in Figure~\ref{fig:direcclb}. $G$ will have a vertex $u$ for every $u\in U$ and a vertex $c$ for every $c\in [d]$. Every $v\in V$ will be represented by a directed path $v_0\rightarrow v_1\rightarrow \ldots\rightarrow v_L$.

\begin{figure*}[h]
  \centering
    \includegraphics[width=0.6\textwidth]{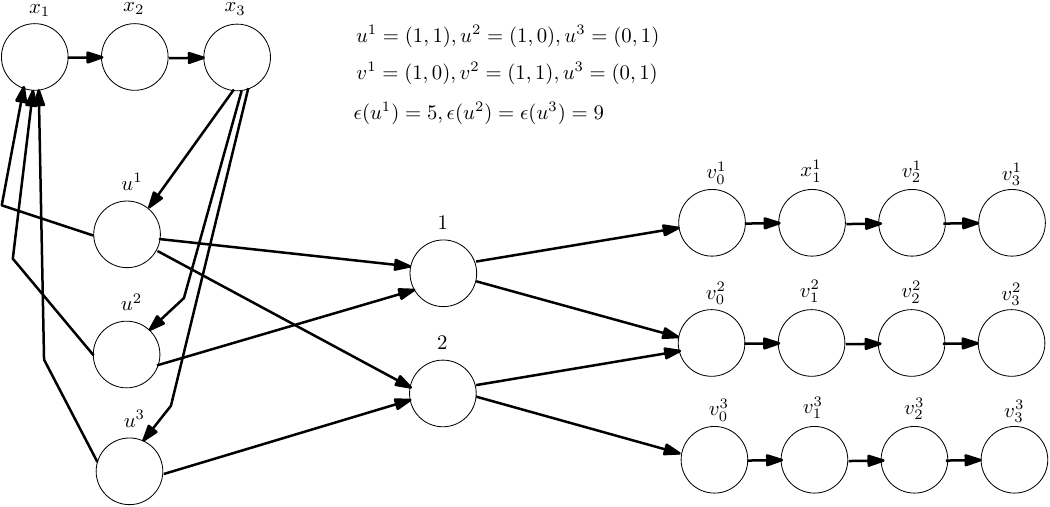}
    \caption{The directed Eccentricities lower bound.}\label{fig:direcclb}
\end{figure*}

In addition, there is a directed path $P_x$ on $L$ extra vertices, $x_1\rightarrow \ldots \rightarrow x_L$ so that every $u\in U$ has  directed edges $(u,x_1)$ and $(x_L,u)$. For every $u\in U$ and every $c$ for which $u[c]=1$, we add a directed edge $(u,c)$. For every $v$ and every $c$ for which $v[c]=1$, we add a directed edge $(c,v_0)$. Remove any $c$ that does not have at least one edge coming from $U$.

Let us consider the eccentricity of any vertex $u\in U$. First, for all $u'\in U$, $d(u,u')\leq L+1$ since one can go through the path $P_x$. For every $c\in [d]$, there is at least one edge coming from some $u'\in U$, and so one can reach $c$ from $u$ by first taking $P_x$ to $u'$ and then using the edge $(u',c)$. Hence, $d(u,c)\leq L+2$ for all $c\in [d]$. 

For any $v\in V$ and $i\in \{1,\ldots, L\}$, the distance $d(u,v_i)=i+d(u,v_0)$, and so we consider $d(u,v_0)$.
If there is a $c$ for which $u[c]=v[c]=1$, then $d(u,v_0)=2$, and hence for all $i$, $d(u,v_i)\leq L+2$.
If no such $c$ exists and so if $u$ and $v$ are orthogonal, the only way to reach $v_0$ is potentially via $P_x$ to some other $u'\in U$ which is at distance $2$ to $v_0$. Hence if $u$ and $v$ are orthogonal, $d(u,v_0)=L+3$, and hence $d(u,v_L)=2L+3$. Thus, we have that the eccentricity of $u$ is $L+2$ if it is not orthogonal to any vectors in $V$ and it is $\geq 2L+3$ if there is some $v$ that is orthogonal to $u$.

The number of vertices in the graph is $O(NL+d)$ and the number of edges is $O(NL+Nd)$. Suppose that there is a $(2-\eps)$-approximation algorithm for all Eccentricities in graphs with $O(m)$ vertices and edges running in $O(m^{2-\delta})$ time for some $\eps,\delta>0$. Then, we construct the above instance for $L=\lceil 1/\eps\rceil$ and run the algorithm on it. The approximation returned is at least as good as a $(2-1/L)$-approximation. Hence if the diameter is at least $2L+3$, the algorithm will return an estimate that is at least $L(2L+3)/(2L-1) > L+2$. Thus the algorithm can solve $2$-OV in time $O((NL+Nd)^{2-\delta}) = O(N^{2-\delta}d^{2-\delta})$, contradicting the $2$-OV conjecture.
\end{proof}

	\section{Diameter lower bounds}\label{sec:diam_lb}
For all of our constructions we begin with the $S$-$T$ diameter lower bound construction from Theorem~\ref{kOV}. Here, if the $k$-OV instance has no solution, $D_{S,T}\leq k$ and if the instance has a solution $D_{S,T}\geq 3k-2$. To adapt this construction to Diameter, we need to ensure that if the OV instance has no solution then \emph{all} pairs of vertices have small enough distance. We begin by augmenting the $S$-$T$ Diameter construction by adding a matching between $S$ and a new set $S'$ as well as a matching between $T$ and a new set $T'$. Without any further modifications, pairs of vertices $u,v\in S\cup S'$ (or $u,v\in T\cup T'$) could be far from one another. 
The challenge is to add extra gadgetry to make these pairs close for ``no" instances while maintaining that in ``yes" instances the distance between the diameter endpoints $s'\in S',t'\in T'$ is large. That is, for ``yes" instances, we want a shortest path between the diameter endpoints $s'$ and $t'$ to contain the vertex $s\in S$ matched to $s'$ and the vertex $t\in T$ matched to $t'$ so that we can use use the fact that $d(s,t)\geq 3k-2$. In other words, we do not want there to be a shortcut from $s'$ to some vertex in $S$ that allows us to use a path of length $k$ from $S$ to $T$. For example, we cannot simply create a vertex $x$ and connect it to all vertices in $S\cup S'$ because this would introduce shortcuts from $S'$ to $S$.

We will describe some intuition for the augmentations to the graph regarding 3-OV for simplicity. Recall that $s'\in S', t'\in T'$ are the endpoints of the diameter and let $t$ be the vertex matched to $t'$. To solve the problem outlined in the above paragraph, we observe that in the ``yes" case there are three types of vertices $s\in S$. (1) close: $d(s,t)=3$, (2) far: $d(s,t)\geq 7$ (property 4 of Theorem~\ref{kOV}), and (3) intermediate: $d(s,t)\geq 5$ (property 5 of Theorem~\ref{kOV}). For close $s$, we need $d(s',s)$ to be large so that there is no shortcut from $s'$ to $t'$ through $s$. For far $s$, it is acceptable if $d(s',s)$ is small because $d(s,t)$ is large enough to ensure that paths from $s'$ to $t'$ through $s$ are still long enough. For intermediate $s$, $d(s',s)$ cannot be small, but it also need not be large. To fulfill these specifications, we add a small clique (the graph is still sparse) and connect each of its vertices to only {\em some} of the vertices in $S$ and/or $S'$ according to the implications of property 5 of Theorem~\ref{kOV}. When $s$ is close, we ensure that $d(s',s)$ is large by requiring that a shortest path from $s'$ to $s$ goes from $s'$ to the clique, uses an edge inside of the clique, and then goes from the clique to $s$. When $s$ is intermediate, we ensure that $d(s',s)$ is not too small by requiring that a shortest path from $s'$ to $s$ goes from $s'$ to the clique and then from the clique to $s$ (without using an edge inside of the clique). These intermediate $s$ are important as they allow every vertex in the clique to have an edge to some vertex in $S$ and thus be close enough to the $T$ side of the graph in the ``no" case.

\subsection{$5$ vs $8$ unweighted undirected construction}

In this section we show that under the $3$-OV Hypothesis, any algorithm that can distinguish between Diameter $5$ and $8$ in sparse undirected unweighted graphs, requires $\Omega(n^{3/2-o(1)})$ time.

Theorem~\ref{kOV} gives us the following theorem.
\begin{theorem} \label{3OV_construction}
	Given a $3$-OV instance consisting of three sets $A,B,C \subseteq \{0,1\}^d$, $|A|=|B|=|C|=N$, we can in $O(N^2d^2)$ time construct an unweighted, undirected graph with $O(N^2+Nd^2)$ vertices and $O(N^2d^2)$ edges that satisfies the following properties.
	\begin{enumerate}
		\item The graph consists of $4$ layers of vertices $S,L_1,L_2,T$. The number of vertices in the sets is $|S|=|T|=N^2$ and $|L_1|,|L_2|\leq Nd^2$.
		\item $S$ consists of all tuples $(a,b)$ of vertices $a \in A$ and $b \in B$. Similarly, $T$ consists of all tuples $(b,c)$ of vertices $b \in B$ and $c \in C$.
		\item If the $3$-OV instance has no solution, then $d(u,v)=3$ for all $u \in S$ and $v \in T$.
		\item If the $3$-OV instance has a solution $a \in A, b \in B, c \in C$ with $a,b,c$ orthogonal, then $d((a,b)\in S,(b,c) \in T)\geq 7$. 
		\item If the $3$-OV instance has a solution $a \in A, b \in B, c \in C$ with $a,b,c$ orthogonal, then by setting $k=3,s=1$ in Property 5 of Theorem~\ref{kOV} we have: for any $b' \in B$ we have $d((a,b) \in S,(b',c) \in T)\geq 5$ and $d((a,b') \in S,(b,c) \in T)\geq 5$.
		\item For any vertex $u \in L_1$ there exists a vertex $s \in S$ that is adjacent to $u$. Similarly, for any vertex $v \in L_2$ there exists a vertex $t \in T$ that is adjacent to $v$. We can assume that this property holds because we can remove all vertices that do not satisfy this property from the graph and the resulting graph will still satisfy the other properties.
	\end{enumerate}
\end{theorem}

In the rest of the section we use Theorem~\ref{3OV_construction} to prove the following result, which implies Theorem~\ref{thm:32tight}.
\begin{theorem} \label{diameter_5_8}
	Given a $3$-OV instance, we can in $O(N^2d^2)$ time construct an unweighted, undirected graph with $O(N^2+Nd^2)$ vertices and $O(N^2d^2)$ edges that satisfies the following two properties.
	\begin{enumerate}
		\item If the $3$-OV instance has no solution, then for \emph{all} pairs of vertices $u$ and $v$ we have $d(u,v)\leq 5$.
		\item If the $3$-OV instance has a solution, then there exists a pair of vertices $u$ and $v$ such that $d(u,v)\geq 8$.
	\end{enumerate}
\end{theorem}

\paragraph{Construction of the graph}
We construct a graph with the required properties by starting with the graph from Thereom~\ref{3OV_construction} and adding more vertices and edges. 
Figure~\ref{fig:figure_5_8} illustrates the construction of the graph.
\begin{figure*}
  \centering
    \includegraphics[width=0.7\textwidth]{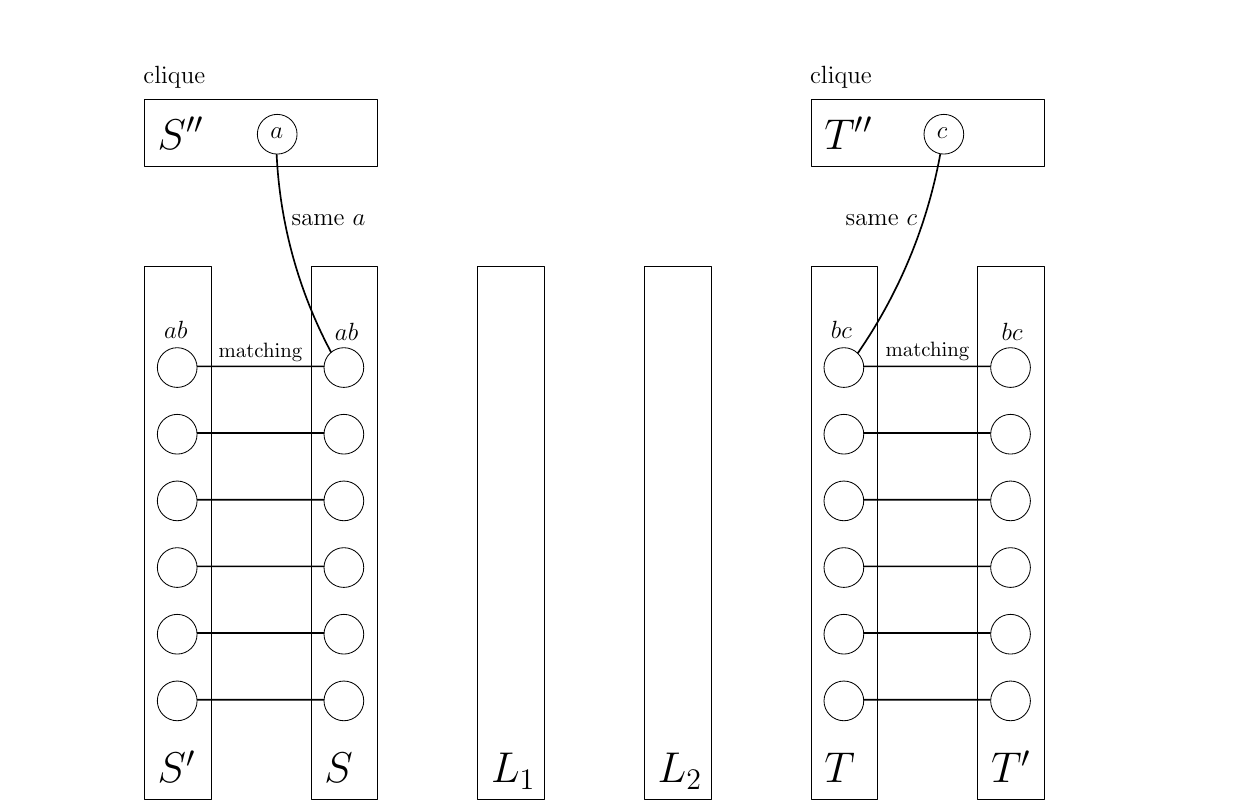}
    \caption{The illustration for the $5$ vs $8$ construction. The edges between sets $S,L_1,L_2$ and $T$ are not depicted. The edges between vertices in $S'$ and $S$ ($T$ and $T'$) form a matching. Vertices in $S''$ ($T''$) form a clique.}
	\label{fig:figure_5_8}
\end{figure*}
We start by adding a set $S'$ of $N^2$ vertices. $S'$ consists of all tuples $(a,b)$ of vertices $a \in A$ and $b \in B$. We connect every $(a,b) \in S'$ to its counterpart $(a,b) \in S$. Thus, there is a matching between the sets of vertices $S$ and $S'$. We also add another set $S''$ of $N$ vertices. $S''$ contains one vertex $a$ for every $a \in A$. For every pair of vertices from $S''$ we add an edge between the vertices. Thus, the $N$ vertices form a clique.
Furthermore, for every vertex $a \in S''$ we add an edge to $(a,b) \in S$ for all $b \in B$.
In total we added $N^2+N=O(N^2)$ vertices and $\binom{N}{2}+2N^2=O(N^2)$ edges. We do a similar construction for the set $T$ of vertices. We add a set $T'$ of $N^2$ vertices - one vertex for every tuple $(b,c)$ of vertices $b \in B$ and $c \in C$. We connect every $(b,c)\in T'$ to $(b,c) \in T$. Finally, we add a set $T''$ of $N$ vertices. $T''$ contains one vertex for every vector $c \in C$. For every pair of vertices from $T''$ we add an edge between the vertices. 
We connect every $c \in T''$ to $(b,c) \in T$ for all $b \in B$.
This finishes the construction of the graph. 
In the rest of the section we show that the construction satisfies the promised two properties.

\paragraph{Correctness of the construction} We need to consider two cases.
\paragraph{Case 1: the $3$-OV instance has no solution} In this case we want to show that for all pairs of vertices $u$ and $v$ we have $d(u,v)\leq 5$. We consider three subcases.
\paragraph{Case 1.1: $u \in S \cup S' \cup S'' \cup L_1$ and $v \in T \cup T' \cup T'' \cup L_2$} We observe that there exists $s \in S$ with $d(u,s)\leq 1$. Indeed, if $u \in S$, then $s=u$ works. If $u \in S' \cup S''$, then we are done by the construction. On the other hand, if $u \in L_1$, then there exists such an $s \in S$ by property 6 from Theorem~\ref{3OV_construction}. Similarly we can show that there exists $t \in T$ such that $d(v,t)\leq 1$. Finally, by property $3$ we have that $d(s,t)=3$. Thus, we can upper bound the distance between $u$ and $v$ by $d(u,v)\leq d(u,s)+d(s,t)+d(t,v)\leq 1+3+1=5$ as required.
\paragraph{Case 1.2: $u,v \in S \cup S' \cup S'' \cup L_1$} From the previous case we know that there are two vertices $s_1,s_2 \in S$ such that $d(u,s_1)\leq 1$ and $d(s_2,v)\leq 1$. To show that $d(u,v)\leq 5$ it is sufficient to show that $d(s_1,s_2)\leq 3$. This is indeed true since both vertices $s_1$ and $s_2$ are connected to some two vertices in $S''$ and every two vertices in $S''$ are at distance at most $1$ from each other.
\paragraph{Case 1.3: $u,v \in T \cup T' \cup T'' \cup L_2$} The case is analogous to the previous case.
\paragraph{Case 2: the $3$-OV instance has a solution} In this case we want to show that there is a pair of vertices $u,v$ with $d(u,v)\geq 8$. Let $a \in A, b \in B, c \in C$ be a solution to the $3$-OV instance. We claim that $d((a,b) \in S',(b,c) \in T')\geq 8$. Let $P$ be an optimal path between $u=((a,b) \in S')$ and $v=((b,c) \in T')$ that achieves the smallest distance. We want to show that $P$ uses at least $8$ edges. Let $t \in T$ be the first vertex from the set $T$ that is on path $P$. Let $s \in S$ be the last vertex on path $P$ that belongs to $S$ and precedes $t$ in $P$. We can easily check that, if $s \neq ((a,b) \in S)$, then $d(u,s)\geq 3$ and, similarly, if $t \neq ((b,c) \in T)$, then $d(t,v)\geq 3$. We consider three subcases.
\paragraph{Case 2.1: $s \neq ((a,b) \in S)$ and $t \neq ((b,c) \in T)$} Since $s$ and $t$ are separated by two layers of vertices, we must have $d(s,t)\geq 3$. Thus we get lower bound $d(u,v)\geq d(u,s)+d(s,t)+d(t,v)\geq 3+3+3=9>8$ as required.
\paragraph{Case 2.2: $s = ((a,b) \in S)$ and $t = ((b,c) \in T)$} In this case we use property 4 and conclude $d(u,v)\geq d(u,s)+d(s,t)+d(t,v)=1+d((a,b) \in S,(b,c) \in T)+1\geq 1+7+1=9>8$ as required.
\paragraph{Case 2.3: either $s = ((a,b) \in S)$ or $t = ((b,c) \in T)$ holds but not both}
W.l.o.g.\ $s \neq ((a,b) \in S)$ and $t = ((b,c) \in T)$. If the path uses an edge in the clique on $S''$ before arriving at $s$, then $d(u,s)\geq 4$ and we get that $d(u,v)\geq d(u,s)+d(s,t)+d(t,v)\geq 4+3+1=8$. On the other hand, if the path does not use any edge of the clique, then $s=((a,b') \in S)$ for some $b' \in B$. By property 5 we have $d(s,t)=d((a,b')\in S,(b,c)\in T)\geq 5$. We conclude that $d(u,v)\geq d(u,s)+d(s,t)+d(t,v)\geq 3+5+1=9>8$ as required.
	
	\subsection{$6$ vs $10$ weighted undirected construction}

In this section we change the construction from Theorem~\ref{diameter_5_8} to show that under the $3$-OV Hypothesis, any algorithm that can distinguish between diameter $6$ and $10$ in sparse undirected weighted graphs requires $\Omega(n^{3/2-o(1)})$ time.

We get the following theorem.
\begin{theorem} \label{diameter_6_10}
	Given a $3$-OV instance, we can in $O(N^2d^2)$ time construct a weighted, undirected graph with $O(nN^2+nNd^2)$ vertices and $O(nN^2d^2)$ edges that satisfies the following two properties.
	\begin{enumerate}
		\item If the $3$-OV instance has no solution, then for \emph{all} pairs of vertices $u$ and $v$ we have $d(u,v)\leq 6$.
		\item If the $3$-OV instance has a solution, then there exists a pair of vertices $u$ and $v$ such that $d(u,v)\geq 10$.
	\end{enumerate}
	Each edge of the graph has weight either $1$ or $2$.
\end{theorem}

\paragraph{Construction of the graph}
The construction of the graph is the same as in Theorem~\ref{diameter_6_10} except all edges connecting vertices between sets $L_1$ and $L_2$ have weight $2$ and all edges inside the cliques on vertices $S''$ and $T''$ have weight $2$. All the remaining edges have weight $1$.

\paragraph{Correctness of the construction} The correctness proof is essentially the same as for Theorem~\ref{diameter_5_8}. As before we consider two cases.

\paragraph{Case 1: the $3$-OV instance has no solution} In this case we want to show that for all pairs of vertices $u$ and $v$ we have $d(u,v)\leq 6$. In the analysis of Case 1 in Theorem~\ref{diameter_5_8} we show a path between $u$ and $v$ such that the path involves at most one edge from the cliques or between sets $L_1$ and $L_2$. Since we added weight $2$ to the latter edges, the length of the path increased by at most $1$ as a result. So we have upper bound $d(u,v)\leq 6$ for all pairs $u$ and $v$ of vertices.

\paragraph{Case 2: the $3$-OV instance has a solution} In this case we want to show that there is a pair of vertices $u,v$ with $d(u,v)\geq 10$. Similarly to Theorem~\ref{diameter_5_8} we will show that $d((a,b) \in S',(b,c) \in T')\geq 10$, where $a \in A, b \in B, c \in C$ is a solution to the $3$-OV instance. The analysis of the subcases is essentially the same as in Theorem~\ref{diameter_5_8}. For cases 2.1 and 2.2 in the proof of Theorem~\ref{diameter_5_8} we had $d((a,b) \in S',(b,c) \in T')\geq 9$. Since we increased edge weights between $L_1$ and $L_2$ to $2$ and every path from $(a,b) \in S'$ to $(b,c) \in T'$ must cross the layer between $L_1$ and $L_2$, we also increased the lower bound of the length of the path from $9$ to $10$ for cases 2.1 and 2.2. It remains to consider Case 2.3.
As in the proof of Theorem~\ref{diameter_5_8}, w.l.o.g.\ $s \neq ((a,b) \in S)$ and $t = ((b,c) \in T)$. If the path uses an edge in the clique on $S''$ before arriving at $s$, then $d(u,s)\geq 5$ and we get lower bound $d(u,v)\geq d(u,s)+d(s,t)+d(t,v)\geq 5+4+1=10$. On the other hand, if the path does not use any edge of the clique, then $s=((a,b') \in S)$ for some $b' \in B$. By property 5 and because we increased edge weights between $L_1$ and $L_2$ to $2$, we have $d(s,t)=d((a,b')\in S,(b,c)\in T)\geq 6$. We conclude that $d(u,v)\geq d(u,s)+d(s,t)+d(t,v)\geq 3+6+1=10$ as required.

    \subsection{$3k-4$ vs $5k-7$ unweighted directed construction}
In this section, we show that under SETH, for every $k\geq 3$, every algorithm that can distinguish between Diameter $3k-4$ and $5k-7$ in directed unweighted graphs requires $\Omega(n^{1+1/(k-1)-o(1)})$ time.

Theorem~\ref{kOV} gives us the following theorem.
\begin{theorem} \label{kOV_construction_simple}

	Given a $k$-OV instance consisting of $k\geq 2$ sets $W_0,W_1,\dots,W_{k-1} \subseteq \{0,1\}^d$, each of size $N$, we can in $O(kN^{k-1}d^{k-1})$ time construct an unweighted, undirected graph with $O(N^{k-1}+k N^{k-2} d^{k-1})$ vertices and $O(k N^{k-1} d^{k-1})$ edges that satisfies the following properties.
	\begin{enumerate}
		\item The graph consists of $k+1$ layers of vertices $S=L_0,L_1,L_2,\dots,L_k=T$. The number of vertices in the sets is $|S|=|T|=N^{k-1}$ and $|L_1|,|L_2|,\dots,|L_{k-1}|\leq N^{k-2}d^{k-1}$.
		\item $S$ consists of all tuples $(a_0,a_1,\ldots, a_{k-2})$ where for each $i$, $a_i\in W_i$. Similarly, $T$ consists of all tuples $(b_1,b_2,\ldots, b_{k-1})$ where for each $i$, $b_i\in W_i$.
		\item If the $k$-OV instance has no solution, then $d(u,v)=k$ for all $u \in S$ and $v \in T$.
		\item If the $k$-OV instance has a solution $a_0, a_1,\dots,a_{k-1}$ where for each $i$, $a_i\in W_i$ then if $\alpha=(a_0,\dots a_{k-2})\in S$ and $\beta= (a_1,\dots,a_{k-1}) \in T$, then $d(\alpha,\beta)\geq 3k-2$. 
		\item Setting $s=k-2$ in Property 5 of Theorem~\ref{kOV}: If the $k$-OV instance has a solution $a_0, a_1,\dots,a_{k-1}$ where for each $i$, $a_i\in W_i$ then for any tuple $(b_1,\dots,b_{k-2})$, if $\alpha=(a_0,b_1,\dots,b_{k-2})\in S$ and $\beta=(a_1,\dots,a_{k-1})\in T$, then $d(\alpha,\beta)\geq k+2$. Symmetrically, if $\alpha=(a_0,a_1,\dots,a_{k-2})\in S$ and $\beta=(b_1,\dots,b_{k-2},a_{k-1})\in T$, then $d(\alpha,\beta)\geq k+2$.

		\item For all $i$ from 1 to $k-1$, for all $v \in L_i$ there exists a vertex in $L_{i-1}$ adjacent to $v$ and a vertex in $L_{i+1}$ adjacent to $v$. 
		We can assume that this property holds because we can remove all vertices that do not satisfy this property from the graph and the resulting graph will still satisfy the previous three properties.
	\end{enumerate}
\end{theorem}

In the rest of the section we use Theorem~\ref{kOV_construction_simple} to prove the following result.

\begin{theorem} \label{diameter_k_simple}
	Given a $k$-OV instance, we can in $O(kN^{k-1}d^{k-1})$ time construct an unweighted, directed graph with $O(k N^{k-1}+k N^{k-2} d^{k-1})$ vertices and $O(kN^{k-1}d^{k-1})$ edges that satisfies the following two properties.
	\begin{enumerate}
		\item If the $k$-OV instance has no solution, then for \emph{all} pairs of vertices $u$ and $v$ we have $d(u,v)\leq 3k-4$.
		\item If the $k$-OV instance has a solution, then there exists a pair of vertices $u$ and $v$ such that $d(u,v)\geq 5k-7$.
	\end{enumerate}
\end{theorem}

\paragraph{Construction of the graph}
We construct a graph with the required properties by starting with the graph from Thereom~\ref{kOV_construction_simple} and adding more vertices and edges. First we will construct a weighted graph and then we will make it unweighted.
Figure~\ref{fig:figure_8_13} illustrates the construction of the graph for the special case $k=4$.

\begin{figure*}[h]
  \centering
    \includegraphics[width=0.7\textwidth]{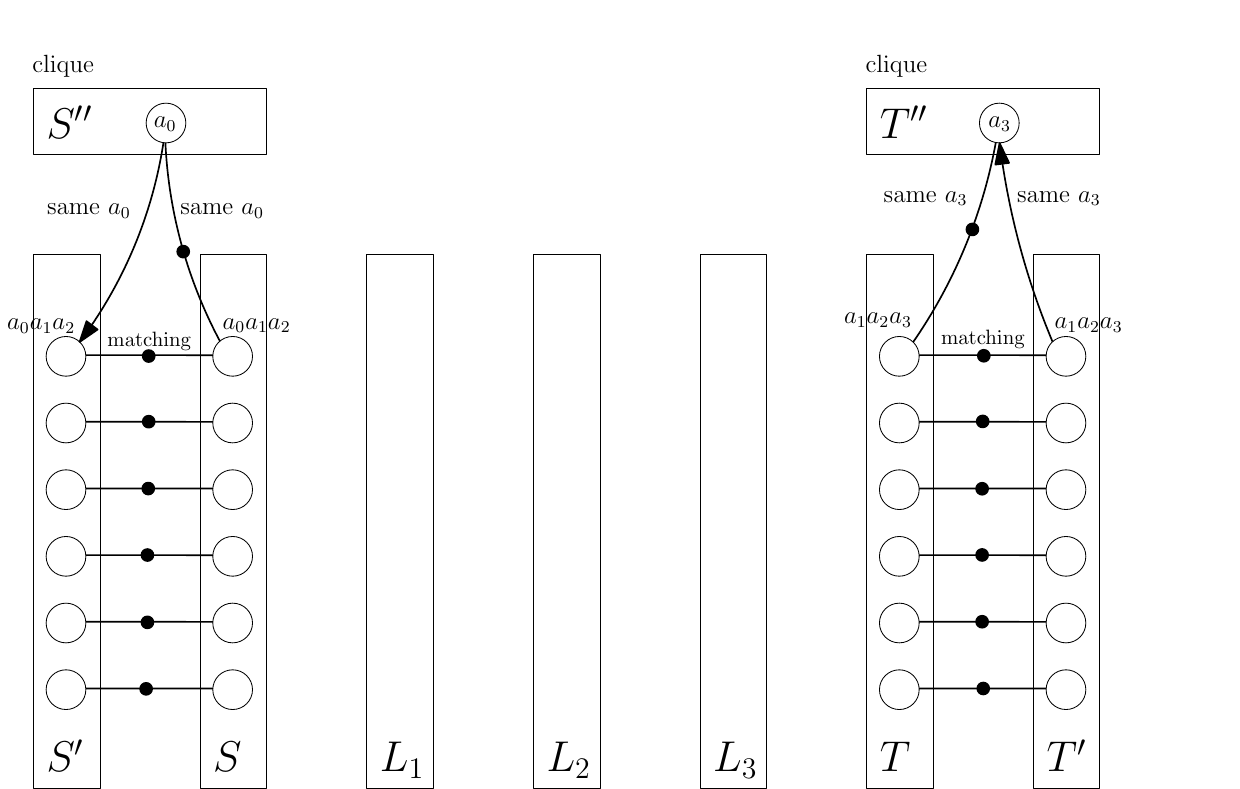}
    \caption{The $3k-4$ vs $5k-7$ construction for the special case $k=4$. The edges between sets $S,L_1,L_2,L_3$ and $T$ are not depicted. The matching between sets $S$ and $S'$ consists of unweighted paths of length $k-2=2$. The edges between sets $S$ and $S''$ consists of unweighted paths of length $k-2=2$. Similarly for the right side.}
	\label{fig:figure_8_13}
\end{figure*}
We start by adding a set $S'$ of $N^{k-1}$ vertices. $S'$ consists of all tuples $(a_0,a_1,\ldots, a_{k-2})$ where for each $i$, $a_i\in W_i$. We connect every $(a_0,a_1,\ldots, a_{k-2}) \in S'$ to its counterpart $(a_0,a_1,\ldots, a_{k-2}) \in S$ with an undirected edge of weight $k-2$ to form a matching. We also add another set $S''$ of $N$ vertices. $S''$ contains one vertex $a_0$ for every $a_0 \in W_0$. For every pair of vertices in $S''$ we add an undirected edge of weight $1$ between the vertices. Thus, the $N$ vertices form a clique. 
Furthermore, for every vertex $a_0 \in S''$ we add an undirected edge of weight $k-2$ to $(a_0,b_1,\ldots, b_{k-2}) \in S$ for all $b_1,\dots, b_{k-2}$.
Finally for every vertex $a_0 \in S''$ we add a \emph{directed} edge of weight $1$ towards $(a_0,b_1,\ldots, b_{k-2}) \in S$ for all $b_1,\dots, b_{k-2}$.
Some of the edges that we added have weight $k-2$. We make those unweighted by subdividing them into edges of weight $1$. Let $S'''$ be the set of newly added vertices.
In total we added $O(kN^{k-1})$ vertices and $O(kN^{k-1})$ edges.

We do a similar construction for the set $T$ of vertices. We add a set $T'$ of $N^{k-1}$ vertices --- one vertex for every tuple $(a_1,\dots,a_{k-1})$ where for each $i$, $a_i\in W_i$. We connect every $(a_1,\dots,a_{k-1})\in T'$ to $(a_1,\dots,a_{k-1}) \in T$ by an undirected edge of weight $k-2$. Finally, we add a set $T''$ of $n$ vertices. $T''$ contains one vertex for every vector $a_{k-1} \in W_{k-1}$.
We connect every pair of vertices in $T''$ by an undirected edge of weight $1$.
We connect every vertex $a_{k-1} \in T''$ to $(b_1,\dots,b_{k-2},a_{k-1}) \in T$ by an undirected edge of weight $k-2$ for all $b_1,\dots,b_{k-2}$.
Also, for every vertex $a_{k-1} \in T''$ we add a \emph{directed} edge of weight $1$ from $(b_1,\dots,b_{k-2},a_{k-1}) \in T'$  to $a_{k-1}$ for all  $b_1,\dots,b_{k-2}$.
Some of the edges that we just added have weight $k-2$. We make those unweighted by subdividing them into edges of weight $1$. Let $T'''$ be the set of newly added vertices.
This finishes the construction of the graph. In the rest of the section we show that the construction satisfies the promised two properties stated in Theorem~\ref{diameter_k_simple}.

\paragraph{Correctness of the construction} We need to consider two cases.
\paragraph{Case 1: the $k$-OV instance has no solution} 
In this case we want to show that for all pairs of vertices $u$ and $v$ we have $d(u,v)\leq 3k-4$. We consider subcases.
\paragraph{Case 1.1: $u \in S \cup S' \cup S'' \cup S''' \cup L_i$ for $1\leq i\leq k-2$ and $v \in T \cup T' \cup T'' \cup T''' \cup L_j$ for $2\leq j\leq k-1$}   We observe that there exists $s \in S$ that has $d(u,s)\leq k-2$. Similarly, there exists $t \in T$ with $d(t,v)\leq k-2$. By property 3 from Theorem~\ref{kOV_construction_simple} we have that $d(s,t)\leq k$. This gives us upper bound $d(u,v)\leq d(u,s)+d(s,t)+d(t,v)\leq (k-2)+k+(k-2)=3k-4$ as required. The proof when the sets for $u$ and $v$ are swapped is identical since we only use paths on unweighted edges.
\paragraph{Case 1.2: $u,v \in S \cup S' \cup S'' \cup S''' \cup L_1$} We note that there is some vertex $s\in S''$ with $d(u,s)\leq 2(k-2)$ (via undirected edges). Also, there is some vertex $s'\in S''$ with $d(s',v)\leq k-1$ (possibly using directed edges). $S''$ is a clique so $d(s,s')\leq 1$. Thus, $d(u,v)\leq d(u,s)+d(s,s')+d(s',v)\leq 2(k-2)+1+(k-1)=3k-4$. 
\paragraph{Case 1.3: $u,v \in T \cup T' \cup T'' \cup T''' \cup L_{k-1}$} This case is similar to the previous case. We note that there is some vertex $t\in T''$ with $d(t,v)\leq 2(k-2)$ (via undirected edges). Also, there is some vertex $t'\in T''$ with $d(u,t')\leq k-1$ (possibly using directed edges). $S''$ is a clique so $d(t',t)\leq 1$. Thus, $d(u,v)\leq d(u,t')+d(t',t)+d(t,v)\leq (k-1)+1+2(k-2)=3k-4$. 
\paragraph{Case 2: the $k$-OV instance has a solution} In this case we want to show that there is a pair of vertices $u,v$ with $d(u,v)\geq 5k-7$. Let $(a_0,a_1,\ldots, a_{k-1})$  be a solution to the $k$-OV instance where for each $i$, $a_i\in W_i$. We claim that $d((a_0,\dots,a_{k-2}) \in S',(a_1,\dots,a_{k-1}) \in T')\geq 5k-7$. Let $P$ be an shortest path between $u=((a_0,\dots,a_{k-2}) \in S')$ and $v=((a_1,\dots,a_{k-1}) \in T')$. We want to show that $P$ uses at least $5k-7$ edges.
Let $s \in S$ be the first vertex on path $P$ that belongs to $S$ and let $t \in T$ be the last vertex from the set $T$ that is on path $P$. We observe that due to the directionality of the edges, $s$ and $t$ must be the counterparts of $u$ and $v$ respectively; that is, $s=((a_0,\dots,a_{k-2}) \in S)$ and $t=((a_1,\dots,a_{k-1}) \in T)$. Note that these definitions of $s$ and $t$ differ from the definitions of $s$ and $t$ in previous proofs. We consider three subcases.
\paragraph{Case 2.1: A vertex in $S'\cup S''\cup S'''$ appears after $s$ on the path $P$} We observe that if $s_1,s_2\in S$ is a pair of vertices on the path $P$ such that no vertex in $S$ appears between them on $P$, then the portion of $P$ between $s_1$ and $s_2$ either contains only vertices in $S'\cup S''\cup S'''$ or contains no vertices in $S'\cup S''\cup S'''$. Let $s_1,s_2\in S$ be such that the portion of $P$ between them contains only vertices in $S'\cup S''\cup S'''$. Such $s_1,s_2$ exist by the specification of this case. If $s_1=s_2$ then $P$ is not a shortest path. Otherwise, the portion of $P$ between $s_1$ and $s_2$ must include a vertex in $S''$. Thus, $d(s_1,s_2)\geq 2(k-2)$. We consider three subcases.
\begin{itemize}
\item $s_1\not=s$. The distance between any pair of vertices in $S$ is at least 2 so $d(s,s_1)\geq 2$. Then, $d(u,v)\geq d(u,s)+d(s,s_1)+d(s_1,s_2)+d(s_2,t)+d(t,v)\geq (k-2)+2+2(k-2)+k+(k-2)=5k-6$.
\item $s_1=s$ and $s_2=((a_0,b_1,\dots,b_{k-2}) \in S)$ for some $b_1,\dots,b_{k-2}$. In this case, by property 5 we have $d(s_2,t)\geq k+2$. Thus, $d(u,v)\geq d(u,s_1)+d(s_1,s_2)+d(s_2,t)+d(t,v)\geq (k-2)+2(k-2)+(k+2)+(k-2)=5k-6$.
\item $s_1=s$ and $s_2=((b_0,\dots,b_{k-2} \in S)$ for some with $b_0\not=a_0$. In this case, the path from $s_1$ to $s_2$ must include an edge in the clique $S''$ since these are the only edges among vertices in $S'\cup S''\cup S'''$ for which adjacent tuples can differ with respect to their first element. Thus, $d(s_1,s_2)\geq 2(k-2)+1\geq 2k-3$. Therefore, $d(u,v)\geq d(u,s_1)+d(s_1,s_2)+d(s_2,t)+d(t,v)\geq (k-2)+(2k-3)+k+(k-2)=5k-7$.
\end{itemize}
\paragraph{Case 2.2: A vertex in $T'\cup T''\cup T'''$ appears before $t$ on the path $P$} This case is analogous to the previous case.
\paragraph{Case 2.3: The portion of the path $P$ between $s$ and $t$ contains no vertices in $S'\cup S''\cup S'''\cup T'\cup T''\cup T'''$} By property 4, $d(s,t)\geq 3k-2$. Thus, $d(u,v)\geq d(u,s)+d(s,t)+d(t,v)\geq (k-2)+(3k-2)+(k-2)=5k-6$.\\

We note that a slight modification of this construction gives a lower bound for higher values of Diameter. For any L, we can get an $L(3k-4)$ vs $L(5k-8)+1$ construction by subdividing all of the edges in the construction (into paths of length $L$) except for the directed edges and the edges within the cliques. 
	\section{Algorithms for sparse graphs}\label{sec:eccalgs}

\subsection{2-Approximation for Eccentricities in $\tO(m\sqrt n)$ time}

In this section we prove the following theorem, which implies the upper bound part of Theorem~\ref{thm:otherecc}.

\begin{theorem} \label{alg1}
	Given a weighted, directed $m$ edge $n$ vertex graph, there is an $\tO(m\sqrt n)$ time randomized algorithm that outputs for each $v\in V$ a quantity $\epsilon'(v)$  such that for all $v \in V$ we have $\epsilon(v)/2\leq \epsilon'(v)\leq \epsilon(v)$.
\end{theorem}
\begin{proof}
	The algorithm is inspired by the 2-approximation algorithm for directed radius of Abboud, Vassilevska W., and Wang~\cite{AbboudWW16}. We claim that the following algorithm achieves the above guarantees.
	\begin{enumerate}
		\item Sample a random subset $S \subset V$ of size $|S|=\Theta(\sqrt n \log n)$. With high probability for every $u \in V$ we have $\Nin_{\sqrt n}(u)\cap S \neq \emptyset$.
		\item Let $w$ be a vertex that maximizes $d(S,w)$, which we find using Dijkstra's algorithm. Let $S':=\Nin_{\sqrt n}(w)$.
		\item For every vertex $v \in S'$ we output $\epsilon'(v)=\epsilon(v)$ by running Dijkstra's algorithm and following the outgoing edges.
		\item For every vertex $v \not\in S'$ we output the estimate $\epsilon'(v)=\max_{s \in S\cup \{w\}}d(v,s)$. We can determine all these quantities by running Dijkstra's algorithm out of all vertices in $S \cup \{w\}$ and following the incoming edges.
	\end{enumerate}
	
	\paragraph{Correctness} Consider an arbitrary vertex $v\not\in S'$ (if $v \in S'$, then we are done by the third step). If there exists $s \in S$ such that $d(v,s)\geq \epsilon(v)/2$, then we are done since $\epsilon'(v)\geq d(v,s)\geq \epsilon(v)/2$. Otherwise, we have $d(v,s)<\epsilon(v)/2$ for all $s \in S$. Let $v'$ be a vertex that achieves $d(v,v')=\epsilon(v)$. By the triangle inequality we have $d(s,v')>\epsilon(v)/2$ for all $s \in S$. Equivalently, $d(S,v')>\epsilon(v)/2$. This implies that $d(S,w)>\epsilon(v)/2$ by our choice of $w$. Since $d(S,w)>\epsilon(v)/2$ and $S'=\Nin_{\sqrt n}(w)$ intersects $S$, we must have that $S'$ contains all vertices $u$ with $d(u,w)\leq \epsilon(v)/2$. Since $v \not \in S'$, we must have $d(v,w)>\epsilon(v)/2$ and we are done since $\epsilon'(v)\geq d(v,w)>\epsilon(v)/2$.
\end{proof}

	\subsection{Almost 2-Approximation for Eccentricities in almost linear time}

In contrast to our $\tilde{O}(m\sqrt n)$ time algorithm from the previous section, our near-linear time $(2+\delta)$-approximation algorithm is very different from all previously known algorithms. 
Our algorithm proceeds in iterations and maintains a set $S$ of vertices for which we still do not have a good eccentricity estimate. In each iteration either we get a good estimate for many new vertices and hence remove them from $S$, or we remove all vertices from $S$ that have large eccentricities, and for the remaining vertices in $S$ we have a better upper bound on their eccentricities. After a small number of iterations we have a good estimate for all vertices of the graph. 

In this section we prove the following theorem, which implies the upper bound part of Theorem~\ref{thm:2ecc}.

\begin{theorem} 
Suppose that we are given a weighted, directed $m$ edge $n$ vertex graph. The weights of all edges are non-negative and in the range $[1/n^c,n^c]$ for some constant $c$. For any $1>\tau>0$, there is a randomized $\tO(m/\tau)$ time algorithm that with high probability outputs for each $v\in V$ a quantity $\eps'(v)$ such that for all $v \in V$ we have ${1-\tau\over 2}\eps(v)\leq \eps'(v)\leq \eps(v)$. \label{alg2}
\end{theorem}
\begin{proof}
	We begin by computing the strongly connected components of the graph, which allows us to determine which vertices have infinite eccentricity. We maintain a subset $S\subseteq V$ of vertices $v$ for which we still do not have an estimate $\eps'(v)$. Initially $S=V$ and we will end with $|S|= O(\log n)$. When $|S|= O(\log n)$ we can evaluate $\eps(v)$ for all $v \in S$ in the total time of $\tilde{O}(m)$. Also we maintain a value $D$ that upper bounds the largest eccentricity of a vertex in $S$. That is, $\eps(v)\leq D$ for all $v \in S$. We begin by computing the strongly connected components of the graph, which allows us to determine which vertices have infinite eccentricity. Then we set $D=n^C$ for some large enough constant $C>0$. The algorithm proceeds in phases. Each phase takes $\tO(m)$ time and either $|S|$ decreases by a factor of at least $2$ or $D$ decreases by a factor of at least $1/(1-\tau)$. After $O(\log(n)/\tau)$ phases either $|S|= O(\log n)$ or $D<1/n^c$. 

	For a subset $S \subseteq V$ of vertices and a vertex $x \in V$ we define a set $S_x\subseteq S$ to contain those $|S_x|=|S|/2$ vertices from $S$ that are closest to $x$ (according to distance $d(\cdot,x)$). The ties are broken by taking the vertex with the smaller id.
	Given a subset $S \subseteq V$ of vertices and a threshold $D$, a phase proceeds as follows.
	\begin{itemize}[itemsep=0mm]
		\item We sample a set $A \subseteq S$ of $O(\log n)$ random vertices from the set $S$. With high probability for all $x \in V$ we have $A \cap S_x \neq \emptyset$.
		\item Let $w \in V$ be a vertex that maximizes $d(A,w)$. We can find it using Dijkstra's algorithm.
		\item We consider two cases.
\paragraph{Case $d(S\setminus S_w,w)\geq {1-\tau \over 2}D$.} For all $x \in S\setminus S_w$ we have ${1-\tau \over 2} D \leq \eps(x)\leq D$ and we assign the estimate $\eps'(x)={1-\tau \over 2}D$. This gives us that ${1-\tau \over 2} \eps(x)\leq {1-\tau \over 2}D=\eps'(x)\leq \eps(x)$ for all $x\in S\setminus S_w$. We update $S$ to be $S_w$. This decreases the size of $S$ by a factor of $2$ as required.
\paragraph{Case $d(S\setminus S_w,w)<{1-\tau \over 2}D$.} Set $S'=S$. For every vertex $v \in S$ evaluate $r_v:=\max_{x \in A}d(v,x)$. We can evaluate these quantities by running Dijkstra's algorithm from every vertex in $A$ and following the incoming edges. If $r_v\geq {1-\tau \over 2}D$, then assign the estimate $\eps'(v)={1-\tau \over 2}D$ and remove $v$ from $S'$. Similarly as in the previous case we have ${1-\tau \over 2} \eps(v)\leq\eps'(v)\leq \eps(v)$ for all $v \in S\setminus S'$. Below we will show that for every $v \in S'$ we have $\eps(v)\leq (1-\tau)D$. Thus we can update $S=S'$ and decrease the threshold $D$ to $(1-\tau)D$ as required.
	\end{itemize}
	
	\paragraph{Correctness} We have to show that, if there exists $v \in S'$ such that $\eps(v)>(1-\tau)D$, then we will end up in the first case (this is the contrapositive of the claim in the second case). Since $v \in S'$ we must have that $d(v,x)\leq {1-\tau \over 2}D$ for all $x \in A$. Since $\eps(v)>(1-\tau)D$, we must have that there exists $v'$ such that $d(v,v')>(1-\tau)D$. By the triangle inequality we get that $d(x,v')>{1 - \tau \over 2}D$ for every $x \in A$. By choice of $w$ we have $d(A,w)>{1-\tau \over 2}D$. Since $A \cap S_w \neq \emptyset$, we have $d(S\setminus S_w,w)\geq {1-\tau \over 2}D$ and we will end up in the first case.
	
	The guarantee on the approximation factor follows from the description.
\end{proof}


	As a corollary, we get an algorithm for Source Radius with the same runtime and approximation ratio as Theorem~\ref{alg2}. First, run the Eccentricities algorithm and let $v$ be a vertex with minimum estimated eccentricity $\epsilon'(v)$. Then run Dijkstra's algorithm from $v$ and report $\epsilon(v)$ as the Radius estimate $R'$. Let $R$ be the true radius of the graph and let $x$ be a vertex with minimum Eccentricity i.e. $\epsilon(x)=R$. If $\alpha$ is the approximation ratio for the Eccentricities algorithm then $\epsilon(v)\leq \alpha\epsilon'(v)\leq \alpha\epsilon(v)$ and $\epsilon(x)\leq \alpha\epsilon'(x)\leq \alpha\epsilon(x)$. By choice of $v$, $\epsilon'(v)\leq \epsilon'(x)$. Thus, $\alpha R=\alpha\epsilon(x)\geq \alpha\epsilon'(x)\geq \alpha\epsilon'(v)\geq \epsilon(v)=R'$. Clearly $R'\geq R$, so $R\leq R'\leq \alpha R$.

	\subsection{$S$-$T$ Diameter algorithms}\label{sec:stdiamalgs}
Recall that the $S$-$T$ diameter problem is as follows: Given an undirected graph $G=(V,E)$ and two sets $S\subseteq V, T\subseteq V$, determine $D_{S,T}=\max_{s\in S, t\in T} d(s,t)$. Here we will outline two algorithms for the problem.

Let us first consider a fast $3$-approximation algorithm.
\begin{claim}\label{claim:3appxstdiam}
There is an $O(m+n)$ time deterministic algorithm that for any $n$ vertex $m$ edge unweighted graph $G=(V,E)$ and $S\subseteq V, T\subseteq V$, computes an estimate $D'$ such that $D_{S,T}/3\leq D'\leq D_{S,T}$ and two vertices $s\in S$, $t\in T$ such that $d(s,t)=D'$.
In graphs with nonnegative weights, the same estimate can be achieved in $O(m+n\log n)$ time.
\end{claim}

\begin{proof}
The algorithm is extremely simple: pick arbitrary vertices $s\in S$ and $t\in T$, compute BFS($s$) and BFS($t$) and return $\max\{\max_{t'\in T} d(s,t'),\max_{s'\in S} d(s',t)\}$ (also returning the two vertices achieving the maximum). For weighted graphs, run Dijkstra's algorithm instead of BFS.

Let's see why this algorithm provides the promised guarantee. Suppose that for every $t'\in T$, $d(s,t')<D_{S,T}/3$ (otherwise we are done). Then for every $t',t''\in T$, $d(t',t'')\leq d(t',s)+d(s,t'')<2D_{S,T}/3$. In particular, for all $t'\in T$, $d(t,t')<2D_{S,T}/3$. If we also had that for every $s'\in S$, $d(t,s')<D_{S,T}/3$, then we'd get that for all $s'\in S, t'\in T$, $d(s',t')\leq d(s',t)+d(t,t')<D_{S,T}$, contradicting the definition of $D_{S,T}$. Thus, $\max\{\max_{t'\in T} d(s,t'),\max_{s'\in S} d(s',t)\}\geq D_{S,T}/3$. 
\end{proof}

We will now show an analogue to the $\tilde{O}(m\sqrt n)$ time almost-$3/2$-approximation diameter algorithm of Roditty and Vassilevska W.~\cite{RV13} for $S$-$T$ Diameter giving a $2$-approximation. Using a trick from Chechik et al.~\cite{ChechikLRSTW14} we also obtain a true $2$ approximation algorithm running in $\tilde{O}(m^{3/2})$.


\begin{algorithm}
\caption{$2$-Approximation for $S$-$T$ Diameter}\label{euclid}
\begin{algorithmic}[1]
\Procedure{$2$-Approx}{}
\State $X$ - random sample of vertices, $|X|=\Theta(\sqrt n \log n)$
\State $D_1:=0$
\For{\texttt{every $x\in X$}}
      \State Run BFS($x$)
			\State Let $t_x$ be the closest vertex to $x$ in $T$
			\State Run BFS($t_x$)
			\State $D_1 = \max\{D_1,\max_{s\in S} d(s,t_x)\}$
\EndFor
\State \multiline{Let $\bar{t}$ be the furthest vertex of $T$ from $X$ (computed above)}
\State Run BFS($\bar{t}$)
\State $D_2=\max_{s\in S} d(s,\bar{t})$.
\State Let $Y$ be the closest $\sqrt n$ vertices to $\bar{t}$.
\For{\texttt{every $y\in Y$}}
      \State Run BFS($y$)
			\State Let $s_y$ be the closest vertex to $y$ in $S$
			\State Run BFS($s_y$)
			\State $D_2 = \max\{D_2,\max_{t\in T} d(s_y,t)\}$
\EndFor
\Return $\max\{D_1,D_2\}$
\EndProcedure
\end{algorithmic}
\end{algorithm}

We use Algorithm 1 to prove:

\begin{theorem}\label{thm:stdiam32}
There is an $\tilde{O}(m\sqrt n)$ time randomized algorithm that with high probability outputs an estimate $D'$ for the $S$-$T$ diameter $D$ of an $m$ edge $n$ vertex unweighted undirected graph such that $2\lfloor D/4\rfloor \leq D'\leq D$. 

In $\tilde{O}(m^{3/2})$ time one can obtain an estimate $D''$ such that $D/2\leq D''\leq D$.
\end{theorem}

\begin{proof}
First we analyze Algorithm 1.
Let $s^*\in S$ and $t^*\in T$ be a pair of vertices with $d(s^*,t^*)=D$.
Let $d=\lfloor D/4\rfloor$.

Suppose first that for some $x\in X$, $d(x,t^*)\leq d$. Then, $d(x,t_x)\leq d(x,t^*)\leq d$ and hence $d(t_x,t^*)\leq d(t_x,x)+d(x,t^*)\leq 2d$. However, then $d(t_x,s^*)\geq d(t^*,s^*)-d(t^*,t_x)\geq D-2d\geq D/2$. In this case, $D_1\geq D/2$ and we are done.

Thus, if $D_1<D/2$, it must be that for every $x\in X$, $d(x,t^*)\geq d+1$.
Hence, for every $x \in X$, $d(x,\bar{t})\geq d(x,t^*)\geq d+1$ by the definition of $\bar{t}$. 
If $d(\bar{t},s^*)\geq D/2$, then $D_2\geq D/2$ and we are done, so let us assume that $d(\bar{t},s^*)\leq D/2$.

Now, as $X$ is random of size $c\sqrt n\log n$ for large enough $c$, with high probability, $X$ hits the $\sqrt n$-neighborhoods of all vertices. In particular, $X\cap Y\neq \emptyset$. 
However, since $d(x,\bar{t})\geq d+1$ for every $x\in X$, it must be that $Y$ contains all vertices at distance $d$ from $\bar{t}$ as it contains all vertices closer to $\bar{t}$ than $x\in Y\cap X$.

If $s^*\in Y$, then we would have run BFS from $s^*$ and returned $D$. Hence $d(\bar{t},s^*)>d$. Let $a$ be the vertex on the shortest path between $\bar{t}$ and $s^*$ with $d(\bar{t},a)=d$. We thus have that $a\in Y$. 
Also, since $d(\bar{t},s^*)\leq D/2$, $d(a,s^*)\leq D/2-d$ and hence $d(a,s_a)\leq D/2-d$, so that $d(s_a,t^*)\geq D-2(D/2-d)\geq 2d$. This finishes the argument that $2$-\textsc{Approx} returns an estimate $D'$ with $2\lfloor D/4\rfloor\leq D'\leq D$.

It is not too hard to see that the only time that we might get an estimate that is less than $D/2$ is in the last part of the argument and only if the diameter is of the form $4d+3$. (We will prove the algorithm guarantees formally soon.) The analysis fails to work in that case because $Y$ is guaranteed to contain only the vertices at distance $d$ from $\bar{t}$.


In particular, if $Y$ contains all vertices at distance $d+1$ from $\bar{t}$ instead of just those at distance at most $d$, we could consider $a$ to be the vertex on the shortest path between $\bar{t}$ and $s^*$ with $d(\bar{t},a)=d+1$, and $a\in Y$. 
Now since $d(\bar{t},s^*)\leq 2d+1$ (as otherwise we'd be done), $d(a,s_a)\leq d(a,s^*)\leq 2d+1-d-1 =d$, so that $d(s_a,t^*)\geq 2d+3$. Hence everything would work out.

We handle this issue with a trick from Chechik et al.~\cite{ChechikLRSTW14}. First, we make graph have constant degree by blowing up the number of vertices and adding 0 weight edges as follows. Let $v$ be an original vertex and suppose it has degree $d(v)$. Replace $v$ with a $d(v)$-cycle of $0$ weight edges so that each of the cycle vertices is connected to a one of the neighbors of $v$, where each neighbor has a cycle vertex corresponding to it. This makes every vertex have degree $3$ and increases the number of vertices to $O(m)$. 

Now, we run algorithm $2$-Approx with two changes. The first is that instead of BFS we use Dijkstra's algorithm\footnote{We can also use Thorup's algorithm~\cite{thorup1999undirected}, which runs in linear time and is stated for positive weight edges but can also handle  zero weight edges.} because the edges now have weights. The second change is that we redefine $Y$ as follows. Let $Z$ be the closest $\sqrt m$ vertices to $\bar{t}$. Define $Y$ to be $Z$, together with all vertices that have a non-zero weight edge to some vertex of $Z$.


Since every vertex has degree $3$, the number of vertices in $Y$ is $\leq 4|Z|\leq O(\sqrt m)$ and hence we can afford to run Dijkstra from each of them and complete the algorithm in $\tilde{O}(m^{3/2})$ time.

Let us now formally analyze the guarantees of the algorithm. 
If some vertex $x\in X$ has $d(x,t^*)\leq D/4$, we get that $d(t_x,s^*)\geq D-2(D/4)=D/2$. If we are not done, all vertices of $X$ have $d(x,\bar{t})\geq d(x,t^*)> D/4$ and $Z$ contains all vertices at distance $\leq D/4$ from $\bar{t}$. If $s^*\in Z$, we are done so we must have $d(s^*,\bar{t})> D/4$. Consider the last vertex $a'$ on the $\bar{t}$ to $s^*$ shortest path (in the direction towards $s^*$) for which $d(\bar{t},a')\leq D/4$. We have that $a'\in Z$. Also, the vertex $a$ after $a'$ on the $\bar{t}$ to $s^*$ shortest path must be in $Y$ by definition.

 If $d(\bar{t},s^*)\geq D/2$, we are done. If we are not done, then we get that $d(a,s^*)<D/4$ since $d(\bar{t},a)>D/4$. Hence, $d(a,s_a)<D/4$, so $d(s_a,t^*)> D-2(D/4)=D/2$. 
\end{proof}

It is quite straightforward to extend the the $S$-$T$ Diameter algorithms to work for weighted undirected graphs as well: 

\begin{theorem}
In $\tilde{O}(m\sqrt n)$ time one can obtain an estimate $D'$ to the $S$-$T$ diameter $D$ of an $m$ edge $n$ vertex undirected graph with nonnegative edge weights such that $D/2 - 2w(a,a') \leq D'\leq D$ for some edge $(a,a')$. 

In $\tilde{O}(m^{3/2})$ time one can obtain an estimate $D''$ such that $D/2\leq D''\leq D$.
\end{theorem}

\begin{proof}
The $\tilde{O}(m\sqrt n)$ time algorithm is identical to Algorithm 1, but with BFS replaced by Dijkstra's algorithm. The proof is very similar to that of Theorem~\ref{thm:stdiam32}. The main difference concerns the definition of the vertex $a$, which is the vertex on the shortest path between $\bar{t}$ and $s^*$ with $d(\bar{t},a)=d$. Such a vertex $a$ may not exist here since the graph is weighted. Instead, we let $a'$ be the last vertex on the $\bar{t}$ to $s^*$ shortest path that is at distance $\leq D/4$ from $\bar{t}$, and let $a$ be the vertex after $a'$. 

We include the full analysis of correctness here for completeness. Let $s^*\in S$ and $t^*\in T$ be the end points of the $S$-$T$ Diameter path so that $d(s^*,t^*)=D$.

If some vertex $x\in X$ has $d(x,t^*)\leq D/4$, we get that $d(t_x,s^*)\geq D-2(D/4)=D/2$. If we are not done, all vertices of $X$ have $d(x,\bar{t})\geq d(x,t^*)> D/4$ and $Y$ contains all vertices at distance $\leq D/4$ from $\bar{t}$. If $s^*\in Y$, we are done so we must have $d(s^*,\bar{t})> D/4$. 

Recall that $a'$ is the last vertex on the $\bar{t}$ to $s^*$ shortest path that is at distance $\leq D/4$ from $\bar{t}$, and that $a$ is the vertex after $a'$. We have that $a'\in Y$. If $d(\bar{t},s^*)\geq D/2$, we are done. If we are not done, then we get that $d(a,s^*)<D/4$ since $d(\bar{t},a)>D/4$. Thus, $d(a',s^*)< D/4+w(a,a')$. Therefore, $d(a',s_{a'})< D/4+w(a,a')$, so $d(s_{a'},t^*)< D-2(D/4+w(a,a'))= D/2-2w(a,a')$. This completes the analysis of the $\tilde{O}(m\sqrt n)$ time algorithm. 

For the $\tilde{O}(m^{3/2})$ time algorithm, we apply precisely the same trick from~\cite{ChechikLRSTW14} as the proof of Theorem~\ref{thm:stdiam32}, with identical analysis.
\end{proof}

	\subsection{Linear time less than 2-approximation for Diameter }

It is an easy exercise to see that when $D=2h+1$ then the value $\max \{ \epsilon^{in}(v), \epsilon^{out}(v) \}$ of an arbitrary vertex $v \in V$ is an estimation to the diameter which is at least $h+1$ and at most $D$.
In this section we present a deterministic algorithm that gets a directed unweighted graph $G$ with $D=2h$ and computes in $O(m^2/n)$ time an estimation $\hat {D}$ such that $h+1\leq \hat{D} \leq D$.

The algorithm works as follows. A variable $\hat{D}$ is set to zero. The algorithm searches for a vertex $v$ of minimum total degree (where total degree is the sum of in-degree and out-degree). Then the algorithm computes the in and out eccentricity of $v$ and every vertex that has an edge with $v$ (incoming or outgoing). The algorithm outputs the maximum of all the in and out eccentricities that were computed. See Algorithm 2.

\begin{algorithm}
\caption{Fast approximation of the diameter}\label{A-linear}
\begin{algorithmic}[1]
\Procedure{Diam-Approx}{$G$}
\State $\hat{D} =  0$
\State $v = \arg \min_{x\in V} deg^{in}(x)+deg^{out}(x)$

\For{\texttt{every $w \in \Nin(v) \cup \Nout(v) \cup \{v \} $}}
    \State compute $\epsilon^{in}(w)$ and $\epsilon^{out}(w)$
    \State $\hat{D}= \max \{ \hat{D},\epsilon^{in}(w),\epsilon^{out}(w)\}$
\EndFor
\State \Return $\hat{D}$

\EndProcedure
\end{algorithmic}
\end{algorithm}

\begin{theorem}\label{thm:fastersparsediam}
Let $G=(V,E)$ be an unweighted directed graph with diameter $D=2h$ where $h$ is a positive integer. Algorithm ~\ref{A-linear} returns in $O(m^2/n)$ time an estimate $\hat{D}$ such that $h+1\leq \hat{D} \leq D$.
\end{theorem}
\begin{proof}
We start with the running time analysis. Consider the graph $G$ and ignore the edge directions. For every $u\in V$ let $deg(u) = deg^{in}(u)+deg^{out}(u)$. Recall that $v$ is a vertex of minimum degree. Since $m = \frac{1}{2}\sum_{u\in V} deg(u)$, we have $deg(v)\leq 2m/n$. Therefore, the cost of computing in and out eccentricities for all vertices in the set $N(v) \cup \{ v \}$ is $O(\frac{m}{n} \times m)$.

We now turn to bound $\hat{D}$.
Let $a, b\in V$ and let $d(a,b)=2h$. If $d(a,v)\leq h-1$ then $\epsilon^{out}(v)\geq h+1$. Similarly, if $d(v,b)\leq h-1$ then $\epsilon^{in}(v)\geq h+1$. The remaining case is that $d(a,v)=h$ and $d(v,b)=h$. In this case, $v$ is on some shortest path $P(a,b)$ from $a$ to $b$. 

Let $u\in P(a,b)$ be the vertex that precedes $v$ on the $P(a,b)$. Since $u$ has an incoming edge to $v$ it follows that $u\in N(v)$ and  $\epsilon^{out}(u)$ and  $\epsilon^{in}(u)$ are computed. Since $d(a,v)=h$ it follows that $d(a,u)=h-1$, $\epsilon^{out}(u)\geq h+1$ and $\hat{D}$ is at least $h+1$.
\end{proof}

	\section{Algorithms for dense graphs}\label{sec:diamnewalgs}
In this section we prove the following theorem, which is a restatement of Theorem~\ref{thm:denseintro}.

\begin{theorem} There is an expected $O(n^2\log n)$ time algorithm that for any undirected unweighted graph with Diameter $D=3h+z$ for $h\geq 0,z\in\{0,1,2\}$, returns an extimate $D'$ such that $2h-1\leq D'\leq D$ if $z=0,1$ and $2h\leq D'\leq D$ if $z=2$.

There is an expected $O(n^2\log n)$ time algorithm that for any undirected unweighted graph returns estimates $\eps'(v)$ of the Eccentricities $\eps(v)$ of all vertices such that $3\eps(v)/5-1\leq \eps'(v)\leq \eps(v)$ for all $v$.
\end{theorem}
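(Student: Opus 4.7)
The plan is to adapt the $\tilde O(m\sqrt n)$-time $3/2$-approximation for Diameter and $5/3$-approximation for Eccentricities of~\cite{RV13,ChechikLRSTW14} so that they run in $O(n^2\log n)$ time regardless of edge density. In the dense regime $m=\Theta(n^2)$, a single BFS already costs $\Theta(n^2)$, so this budget permits only $O(\log n)$ BFSs, whereas the sparse algorithms perform $\tilde O(\sqrt n)$ of them (a 2-sweep, BFSs from the $\sqrt n$ closest vertices to the 2-sweep endpoint $b$, and a random hitting-set-sized sample). The challenge is to preserve the approximation guarantee with this drastically reduced BFS count.

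For Diameter, the planned algorithm is: (i) perform a 2-sweep---BFS from an arbitrary $u$, let $a$ be a farthest vertex, then BFS from $a$ to compute $D_a:=\eps(a)$ and find a farthest vertex $b$ from $a$; (ii) BFS from $b$; (iii) BFS from $\Theta(\log n)$ uniformly random vertices; (iv) return the largest eccentricity encountered. Write $D=3h+z$ with $z\in\{0,1,2\}$. If $D_a$ already meets the claimed lower bound we are done. Otherwise $D_a\le 2h-1$, which forces both diametral endpoints $u^*,v^*$ into the shell $B(a,2h-1)\setminus B(a,h)$. In this residual case we argue that either BFS from $b$ directly suffices (which happens when $b$ is close to $u^*$ or $v^*$, since $\eps(b)\ge D-d(b,u^*)$), or the set of ``good witnesses''---vertices from which BFS certifies the desired lower bound---has size $\Omega(n/\log n)$, in which case the random sample in~(iii) hits a witness with high probability.

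For Eccentricities, the algorithm performs $O(\log n)$ BFSs from $\{a,b\}\cup R$, where $R$ is a random sample of size $\Theta(\log n)$, and defines $\eps'(v):=\max_s d(v,s)$ for each $v$ over the BFS sources $s$. The bound $\eps'(v)\ge 3\eps(v)/5-1$ is established by a per-vertex case analysis. Writing $\eps(v)=5k+r$ and letting $w$ be a farthest vertex from $v$, we split on the relative positions of $a,b,v,w$: in the ``extremal'' cases $a$ or $b$ serves as a witness directly; in the remaining case we argue that a set of vertices of size $\Omega(n/\log n)$ can serve as a witness for $v$, so the random sample $R$ contains one w.h.p.

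The principal obstacle in both parts is establishing the $\Omega(n/\log n)$ lower bound on the good-witness set in the residual cases. The analogous statement in~\cite{RV13,ChechikLRSTW14} yields only an $\Omega(\sqrt n)$ bound, which is why those algorithms require $\tilde O(\sqrt n)$ samples. Obtaining the stronger bound here requires a more refined combinatorial argument that exploits the fact that when the 2-sweep fails to certify the desired bound, the graph must contain a long shortest-path-like structure along which many vertices qualify as good witnesses.
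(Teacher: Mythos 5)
There is a genuine gap, and it sits exactly where you placed it: the claimed $\Omega(n/\log n)$ lower bound on the set of ``good witnesses'' in the residual cases is never proved, and the whole theorem rests on it. Moreover, the claim is not believable as stated. The reason the $\tilde{O}(m\sqrt n)$ algorithms of Roditty--Vassilevska W.\ and Chechik et al.\ sample $\tilde{\Theta}(\sqrt n)$ BFS sources is not an artifact of a loose analysis: the witness structure they exploit is a $\sqrt n$-sized neighborhood of one critical vertex, and there are dense graphs in which only $O(\sqrt n)$ (or fewer) vertices certify the $2h-1$ bound. A dense core (say a clique on most of the vertices) with a few sparse gadgets or pendant paths attached already breaks a ``2-sweep plus $O(\log n)$ random BFS'' scheme: the random sample lands in the core with overwhelming probability, the 2-sweep gives you only two additional sources, and for the eccentricity part your estimator $\eps'(v)=\max_s d(v,s)$ over $O(\log n)$ sources cannot reach $3\eps(v)/5-1$ for vertices whose only witnesses are a handful of vertices near the end of one particular attachment. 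Note also that the known $5/3$-type eccentricity guarantees are not obtained from plain ``distance to sampled sources'' estimates at all; they crucially use terms of the form $\eps(x)-d(x,v)$ for vertices $x$ in a small neighborhood of a far vertex, which your scheme discards.

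The paper's proof avoids this obstacle by changing the toolkit rather than the sample size. It computes, in $O(n^2\log n)$ expected time, a Thorup--Zwick-style center set $A$ of size $\tilde{O}(\sqrt n)$ (via a new implementation of their recursive sampling algorithm based on bunches) so that every cluster $C_A(w)$ and every bunch $B_A(v)$ has size $O(\sqrt n)$. Exact distances are then recorded for all pairs lying in a common cluster, while for pairs $u,v$ with $B_A(u)\cap B_A(v)=\emptyset$ one has the certified lower bound $d(u,A)+d(v,A)-1\leq d(u,v)$; the remaining case analysis is closed by running BFS from all of $A$ not in $G$ but in an additive-$2$ spanner $H$ with $O(n^{1.5})$ edges (computable in $O(n^2)$ time), so that $\tilde{O}(\sqrt n)$ searches cost only $\tilde{O}(n^2)$ total, and by combining $M(u,v)$, spanner eccentricities of centers, and (for the eccentricity version) estimates of the form $\eps_H(p(u))-d(u,p(u))-2$ and $d_H(u,y)-2$. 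If you want to salvage your plan, you would have to either prove the witness-density lemma (which would contradict the intuition behind all known $\Theta(\sqrt n)$-sample algorithms) or replace steps (ii)--(iii) with machinery of this kind that makes $\tilde{O}(\sqrt n)$ cheap searches affordable within $O(n^2\log n)$.
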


\subsection{Algorithm overview}
 
Recall that the Diameter approximation algorithm  of Aingworth et al.~\cite{aingworth} runs in $\tilde{O}(n^2+m\sqrt n)$ time. Roditty and Vassilevska W.~\cite{RV13} removed the $\tilde{O}(n^2)$ term to obtain an $\tilde{O}(m\sqrt n)$ expected time almost-$3/2$ approximation algorithm. For every graph with $\Omega(n^{1.5})$ edges the running time of the latter algorithm is not better than the running time of the former algorithm. Therefore, even for not so dense graphs,  it is interesting to consider the opposite question to the one considered by \cite{RV13}. Can the $\tilde{O}(m\sqrt n)$ term be removed? 

For an unweighted undirected  graph of Diameter $D=3h+z$, where $z\in [0,1,2]$, we first show  that using existing techniques it is relatively straightforward to obtain an $\tilde{O}(n^{2})$ time algorithm that returns an estimation $\hat{D}$ such that $2h-2+z\leq \hat{D} \leq D$, when $z \in [0,1]$ and $2h-1\leq \hat{D} \leq D$, when $z=2$.

We then show that using a new implementation of a technique that was introduced by Thorup and Zwick~\cite{ThZw01} in the context of compact routing schemes  we can  return an estimation $\hat{D}$ such that $2h-1\leq \hat{D} \leq D$, when $z \in [0,1]$ and $2h\leq \hat{D} \leq D$, when $z=2$.

The improvement in the estimation might look negligible. To understand the importance of this improvement consider  the case of directed graphs.  The algorithm of Roditty and Vassilevska W.~\cite{RV13} runs in $\tilde{O}(n^{2.5})$ expected time. For $D=3h+z$, where $z\in[0,1,2]$, the estimation $\hat{D}$ satisfies  $2h+z\leq \hat{D} \leq D$, for $z\in[0,1]$ and  $2h+1\leq \hat{D} \leq D$, for $z=2$.  The algorithm of Chechik et al.~\cite{ChechikLRSTW14} 
 runs in $\tilde{O}(n^{8/3})$ expected time and returns an estimation  $\hat{D}$ that satisfies  $\lceil 2/3D \rceil \leq \hat{D} \leq D$.  If we consider for example a graph with $D=5$   we can get an estimation of at least $3$ in $\tilde{O}(n^{2.5})$ time and at least $4$ in  in $\tilde{O}(n^{8/3})$ time. 

In the case of undirected graphs and $D=5$ with the straightforward approach we can only get an estimation of $3$ while using  our more complicated algorithm we can get an estimation of $4$. 
As we showed in earlier sections of this paper every small difference in the approximation might indicate that a conditional lower bound exists, therefore, every   improvement in the quality of the upper bound is crucial for our understanding of the problem.

As we mentioned above, our algorithm is obtained by using ideas developed originally for distance oracles and compact routing schemes. 
Let $a,b\in V$ and let $d(a,b)=D$, both \cite{aingworth} and \cite{RV13} used the following idea. Sample a set $A\subseteq V$ and compute full shortest paths trees for all vertices of $A$. If a vertex that is close to $a$ or $b$ is in $A$ we have a good approximation, if not then all sampled vertices are far from both $a$ and $b$ so pick that farthest one and compute for it and for its $\sqrt n$ closest vertices full shortest paths trees. 
Our algorithm uses a different approach.  As we are allowed to use quadratic time, we try to estimate the distance between every pair of vertices. 
To enable this approach we can no longer sample $A$ naively. Instead, we adapt a recursive sampling algorithm to compute $A$, that was introduced by Thorup and Zwick~\cite{ThZw01} in the context of compact routing schemes. The expected running time of their algorithm is $\tilde{O}(mn/|A|)$. We provide a new implementation of their algorithm that runs in expected $\tilde{O}(n(n/|A|)^2)$ time.

The set $A$ has the following important property,  for every vertex $w\in V$, its {\em cluster} (see \cite{ThZw05})
 $\{u \mid d(u,w)<d(u,A) \}$ is of size $O(n/ |A|)$.  Consider now a pair of vertices $u$ and $v$ that are in the cluster of $w$. For any such pair we can efficiently compute their exact distance. Moreover, we show that for all pairs $u,v$ that are not in the same cluster of any vertex, we can bound $d(u,v)$ from below with $d(u,A)+d(v,A)-1$.
This, combined with some other ideas, gives our approximation guarantees.
We extend our approach to also provide an almost $5/3$-approximation for all Eccentricities. 
The idea of using the bounded clusters of Thorup and Zwick~\cite{ThZw01} has been used in prior work
 to obtain improved distance oracles~\cite{patrascuroditty,AG13}, approximate shortest paths~\cite{BaKa10} and compact routing schemes~\cite{AbrahamG11}.
 
\subsection{A simple approach with additive error}

In this section we present a simple approach for the problem of approximating the Diameter, Eccentricities or $S$-$T$ Diameter, that is based on running existing algorithms on an additive 2 spanner that is precomputed for the input graph.
This simple approach runs in $\tilde{O}(n^2)$ time and gets an estimation that is worse by an additive term of $2$.  We first present this approach for the Diameter. It is simple to adapt it to Eccentricities or $S$-$T$ Diameter.

Suppose that we have an algorithm {\em ALG} that can compute in $\tilde{O}(m\sqrt n)$ time, for any graph $G'$, an estimate $D'$ of its Diameter $D$ such that $p\cdot D - q\leq D'\leq D$. 
Now, Dor, Halperin and Zwick~\cite{DorHZ00} showed that in $\tilde{O}(n^2)$ time one can compute for any $n$ vertex $G$, an additive $2$ spanner $H$ on $\tilde{O}(n^{1.5})$ edges. In fact Knudsen~\cite{Knudsen17} recently showed that in $O(n^2)$ time one can get $H$ on $O(n^{1.5})$ edges (i.e. he removed all logs!).

Let's compute $H$ for our given graph and run {\em ALG} on $H$. The runtime is $\tilde{O}(n^{1.5}\cdot \sqrt n)\leq \tilde{O}(n^2)$ since $H$ has $\leq O(n^{1.5})$ edges.

Let $D'_H$ be the estimates that we obtain  for the Diameter $D_H$ of $H$. 
Notice that $pD-q\leq p\cdot D_H - q\leq D'_H\leq D_H\leq D+2$ and so $pD-2-q \leq D'_H-2\leq D$.
Thus, in $\tilde{O}(n^2)$ time we get almost the same guarantees as in the $\tilde{O}(m\sqrt n)$ time algorithm, except for an extra additive loss of $2$ in the quality.

For the case of Eccentricities and $S$-$T$ Diameter the same approach works without a change. 
Suppose that we have an algorithm {\em ALG} that can compute in $\tilde{O}(m\sqrt n)$ time estimates $e(v)$ of $\epsilon(v)$ for all $v$ so that $r\epsilon(v)-s\leq e(v)\leq \epsilon(v)$, and an estimate $D''$ of the $S$-$T$ Diameter $D_{S,T}$ so that $t\cdot D_{S,T}-u\leq D''\leq D_{S,T}$.
Let $e_H(\cdot),D''_H$ be the estimates that we obtain respectively for  the Eccentricities $\epsilon_H(\cdot)$ of $H$ and the $S,T$ Diameter $D^H_{S,T}$. Let's return $e_H(\cdot)-2,D''_H-2$ as our estimates for the  Eccentricities and $S$-$T$ Diameter of $G$.

Since $r\epsilon(v)-s\leq r\epsilon_H(v)-s\leq e_H(v)\leq \epsilon_H(v)\leq \epsilon(v)+2$, we get 
$r\epsilon(v)-s-2\leq e_H(v)-2\leq \epsilon(v)$.

Finally since $t\cdot D_{S,T}-u\leq t\cdot D^H_{S,T}-u\leq D''_H\leq D^H_{S,T}\leq D_{S,T}+2$, we get $t\cdot D_{S,T}-u-2\leq  D''_H-2\leq D_{S,T}$.

Below we show how to make the additive loss in quality smaller for Diameter and Eccentricities. This is especially important when these parameters are constant, which is one of the hard cases of the problems.\footnote{If, for example, the diameter is polynomial in $n$, say $n^{\varepsilon}$, then we can approximate the diameter to an arbitrary precision of $1+\delta$ in $\tilde{O}(mn^{1-\varepsilon}/\delta)$ time by sampling a vertex on the true diameter path of distance at most $\frac{\delta}{1+\delta}n^\varepsilon$ from one of the true diameter endpoints.}

\subsection{Near linear almost 3/2-approximation for Diameter}

%
%
%
%

Thorup and Zwick~\cite{ThZw05} introduced distance oracles, a succinct data structure for answering approximate distance queries efficiently.
Among the tools they use are clusters and bunches. Let $A\subseteq V$, let $p_A(u)$ be the closest vertex to $u$ from $A$, where ties are broken in favor of the vertex with a smaller identifier and let $d(u,A)=d(u,p_A(u))$.
For every $v\in V$, let $B_A(u)=\{ v\in V\mid d(u,v) <d(u,A) \}$ be the \emph{bunch} of $u$. For every $w\in V\setminus A$, let $C_A(w)= \{ v \mid w\in B_A(v)\}$ be the \emph{cluster} of $w$.

Thorup and Zwick~\cite{ThZw05} showed that if a set $A$ is formed by adding every vertex of $V$ to $A$ with probability $p$ then the expected size of $B_A(v)$ is $O(1/p)$, for every $v\in V$.
They also showed, in the context of compact routing schemes~\cite{ThZw01}, that if the set $A$ is constructed by a recursive sampling algorithm then it is possible to bound the maximum size of a cluster as well.
They also showed, in the context of compact routing schemes~\cite{ThZw01}, that if the set $A$ is constructed by a recursive sampling algorithm then it is possible to bound the maximum size of a cluster as well.
Their algorithm works as follows. It sets $A$ to the empty set and $W$ to $V$. Next, as long as the set $W$ is not empty the algorithm samples from $W$ vertices with probability $p$ and adds the sampled vertices to $A$.
The algorithm computes $C_A(w)$ for every $w\in W$ and removes from $W$ all the vertices whose cluster has at most $4/p$ vertices with respect to the updated $A$. The pseudo-code is given in Algorithm~\ref{A-center-TZ}.

\begin{algorithm}
\caption{Thorup and Zwick center algorithm}\label{A-center-TZ}
\begin{algorithmic}[1]
\Procedure{center}{$G,p$}
\State $A= \emptyset$
\State $W =V$

\While{$W\neq \emptyset$}
    \State $X$ - random sample of vertices from $W$, $|X|=|W|p$
    \State $A= A \cup X$
    \State  $W = \{ w\in V \mid |C_A(w)|> 4/p \}$

\EndWhile

\State \Return $A$

\EndProcedure
\end{algorithmic}
\end{algorithm}

Thorup and Zwick proved the following Theorem:
\begin{theorem}[Theorem 3.1 from ~\cite{ThZw01}]
The expected size of the set $A$ returned by Algorithm~\ref{A-center-TZ}
is at most $2np \log n$. For every $w \in V$ we then
have $|C_A(w)|\leq 4/p$.
\end{theorem}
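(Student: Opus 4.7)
The bound $|C_A(w)| \le 4/p$ for every $w \in V$ is immediate from the termination condition: the while loop only exits when $W = \emptyset$, which by the update rule $W = \{w : |C_A(w)| > 4/p\}$ means no vertex has cluster size exceeding $4/p$ under the final~$A$.

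For the expected size bound, let $W_i$ denote the value of $W$ at the start of iteration $i$ (so $W_1 = V$) and let $X_i$ denote that iteration's sample, with $\E[|X_i|] = p|W_i|$. Then $\E[|A|] = p \sum_i \E[|W_i|]$, and it suffices to bound $\sum_i \E[|W_i|]$. My plan is to establish a shrinkage lemma: for every $w \in W_i$,
\[
  \Pr[w \in W_{i+1} \mid W_i] \le \tfrac{1}{2}.
\]
From this, $\E[|W_{i+1}| \mid W_i] \le |W_i|/2$ and inductively $\E[|W_i|] \le n/2^{i-1}$; by a Markov bound at each round, $W$ becomes empty within $O(\log n)$ iterations with high probability, and since each iteration contributes at most $p|W_i| \le pn$ vertices to $A$, summing gives $\E[|A|] = O(np \log n)$.

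To prove the shrinkage lemma, fix $w \in W_i$, so $|C_A(w)| > 4/p$. The key observation is that for each $v \in C_A(w)$, whenever some $u \in W_i$ with $d(v,u) \le d(v,w)$ is sampled into $X_i$, the vertex $v$ leaves $C_{A \cup X_i}(w)$, because then $d(v, A \cup X_i) \le d(v,u) \le d(v,w)$. Defining the ``killer set'' $K_v := \{u \in W_i : d(v,u) \le d(v,w)\}$, each element of $K_v$ enters $X_i$ independently with probability $p$, so
\[
  \E[|C_{A \cup X_i}(w)|] \;=\; \sum_{v \in C_A(w)} (1-p)^{|K_v|}.
\]
The goal is to show this expectation is at most $2/p$, at which point Markov's inequality gives $\Pr[|C_{A \cup X_i}(w)| > 4/p] \le 1/2$, establishing the shrinkage lemma.

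The hard part is precisely this expected-shrinkage estimate. Individual $|K_v|$ may be as small as $1$ (the vertex $w$ itself is always in $K_v$, but not much else need be), so one cannot simply lower-bound each $|K_v|$ pointwise. The argument must exploit the combinatorial structure of $C_A(w)$ as a whole: vertices $v$ whose killer set is small must be charged against cluster members $v' \in C_A(w) \cap W_i$ with large $|K_{v'}|$, using the hypothesis $|C_A(w)| > 4/p$ to guarantee enough such ``rich'' $v'$ exist. Making this charging precise, via the interaction between the ball structure around vertices of $C_A(w)$ and the set $W_i$ from which $X_i$ is drawn, is the crux of the original Thorup--Zwick argument.
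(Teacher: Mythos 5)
Your argument for the second assertion ($|C_A(w)|\le 4/p$ for all $w$) is fine---it is indeed immediate from the termination condition, and this is essentially all there is to say about it (note that the paper does not reprove this theorem at all: it is quoted from Thorup--Zwick~\cite{ThZw01}, and the only hint of the argument appears in a later footnote recording that $|W|$ halves with probability $1/2$ in each iteration). For the first assertion, however, which is the real content, your proposal stops exactly where the proof has to start: you reduce everything to the ``expected-shrinkage estimate'' $\E[|C_{A\cup X_i}(w)|]\le 2/p$ and then explicitly declare it the unproved crux. As written, this is a genuine gap, not a proof.

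Moreover, the per-vertex target you set up is not provable in that form. If every $v\in C_A(w)$ has killer set $K_v=\{w\}$---for instance, $w$ is the center of a large star whose leaves are close to $w$ but far from every other vertex of $W_i$, the leaves themselves having already dropped out of $W$---then with probability $1-p$ the vertex $w$ is not sampled and $C_{A\cup X_i}(w)=C_A(w)$, so neither $\E[|C_{A\cup X_i}(w)|]\le 2/p$ nor $\Pr[w\in W_{i+1}\mid W_i]\le \tfrac12$ holds; no charging argument confined to a single cluster can repair this. The Thorup--Zwick proof is instead an aggregate double count over all of $W$: $\sum_{w\in W}|C_{A\cup X}(w)|=\sum_{v\in V}|B_{A\cup X}(v)\cap W|$, and for each fixed $v$, ordering the vertices of $W$ by distance from $v$, the summand is at most the number of $W$-vertices preceding the first sampled one, hence dominated by a geometric random variable whose mean is at most one over the sampling probability. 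Applying Markov's inequality to this aggregate sum (this is where the factor $4$ in the threshold enters, and where it matters that in~\cite{ThZw01} the sampling rate is taken relative to the current $|W|$) shows that $|W|$ shrinks by a constant factor with probability at least $1/2$ in each round, which gives the $O(\log n)$ expected number of iterations and hence $\E[|A|]\le 2np\log n$. So the right decomposition is global over $W$, not vertex-by-vertex, and the step you deferred is precisely the theorem.
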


Thorup and Zwick claimed that the expected running time of Algorithm~\ref{A-center-TZ} is $O(mnp \log n)$.
They did not provide the details and refer the reader to~\cite{ThZw05}.
However, an educated guess is that they compute clusters for the vertices currently in $W$ in each iteration of the while loop, which results in the claimed running time.

The starting point of the Diameter and Eccentricities algorithms presented in this section is an $O(n/p^2 \log n)$ expected time implementation of Algorithm~\ref{A-center-TZ}.

The first idea behind our implementation is that, as opposed to what Thorup and Zwick did, we will compute the bunches and use them to compute the clusters and the set $W$.
This can be done as follows. Once we have computed $B_A(v)$ for every $v\in V$, we can scan $B_A(v)$, and for every $w\in B_A(v)$ we can add $v$ to $C_A(w)$.
The cost of this process is $O(\sum_{v\in V} |B_A(v)|)$ and since the clusters are by definition the inverse of the bunches, at the end of this process we have $C_A(w)$ and $|C_A(w)|$, for every $w\in V$ and we can compute $W$ (as needed in Algorithm~\ref{A-center-TZ}).

However, in the current implementation only the expected size of a bunch is bounded, and since the Thorup-Zwick bound on the number of iterations is $O(\log n)$ in expectation as well, we cannot apply this idea directly to deduce a good expected running time. To this end, more ideas are needed.

The following simple observation helps us to achieve our goal.

\begin{observation}\label{O-bunch}
Let $A_i$ be the set $A$ after updating it in the beginning of the $i$-th iteration of the while loop in Algorithm~\ref{A-center-TZ}. Let $A^*$ be a set such that $A^*\subseteq A_i$, for every $i\geq 1$.
For every $v\in V$ it holds that $B_{A_i}(v) \subseteq B_{A^*}(v)$.
\end{observation}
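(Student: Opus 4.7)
The plan is to unpack the definition of a bunch and exploit the basic monotonicity that enlarging the center set $A$ can only bring the nearest element of $A$ closer to any given vertex. Concretely, for any vertex $v$ and any set $A$, the quantity $d(v,A) = \min_{a \in A} d(v,a)$ is a minimum over $A$, hence it is non-increasing as $A$ grows. Since we are given $A^* \subseteq A_i$, this yields
\[
d(v, A_i) \;\leq\; d(v, A^*).
\]

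From here the containment $B_{A_i}(v) \subseteq B_{A^*}(v)$ is one step. By definition of the bunch, an element $u$ lies in $B_{A_i}(v)$ exactly when $d(v,u) < d(v, A_i)$. Combining with the inequality above,
\[
d(v,u) \;<\; d(v, A_i) \;\leq\; d(v, A^*),
\]
so $u \in B_{A^*}(v)$, proving the claimed inclusion. That is the whole argument.

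Because this is essentially just chasing definitions, there is no real obstacle; the only thing to be careful about is the tie-breaking convention used in the definition of $p_A$ (ties broken by smaller identifier). The observation as stated uses the strict inequality $d(u,v) < d(u,A)$, so ties against elements of $A^*$ that happen to lie in $A_i \setminus A^*$ are not an issue: the strict inequality is preserved under replacing $A^*$ by any superset. I would state the argument in two or three sentences immediately after the observation, perhaps emphasizing as a remark that the same monotonicity immediately implies the dual fact $C_{A_i}(w) \subseteq C_{A^*}(w)$ for every $w$, which will be useful when controlling cluster sizes across iterations of Algorithm~\ref{A-center-TZ}.
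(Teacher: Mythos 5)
Your proof is correct and matches the paper's intent: the paper states this as a simple observation without an explicit proof, and the intended argument is exactly the definition-chasing monotonicity you give ($A^*\subseteq A_i$ implies $d(v,A_i)\leq d(v,A^*)$, so the strict inequality defining the bunch is preserved). Your added remark about clusters is also consistent with how the paper uses the observation.
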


It follows from this observation that we only need to pick the first set $A_1$ such that $|B_{A_1}(v)|\leq O(1/p)$ for every $v\in V$.

It is folklore that the $s$ closest vertices $N_{s}(v)$ to a vertex $v$ can be computed in $O(s^2)$ time~\cite{DorHZ00}.
This implies that we can compute $N_{1/p}(v)$ for every $v\in V$ in $O(n/p^2)$ time.
It is not hard to see   that, given the sets $N_{1/p}(v)$ of all $v\in V$, one can (deterministically) compute a ``hitting'' set $A$ of size $O(np \log n)$ in $O(n+n/p)$ worst case time, so that $N_{1/p}(v)\cap A\neq \emptyset $ for every $v\in V$ (a greedy algorithm works; e.g. see~\cite{ThZw05}).

The second idea behind our implementation is that we first compute the sets $N_{1/p}(v)$ for every $v\in V$ and the hitting set $A$,
as described above. Then, using these sets, we initialize Algorithm~\ref{A-center-TZ} with a set $A$ such that $|B_A(v)|=O(1/p)$, for every $v\in V$.

%


In more detail, our algorithm works as follows. For every $v\in V$ it computes the set $N_{1/p}(v)$ in $O(n/p^2)$ time. Then it finds a  set $A$ such that  $N_{1/p}(v)\cap A\neq \emptyset $ for every $v\in V$. Given the hitting set $A$, it computes $d(v,A)$ and $p_A(v)$ for every $v\in V$. Using $d(v,p_A(v))$ and
$N_{1/p}(v)$ it computes for every $v\in V$ the bunch $B_A(v)$. Finally, it computes the clusters and $W$ using the bunches as we described above. The rest of the algorithm is almost identical to Algorithm~\ref{A-center-TZ}.
The only difference is that we compute the bunches and use them to compute the clusters and the set $W$.
The pseudo-code is given in Algorithm~\ref{A-center}.

\begin{algorithm}
\caption{New implementation of Thorup and Zwick center algorithm}\label{A-center}
\begin{algorithmic}[1]
\Procedure{center}{$G,p$}
\State compute $N_{1/p}(v)$ for every $v\in V$.
\State $A=$ hitting set of the sets $N_{1/p}(v)$, where $v\in V$.
\State compute $d(v,A)$ and $p_A(v)$ for every $v\in V$.
\State compute $B_A(v)$ using $N_{1/p}(v)$ and $d(v,p_A(v))$.
\For{\texttt{every $u\in V$}}
    \State compute $C_A(u)$ using $B_A(\cdot)$
\EndFor
  \State  $W = \{ w\in V \mid |C_A(w)|> 4/p \}$

\While{$W\neq \emptyset$}
    \State $X$ - random sample of vertices from $W$, $E[|X|]=np$
    \State $A= A \cup X$
    \For{\texttt{every $v\in V$}}
    \State compute $B_A(v)$
    \EndFor

\For{\texttt{every $u\in V$}}
    \State compute $C_A(u)$ using $B_A(\cdot)$
\EndFor

  \State  $W = \{ w\in V \mid |C_A(w)|> 4/p \}$

\EndWhile

\State \Return $A$

\EndProcedure
\end{algorithmic}
\end{algorithm}

We show:

\begin{lemma}\label{L-A-center}
Algorithm~\ref{A-center} computes in $O(n/p^2 \log n)$ expected time a set $A$ of expected size $O((pn)\cdot  \log n)$
that guarantees for every vertex $w\in V \setminus A$ that $|C_A(w)|=O(1/p)$, and for every $v\in V$ that $|B_A(v)|=O(1/p)$.
\end{lemma}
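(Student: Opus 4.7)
The plan is to verify three things about Algorithm~\ref{A-center}: (i) the returned $A$ satisfies the cluster bound $|C_A(w)|=O(1/p)$ and the bunch bound $|B_A(v)|=O(1/p)$; (ii) $\mathbb{E}[|A|]=O((n/p)\log n)$; and (iii) the expected running time is $O((n/p^2)\log n)$. The cluster bound is immediate from the termination condition. For the bunch bound, after the pre-processing, since $A\cap N_{1/p}(v)\neq\emptyset$, the closest element of $A$ to $v$ must lie in $N_{1/p}(v)$ (every vertex outside it is at least as far as any vertex inside it), so $d(v,A)\leq \max_{u\in N_{1/p}(v)} d(v,u)$ and therefore $B_A(v)\subseteq N_{1/p}(v)$, giving $|B_A(v)|\leq 1/p$. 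By Observation~\ref{O-bunch}, bunches only shrink when $A$ grows, so this invariant persists. The size bound on $A$ combines the $O(np\log n)$ vertices added by the greedy hitting set (standard for sets of size $1/p$ in a universe of $n$) with the $O(np\log n)$ expected vertices added by the while loop, inherited from the Thorup--Zwick analysis of Algorithm~\ref{A-center-TZ}; since $p\le 1$, this is $O((n/p)\log n)$.

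For the running time, I would bound each pre-processing step: computing $N_{1/p}(v)$ for every $v$ costs $O(n/p^{2})$ by the folklore algorithm, the greedy hitting set costs $O(n/p)$, and by scanning $N_{1/p}(v)$ one obtains $p_A(v)$, $d(v,A)$, $B_A(v)$ and (by inverting) every $C_A(w)$ in total time $O(n/p)$. For the while loop, I would store with each pair $(v,u)$ in $B_A(v)$ the distance $d(v,u)$, so that when I later scan $C_A(x)$ (for a newly sampled $x\in X$), I can look up $d(v,x)$ in $O(1)$ from the mirrored bunch entry. Each iteration then performs: (a) update $p_A$ and $d(\cdot,A)$ by iterating, for each $x\in X$, through $C_A(x)$ and candidate-updating the closest pivot of each $v\in C_A(x)$; (b) refresh each bunch by re-scanning the old bunch (of size $O(1/p)$) and discarding entries no longer satisfying the strict inequality; (c) re-invert to obtain clusters and recompute $W$.

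The key cost accounting is for step (a): $\sum_{x\in X}|C_A(x)|\le \sum_{w\in V}|C_A(w)|=\sum_{v\in V}|B_A(v)|=O(n/p)$, using the bunch bound established in part (i). Steps (b) and (c) are also $O(n/p+n)$. So each iteration costs $O(n/p+n)$, and inheriting the Thorup--Zwick expected bound of $O(\log n)$ iterations, the while-loop total is $O((n/p)\log n)$. Adding the pre-processing $O(n/p^{2})$ yields the claimed $O((n/p^{2})\log n)$.

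The main obstacle is the accounting in step (a): naively, scanning $C_A(x)$ for every $x\in X$ could cost $|X|\cdot\max_w|C_A(w)|$, which may be huge when sampled vertices have large clusters (as they do, since they lie in $W$). The crucial observation is that, because the bunches are guaranteed to be of size $O(1/p)$ from the very first iteration onward (an invariant our pre-processing establishes, and which the original Thorup--Zwick algorithm does not enjoy until late iterations), double counting gives $\sum_{x\in X}|C_A(x)|\le\sum_v|B_A(v)|=O(n/p)$. Without the hitting-set preprocessing the argument would collapse, so isolating this invariant is the heart of the proof.
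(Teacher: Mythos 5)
Your proof is correct, and it follows the paper's skeleton in the parts that matter most: the hitting-set preprocessing forces $A\cap N_{1/p}(v)\neq\emptyset$, hence $B_A(v)\subseteq N_{1/p}(v)$ and $|B_A(v)|\leq 1/p$ deterministically from the start; Observation~\ref{O-bunch} keeps this invariant as $A$ grows; the cluster bound comes from the termination condition; and the expected size and iteration count are inherited from Thorup--Zwick. Where you genuinely diverge is the while-loop cost accounting. The paper simply recomputes every bunch from scratch in each iteration, charging $O(1/p^2)$ per vertex (possible because $B_{A_i}(v)\subseteq B_{A^*}(v)$ with $|B_{A^*}(v)|=O(1/p)$), so its $\log n$ factor multiplies $n/p^2$; it mentions, but does not need, a pruning variant costing $O(n/p+m)$ per iteration, where the $O(m)$ comes from recomputing $d(\cdot,A)$ by a dummy-source shortest-path search. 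Your implementation is a sharper version of that pruning idea: by storing distances inside bunches and updating $d(\cdot,A)$ only over the clusters $C_A(x)$ of newly sampled $x\in X$ (valid, since any improving $x$ satisfies $d(v,x)<d(v,A)$ and hence $v\in C_A(x)$), the double-counting bound $\sum_{x\in X}|C_A(x)|\leq\sum_v|B_A(v)|=O(n/p)$ makes each iteration cost $O(n/p)$, giving $O(n/p^2+(n/p)\log n)$ overall, which is tighter than (and certainly within) the claimed $O((n/p^2)\log n)$; this also sidesteps the paper's $O(m)$ preprocessing step for $d(\cdot,A)$, since scanning $N_{1/p}(v)$ already yields it. The only point you assert rather than argue is that the $O(\log n)$ expected iteration bound survives the modified initial set $A$; the paper covers this with a one-line footnote (the Thorup--Zwick argument only uses that each sample is drawn from the current $W$ with probability $p$), and it would be worth one sentence in your write-up as well.
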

\begin{proof}

The cost of computing $N_{1/p}(v)$ for every $v\in V$ is $O(n (1/p)^2)$~\cite{DorHZ00}. The cost of computing $A$ is $O(np)$ time~\cite{ThZw05}. Computing  $d(v,A)$ and $p_A(v)$ for every $v\in V$ in $O(m)$ time is straightforward by running shortest paths tree computation from a
dummy vertex that is connected to the set $A$. To compute $B_A(v)$ using $N_{1/p}(v)$ we only scan $N_{1/p}(v)$, thus, the total cost is $O(n(1/p))$. As we explained earlier the cost of computing clusters using bunches is
$O(\sum_{v\in V}|B_A(v)|)$. Since for every $v\in V$ we have $B_A(v)\subseteq N_{1/p}(v)$ the total cost is $O(n(1/p))$.

This completes the analysis of the part that precedes the while loop.
Next, we analyze the cost of the while loop.

Let $A^*$ be the set $A$ that was computed before the while loop and let $A_i$ be the set $A$ after updating it in the beginning of the $i$-th iteration of the while loop.
From Observation~\ref{O-bunch} it follows that $B_{A_i}(v) \subseteq B_{A^*}(v)$ and therefore in every iteration the cost of computing bunches from scratch
is at most $O(n (1/p)^2)$ as $|B_{A^*}(v)|=O(1/p)$, for every $v\in V$. One can also compute $B_{A_{i+1}}(v)$ from $B_{A_i}(v)$ by first computing $d(v,A_{i+1})$ and if $d(v,A_{i+1})<d(v,A_i)$ to prune $B_{A_{i+1}}(v)$
accordingly at a smaller cost of $O(n(1/p) + m)$, however this does not affect the overall complexity.

Thorup and Zwick~\cite{ThZw01} proved that the expected number of iterations is $O(\log n)$. 
They show that if the set $A$ in each iteration is     chosen from $W$
uniformly at random with probability $p$ then in each iteration with
probability $1/2$ the size of $W$ decreases by a factor of $2$. Thus, the fact that  the set $A$ from which we start is different does not
affect the correctness proof, since the set $A$ is still constructed in each iteration in the same way. 

Therefore, we conclude that there are only $O(\log n)$ iterations in expectation.
This implies that a set $A$ of expected size $(np \log n)$ is returned in $O(n/p^2 \log n)$ expected time.  The algorithm stops only when there are no large clusters, thus the bound on the cluster size follows. As we mentioned above the algorithm starts with bunches that satisfy the required bound and their
size can only decrease afterwards, thus the bound on the bunches follows.
\end{proof}

\begin{algorithm}
\caption{almost $3/2$-Approximation for Diameter}\label{A-diam}
\begin{algorithmic}[1]
\Procedure{$3/2$-Approx-Diam}{G}
\State $M$ - $n\times n$ matrix whose entries are set to $n$
\State $A =$ CENTER$(G,1/\sqrt n)$\;

\For{\texttt{every $w\in V$}}
\Comment{Step 1}
\For{\texttt{every $\langle u, v\rangle\in C_A(w) \times C_A(w)$, s.t. $u\neq v\;$}}
    \State $M(u,v) = \min (M(u,v), d(u,w)+d(v,w))$
\EndFor
\EndFor

\For{\texttt{every $\langle u, v\rangle \in V \times V$, s.t. $M(u,v)=n$}}
\Comment{Step 2}
    \State $M(u,v)  = d(u,A)+d(v,A)-1$
\EndFor

\State $H$ - an additive 2 spanner of $G$
\Comment{Step 3}
\For{\texttt{every $u \in A$}}
    \State compute shortest paths tree for $u$ in $H$ and set $\epsilon_H(u)$, the Eccentricity of $u$ in $H$
\EndFor

\State $D_1 = \max_{\langle u, v\rangle \in V \times V} M(u,v)$
\State $D_2 = \max_{u\in A} \epsilon_H(u)$
\State $\hat{D} = \max(D_1,D_2-2)$

\State \Return $\hat{D}$

\EndProcedure
\end{algorithmic}
\end{algorithm}

We can now turn to describe the new Diameter algorithm.
The algorithm works as follows. All entries of an $n \times n$ matrix $M$ are set to $n$. A set $A$ of centers is computed using the algorithm of Thorup and Zwick~\cite{ThZw01}.
For every vertex $w\in V$ and every pair $\langle u, v\rangle\in C_A(w)\times C_A(w)$  the algorithm sets $M(u,v)$ to $\min (M(u,v), d(u,w)+d(v,w))$ (Step 1).
Next, the algorithm searches the matrix $M$ for entries whose value is still $n$. Given  a pair $\langle u, v\rangle\in V \times V$ for which $M(u,v)=n$ the algorithm sets $M(u,v)$ to
$d(u,A)+d(v,A)-1$ (Step 2). Finally, the algorithm computes  an additive 2 spanner $H$ of the input graph $G$ and for every $u\in A$  it computes $\epsilon_H(u)$, the Eccentricity of $u$ in $H$ (Step 3).
The algorithm outputs the maximum between $D_1$ and $D_2-2$, where $D_1$ is  $\max_{\langle u, v\rangle \in V \times V} M(u,v)$ and $D_2$ is $\max_{u\in A} \epsilon_H(u)$.

Next, we bound the value returned by Algorithm~\ref{A-diam}.

\begin{theorem}\label{thm:densediam}
For an unweighted undirected graph $G$ with diameter $D=3h+z$ where $h$ is a positive integer and $z\in [0,1,2]$, the value $\hat{D}$ returned by Algorithm~\ref{A-diam} satisfies:

$$
	\begin{array}{ll}
		2h-1  & \mbox{if } z \in [0,1] \\
		2h & \mbox{if } z=2
	\end{array}\leq \hat{D} \leq  D
$$
\end{theorem}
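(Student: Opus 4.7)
The plan is to prove $\hat D \le D$ and the required lower bound by first establishing a closed-form description of the final entries of $M$. The key preliminary step is the structural lemma that, for every pair $u\ne v$,
$$M(u,v) \;=\; \min\bigl(d(u,v),\; d(u,A)+d(v,A)-1\bigr).$$
The point is that a vertex $w$ satisfies $u,v\in C_A(w)$ iff $d(u,w)<d(u,A)$ and $d(v,w)<d(v,A)$; summing these and using $d(u,w)+d(v,w)\ge d(u,v)$ shows such a $w$ can exist only when $d(u,A)+d(v,A)\ge d(u,v)+2$, and in that regime some $w$ on a shortest $u$-$v$ path works — take the internal vertex whose $u$-distance lies in the non-empty integer interval $(d(u,v)-d(v,A),\,d(u,A))$. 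Consequently Step~1 drives $M(u,v)$ down to exactly $d(u,v)$ whenever it fires, and otherwise Step~2 writes $d(u,A)+d(v,A)-1$, which in the complementary regime is $\le d(u,v)$.

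From this lemma the upper bound is immediate: $M(u,v)\le d(u,v)\le D$ for all $u\ne v$, so $D_1\le D$, while the additive-$2$ spanner property $d_H\le d_G+2$ gives $\epsilon_H(u)\le D+2$, hence $D_2-2\le D$. For the lower bound I would fix a diameter pair $s,t$ with $d(s,t)=D=3h+z$ and case-split on $\sigma:=d(s,A)+d(t,A)$. If $\sigma\ge 2h$ (respectively $\sigma\ge 2h+1$ when $z=2$), the closed form immediately yields $M(s,t)=\min(D,\sigma-1)\ge 2h-1$ (resp.~$\ge 2h$), so $D_1$ already meets the claim. Otherwise by symmetry we may assume $d(s,A)\le h-1$ (resp.~$\le h$), and for $a=p_A(s)\in A$ the triangle inequality plus the subgraph-spanner relation $\epsilon_H(a)\ge d_G(a,t)$ gives $\epsilon_H(a)\ge D-d(s,A)\ge 2h+z+1$ (resp.~$\ge 2h+2$), so that $D_2-2\ge 2h+z-1\ge 2h-1$ (resp.~$\ge 2h$). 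Either way the claimed bound on $\hat D=\max(D_1,D_2-2)$ holds.

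The step I expect to be the main obstacle is the structural lemma — specifically, the observation that whenever \emph{any} $w\in V$ witnesses $u,v\in C_A(w)$ in Step~1, one on a shortest $u$-$v$ path also does, which pins the Step~1 minimum at $d(u,v)$ exactly. Without this, $M(u,v)$ could a priori be as large as $d(u,w)+d(v,w)$ for a $w$ off the shortest path, which could overshoot $D$ and break the upper bound. Once the closed form for $M(u,v)$ is available, both the upper bound and the two-regime case analysis on $\sigma$ for the lower bound are routine applications of the triangle inequality together with the additive-$2$ property of $H$.
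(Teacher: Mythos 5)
Your proposal is correct and follows essentially the same route as the paper: your closed form $M(u,v)=\min\bigl(d(u,v),\,d(u,A)+d(v,A)-1\bigr)$ is just a unified packaging of the paper's two lemmas (a common bunch vertex can be taken on a shortest $u$--$v$ path, and disjoint bunches force $d(u,A)+d(v,A)-1\leq d(u,v)$), and your case split on $\sigma=d(s,A)+d(t,A)$ with the spanner eccentricity of $p_A(s)$ handling the ``endpoint close to $A$'' case mirrors the paper's case analysis. The only cosmetic difference is that you prove the shortest-path witness by an explicit integer-interval construction rather than the paper's farthest-bunch-vertex contradiction argument; both are sound.
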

\begin{proof}

We start with the following Lemma:

\begin{lemma}\label{L-INTER}
Let $u,v\in V$ and let $P(u,v)$ be a shortest path between $u$ and $v$.
If $B_A(u) \cap B_A(v)\neq \emptyset$ then $(B_A(u) \cap B_A(v)) \cap P(u,v) \neq \emptyset$.
\end{lemma}
\begin{proof}
If $v\in B_A(u)$ then the claim trivially holds so we can assume that $v\notin B_A(u)$.
Let $w$ be the vertex farthest from $u$ that is in $B_A(u)  \cap P(u,v)$.  From the definition of $w$ it follows that $d(u,w)=d(u,A)-1$. Assume, towards a contradiction, that $(B_A(u) \cap B_A(v)) \cap P(u,v) = \emptyset$.
This implies that $w\notin B_A(v)$ and $d(v,A)-1<d(v,w)=d(u,v)-d(u,w)$. However, since $B_A(u) \cap B_A(v)\neq \emptyset$ there is a vertex $w'$ such that $d(u,w')\leq d(u,w)$ and $d(v,w')\leq d(v,A)-1 < d(u,v)-d(u,w)$. This implies that $d(u,w')+d(v,w')< d(u,v)$, a contradiction to the triangle inequality.
\end{proof}

\begin{lemma}\label{L-NO-INT}
Let $u,v\in V$. If $B_A(u) \cap B_A(v)= \emptyset$ then $d(u,A)+d(v,A)-1 \leq d(u,v)$.
\end{lemma}
\begin{proof}
Notice first that $B_A(u)$ (resp., $B_A(v)$) contains all the vertices at distance $d(u,A)-1$ (resp., $d(v,A)-1$). Let $P(u,v)$ be a shortest path between $u$ and $v$.
Let $w$ be the vertex farthest from $u$ on $P(u,v)$ that is also in $B_A(u)$. Similarly, let $w'$ be the vertex farthest from $v$ on $P(u,v)$ that is also in $B_A(v)$. Since $B_A(u) \cap B_A(v)= \emptyset$ it holds that $w\neq w'$.
Therefore: $$d(u,v) = d(u,A)-1+d(v,A)-1+d(w,w')\geq d(u,A)+d(v,A)-1.$$\end{proof}

Let $a$ and $b$ be the Diameter endpoints, that is $d(a,b)=D=3h+z$, where $z\in [0,1,2]$. Let $P(a,b)$ be a shortest path between $a$ and $b$.

Assume first that $B_A(a) \cap B_A(b)\neq \emptyset$. It follows from Lemma~\ref{L-INTER} that there is a vertex $w\in P(a,b)$ such that $\langle a, b\rangle \in C(w) \times C(w)$.
Therefore,  $M(a,b)=D$ after Step 1.
After the update in Step 2 it follows from Lemma~\ref{L-NO-INT} that $M(u,v)\leq d(u,v)$ for every $u ,v\in V$. Therefore, the maximum value in the matrix is $d(a,b)$ and $D_1=D$.
Let $x = \argmax_{y\in A} \epsilon_H(y)$. Since $H$ is an additive 2 spanner it holds that $\epsilon_H(x)\leq D+2$, hence, we have $D_2 \leq D$ and the algorithm returns the exact value of the Diameter.

Assume now that $B_A(a) \cap B_A(b) = \emptyset$.
From the discussion of the previous case it follows that in this case $\hat{D}\leq D$ as well. Thus, it is only left to prove the lower bound. 
Assume that $z\in [0,1]$. Consider first the case that $d(a,A)\geq h$ and $d(b,A)\geq h$ then from Lemma~\ref{L-NO-INT} it follows that $M(a,b)\geq 2h-1$ after Step 2 and $D_1$ is at least $2h-1$.
If this is not the case then either $d(a,A)< h$ or $d(b,A)< h$ (or both). Assume, wlog, that $d(a,A)< h$. In this case the Eccentricity in $H$ of at least one vertex from $A$ is at least $2h+1$  and hence $D_2-2$ is at least $2h-1$.

Assume now that $z=2$.
If either $d(a,A)\geq h$ and  $d(b,A)> h$ or $d(a,A)>h$ and  $d(b,A)\geq h$ then from Lemma~\ref{L-NO-INT} it follows that $M(a,b)\geq 2h$ after Step 2 and $D_1$ is at least $2h$. If this is not the case then either $d(a,A)\leq h$ or $d(b,A)\leq h$ (or both). Assume, wlog, that $d(a,A)\leq h$. In this case the Eccentricity in $H$ of at least one vertex from $A$ is at least $2h+2$  and hence $D_2-2$ is at least $2h$.
\end{proof}

We now turn to we analyze the running time of Algorithm~\ref{A-diam}.

\begin{theorem}
For an unweighted undirected graph $G$, the expected running time of  Algorithm~\ref{A-diam} is $O(n^2 \log n)$.
\end{theorem}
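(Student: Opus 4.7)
The plan is to bound the cost of each of the three steps of Algorithm~\ref{A-diam} using the parameter choice $p = 1/\sqrt{n}$ and the guarantees of Lemma~\ref{L-A-center}. By that lemma, computing $A$ itself takes expected time $O(n/p^{2}\log n) = O(n^{2}\log n)$, and produces a set of expected size $|A| = O(\sqrt{n}\log n)$ with the property that $|B_{A}(v)| = O(\sqrt{n})$ for every $v\in V$ and $|C_{A}(w)| = O(\sqrt{n})$ for every $w\in V\setminus A$. All the per-step bounds below will fit inside this expected budget.

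First I would handle Step~1. The inner loop iterates over pairs in $C_{A}(w)\times C_{A}(w)$, so its total cost is $O\!\left(\sum_{w} |C_{A}(w)|^{2}\right) = O(n\cdot (1/p)^{2}) = O(n^{2})$. The only subtlety is where the distances $d(u,w)$ and $d(v,w)$ come from: since $u\in C_{A}(w)$ is equivalent to $w\in B_{A}(u)$, the relevant distances are exactly those computed (and stored) while building the bunches inside Algorithm~\ref{A-center}, so no extra work is required. Step~2 is a single pass over the $n\times n$ matrix, using the quantities $d(v,A)$ that were already computed during CENTER; its cost is $O(n^{2})$.

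The remaining step is Step~3. Here I would invoke Knudsen's construction~\cite{Knudsen17} to build an additive $2$-spanner $H$ with $O(n^{1.5})$ edges in $O(n^{2})$ time. Then running BFS from each vertex of $A$ in $H$ costs $O(|A|\cdot |E(H)|) = O(\sqrt{n}\log n \cdot n^{1.5}) = O(n^{2}\log n)$ expected time, which dominates. Summing Steps~1--3 together with the CENTER call yields the claimed $O(n^{2}\log n)$ expected running time.

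The main (mild) obstacle is just verifying that the intra-cluster distances needed by Step~1 are genuinely available for free from Algorithm~\ref{A-center}; once this bookkeeping point is settled, every other bound follows immediately by plugging $p=1/\sqrt{n}$ into the sizes guaranteed by Lemma~\ref{L-A-center} and into the size of the additive spanner.
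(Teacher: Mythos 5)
Your proposal is correct and follows essentially the same route as the paper: invoke Lemma~\ref{L-A-center} with $p=1/\sqrt{n}$ to bound the CENTER call by $O(n^{2}\log n)$ and the cluster sizes by $O(\sqrt n)$, charge Step~1 $O\!\left(\sum_{w}|C_{A}(w)|^{2}\right)=O(n^{2})$ and Step~2 $O(n^{2})$, and bound Step~3 by Knudsen's $O(n^{2})$-time spanner plus $|A|=O(\sqrt n\log n)$ shortest-path-tree computations on $O(n^{1.5})$ edges, for $O(n^{2}\log n)$ total. Your extra remark that the distances $d(u,w)$ for $u\in C_{A}(w)$ are already available from the bunch computation is a fair bookkeeping point that the paper leaves implicit but does not change the argument.
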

\begin{proof}
The set $A$ is computed by the center algorithm presented in Algorithm~\ref{A-center} with $p=1/\sqrt n$. From Lemma~\ref{L-A-center} it follows that the size of the set $A$ is $O(\sqrt n \log n)$ and its construction time is $O(n^2\log n)$ in expectation. For every $w\in V$ the size of $C_A(w)$ is $O(\sqrt n)$. Therefore, Step 1 takes $O(n \times |C_A(w)|^2)=O(n^2)$.
Step 2 takes $O(n^2)$ time as well.  In Step 3 we first compute an additive 2 spanner $H$ on $O(n^{1.5})$ edges. Knudsen~\cite{Knudsen17}, following Dor, Halperin and Zwick~\cite{DorHZ00} showed how to do this in $O(n^2)$ time. We also compute $|A|$ shortest paths trees in $H$. As $H$ has $O(n^{1.5})$ edges, this step  takes $O(n^2 \log n)$ time.
\end{proof}

\subsection{Near linear almost 5/3-approximation for Eccentricities}

\begin{algorithm}
\caption{almost $5/3$-Approximation for all Eccentricities}\label{A-ecc}
\begin{algorithmic}[1]
\Procedure{$5/3$-Approx-Ecc}{G}
%
%
%
\State Run lines 2-11 of Algorithm~\ref{A-diam}, with $H$ augmented with shortest paths trees for $B_A(u) \cup \{ p(u) \}$ for every $u\in V$
\For{\texttt{every $u \in V$}}
    \State $\epsilon_1(u) = \max_{v\in V} M(u,v)$
    \State $\epsilon_2(u) = \epsilon_H(p(u))-d(u,p(u))-2$
    \State $\epsilon_3(u) = d_H(u,y)-2$, where $y=\argmax_{x\in A} d_H(u,x)$
    \State $\epsilon'(u) = \max(\epsilon_1(u),\epsilon_2(u),\epsilon_3(u))$
\EndFor
\EndProcedure
\end{algorithmic}
\end{algorithm}

Next, we show how to update Algorithm~\ref{A-diam} to obtain an almost $5/3$ approximation for all Eccentricities.
We run lines 2-11 of Algorithm~\ref{A-diam}. The only difference is that $H$ is augmented with the edges of the shortest paths tree that span the set $B_A(u) \cup \{ p(u) \}$ for every $u\in V$.
Then, for every $u\in V$ we  compute $\epsilon_1(u)$, $\epsilon_2(u)$ and $\epsilon_3(u)$, which are defined as follows:
$\epsilon_1(u) = \max_{v\in V} M(u,v)$,  $\epsilon_2(u) = \epsilon_H(p(u))-d(u,p(u))-2$ and  $\epsilon_3(u) = d_H(u,y)-2$, where $y=\argmax_{x\in A} d_H(u,x)$.
The algorithm sets $\epsilon'(u)$ to $\max \{ \epsilon_1(u), \epsilon_2(u), \epsilon_3(u) \}$ for every $u\in V$ as an estimation to $\epsilon(u)$. The pseudo-code is given in Algorithm~\ref{A-ecc}.

We now prove:

\begin{theorem}\label{thm:fastdenseecc}
For an unweighted undirected graph $G$, for every $u\in V$,  Algorithm~\ref{A-ecc} computes in $O(n^2 \log n)$ expected time  a value $\epsilon'(u)$ that satisfies:  $\frac{3\epsilon(u)}{5}-1 \leq \epsilon'(u)\leq  \epsilon(u).$
\end{theorem}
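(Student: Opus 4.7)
The plan is to prove $\epsilon'(u)\leq \epsilon(u)$ and $\epsilon'(u)\geq 3\epsilon(u)/5-1$ by analyzing the three estimates $\epsilon_1(u)$, $\epsilon_2(u)$, $\epsilon_3(u)$ separately, and to recover the running time from that of Algorithm~\ref{A-diam}. For the upper bound, I first argue $M(u,v)\leq d(u,v)$ for every pair: in Step~1 the minimum is taken over witnesses $w$ with $u,v\in C_A(w)$, and when $B_A(u)\cap B_A(v)\neq\emptyset$ Lemma~\ref{L-INTER} guarantees such a $w$ on a shortest $u,v$-path (so the minimum equals $d(u,v)$), while in Step~2 Lemma~\ref{L-NO-INT} gives $d(u,A)+d(v,A)-1\leq d(u,v)$. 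This yields $\epsilon_1(u)\leq\epsilon(u)$. The additive-$2$ spanner property of $H$ gives $\epsilon_H(p(u))\leq\epsilon_G(p(u))+2\leq\epsilon(u)+d(u,p(u))+2$, hence $\epsilon_2(u)\leq\epsilon(u)$, and similarly $\epsilon_3(u)=d_H(u,y)-2\leq d_G(u,y)\leq\epsilon(u)$.

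For the lower bound, let $v$ be a vertex with $d(u,v)=\epsilon(u)$ and write $a=d(u,A)$, $b=d(v,A)$. If $B_A(u)\cap B_A(v)\neq\emptyset$ then Lemma~\ref{L-INTER} supplies a witness $w$ on a shortest $u,v$-path with $u,v\in C_A(w)$, so $M(u,v)=d(u,w)+d(v,w)=\epsilon(u)$ in Step~1, giving $\epsilon_1(u)\geq\epsilon(u)$ and we are done. Otherwise Step~2 sets $M(u,v)=a+b-1$, so $\epsilon_1(u)\geq a+b-1$. The triangle inequality together with $d_H\geq d_G$ gives $\epsilon_G(p(u))\geq d(p(u),v)\geq\epsilon(u)-a$, hence $\epsilon_2(u)\geq\epsilon(u)-2a-2$, and $d_H(u,y)\geq d_H(u,p(v))\geq d(u,p(v))\geq\epsilon(u)-b$, hence $\epsilon_3(u)\geq\epsilon(u)-b-2$.

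The main obstacle is balancing these three lower bounds. If each were strictly smaller than $3\epsilon(u)/5-1$, I would derive $a+b<3\epsilon(u)/5$, $a>\epsilon(u)/5-1/2$, and $b>2\epsilon(u)/5-1$; summing the latter two yields $a+b>3\epsilon(u)/5-3/2$, and I would exploit the integrality of $a$, $b$ and $\epsilon(u)$ together with the strict inequalities to close this narrow window and reach the required contradiction.

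For the runtime, Algorithm~\ref{A-diam} already produces $M$, $A$, and the additive-$2$ spanner $H$ in $O(n^2\log n)$ expected time; the augmentation attaches to $H$, for each vertex $u$, a shortest-path tree spanning $B_A(u)\cup\{p(u)\}$, which has $O(\sqrt n)$ edges by Lemma~\ref{L-A-center}, keeping $|E(H)|=O(n^{1.5})$. BFS from each of the $O(\sqrt n\log n)$ vertices of $A$ in $H$ produces all $\epsilon_H(p(u))$ and all $d_H(u,x)$ for $x\in A$ in $\tilde{O}(n^2)$ total, and the per-vertex maxima $\max_v M(u,v)$ are then read off in $O(n^2)$.
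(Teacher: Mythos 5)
Your upper-bound argument, your Case-1 argument, and your runtime accounting all check out (and your derivation of $\epsilon_2(u)\geq \epsilon(u)-2d(u,A)-2$ via $\epsilon_H(p(u))\geq \epsilon_G(p(u))$ is even slightly cleaner than the paper's, since it does not need the shortest-paths-tree augmentation of $H$). The genuine gap is the final balancing step, which is the crux of the theorem. The contradiction you hope to extract from $a+b<3\epsilon(u)/5$, $a>\epsilon(u)/5-1/2$, $b>2\epsilon(u)/5-1$ is simply not there: the window has width $3/2>1$, and for $\epsilon(u)\equiv 1 \pmod 5$, say $\epsilon(u)=5k+1$, the integer point $a=k$, $b=2k$ satisfies all three strict inequalities (similarly for $\epsilon(u)\equiv 2,4\pmod 5$). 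In that configuration each of your three lower bounds $a+b-1$, $\epsilon(u)-2a-2$, $\epsilon(u)-b-2$ equals $3k-1$, while $3\epsilon(u)/5-1=3k-2/5$, so their maximum falls short. More generally, minimizing $\max\{a+b-1,\ \epsilon-2a-2,\ \epsilon-b-2\}$ over integers $a,b$ (take $a\approx(\epsilon-1)/5$, $b\approx 2(\epsilon-1)/5$) gives only about $3\epsilon/5-8/5$; so these three inequalities alone can never yield the stated bound, and integrality cannot rescue the argument.

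The missing idea, which is how the paper closes exactly this trade-off, is to not tie the $\epsilon_1$ bound to the eccentricity witness $t$. Since $\epsilon_1(u)=\max_{v}M(u,v)$, the paper introduces $t'$, the vertex farthest from $A$ among \emph{all} vertices whose bunch is disjoint from $B_A(u)$, and splits on $d(t',A)$: if $d(u,A)\geq \epsilon(u)/5$ and $d(t',A)\geq 2\epsilon(u)/5$, then Lemma~\ref{L-NO-INT} applied to the pair $(u,t')$ already gives $\epsilon_1(u)\geq d(u,A)+d(t',A)-1\geq 3\epsilon(u)/5-1$, no matter how close $t$ itself is to $A$; if instead $d(t',A)<2\epsilon(u)/5$, then in particular $d(t,p(t))<2\epsilon(u)/5$, so $d(u,p(t))>3\epsilon(u)/5$ and $\epsilon_3(u)\geq d(u,p(t))-2$ does the job; the remaining case $d(u,A)\leq \epsilon(u)/5-1$ is handled by $\epsilon_2$ exactly as in your write-up. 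In other words, the paper decouples the two budgets $d(u,A)$ and the far-side distance to $A$ by maximizing the latter over a whole set of candidate far vertices rather than fixing $b=d(t,A)$, and that is precisely the slack your fixed-$(a,b)$ analysis lacks. (The paper's own case thresholds are stated without floors and are a bit informal about rounding, but the structural ingredient you would need is this use of $t'$ inside $\epsilon_1$.)
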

\begin{proof}
We start by analyzing the running time. Lines 2-11 of the algorithm are the same as Algorithm~\ref{A-diam}, with one difference, the spanner $H$ is augmented with the edges of a shortest paths tree rooted at $u$ that span the set $B_A(u) \cup \{ p(u) \}$, for every $u\in V$.
This adds at most $O(n^{1.5})$ edges to $H$ and hence the cost of these lines remain $O(n^2 \log n)$ time in expectation. The computation of $\epsilon_1(u)$, $\epsilon_2(u)$ and $\epsilon_3(u)$ for every $u\in V$ costs $O(n^2)$ time in total.

Let $u\in V$ be an arbitrary vertex and let $\epsilon(u)=d(u,t)$.  We now turn to bound $\epsilon'(u)$.

In our analysis we will use the following simple observation:
\begin{observation}\label{O-ecc}
In an undirected graph it holds for every $u,v \in V$ that $\epsilon(u)\geq \epsilon(v)-d(u,v)$.
\end{observation}

It is straightforward to see that both  $\epsilon_2(u)$  and $\epsilon_3(u)$ are at most $\epsilon(u)$.
Recall that $\epsilon_2(u) = \epsilon_H(p(u))-d(u,p(u))-2\leq \epsilon(p(u))-d(u,p(u))\leq \epsilon(u)$
and $\epsilon_3(u) = d_H(u,y)-2\leq d(u,y)\leq \epsilon(u)$.

We distinguish between two cases.

{\bf Case 1:} $B_A(u) \cap B_A(t)\neq \emptyset$. It follows from Lemma~\ref{L-INTER} that $P(u,t)\cap (B_A(u) \cap B_A(t)) \neq \emptyset$ and $M(u,t)= \epsilon(u)$.
From Lemma~\ref{L-NO-INT} it follows that $M(u,w)\leq d(u,w)$ for every $w\in V$ after Step 2. Therefore, $\epsilon_1(u)=\epsilon (u)$. Since $\epsilon_2(u)\leq \epsilon(u)$ and $\epsilon_3(u)\leq \epsilon(u)$ we get that $\epsilon'(u)=\epsilon(u)$.

{\bf Case 2:} $B_A(u) \cap B_A(t) =  \emptyset$. Consider first the case that $d(u,p(u))\leq \frac{\epsilon(u)}{5}-1$.
From Observation~\ref{O-ecc} we get that $\epsilon_H(p(u))\geq \epsilon_H(u)-d_H(u,p(u))$. As we augmented $H$ with a shortest paths tree that spans $B_A(u) \cup \{ p(u) \}$ we have $d(u,p(u))=d_H(u,p(u))$ and we get  $\epsilon_H(p(u))\geq \epsilon_H(u)-d(u,p(u))$.
Hence, we get that $\epsilon_2(u) = \epsilon_H(p(u))-d(u,p(u))-2 \geq  \epsilon_H(u)-2d(u,p(u))-2\geq \epsilon(u)-2d(u,p(u))-2$.
As before we have $\epsilon_2(u) \leq  \epsilon(u)$. Using $d(u,p(u))\leq \frac{\epsilon(u)}{5}-1$ we get that:

 $$\epsilon_2(u) \geq \epsilon(u)-\frac{2\epsilon(u)}{5}\geq \frac{3\epsilon(u)}{5}.$$

Assume now that  $d(u,p(u))\geq  \frac{\epsilon(u)}{5}$. This means that $d(u,A)-1\geq \frac{\epsilon(u)}{5}-1$.

Let $S$ be the set of all vertices $v\in V$ such that $B_A(u) \cap B_A(v)=  \emptyset$, that is, $S=V \setminus \cup_{w\in B_A(u)} C_A(w)$.
Let $t' = \argmax_{x\in S} d(x,A)-1$. If $d(t',A)-1 \geq   \frac{2\epsilon(u)}{5}-1$ we get from Lemma~\ref{L-NO-INT} that $M(u,t')\geq \frac{3\epsilon(u)}{5}-1$.
Assume now that  $d(t',A) <   \frac{2\epsilon(u)}{5}$. As $t'$ is the farthest vertex from $A$ we get that $d(t,p(t))<\frac{2\epsilon(u)}{5}$ and $d(u,p(t))>\frac{3\epsilon(u)}{5}$.
Therefore, $\epsilon_3(u) = d_H(u,y)-2\geq d(u,p(t))-2\geq \frac{3\epsilon(u)}{5}-1$.

From Lemma~\ref{L-INTER} and Lemma~\ref{L-NO-INT} it follows that $\epsilon_1(u)\leq \epsilon(u)$ and the bound follows.
\end{proof}

	\subsection{Algorithms for dense graphs using matrix multiplication}

Here we will give $O(n^{2.05})$ time
approximation algorithms for Diameter and Eccentricities in dense unweighted undirected graphs. The approximation guarantees of these algorithms are slightly better than those in our $O(n^2\log n)$ time algorithm. In fact, the guarantees are exactly the same as in the $\tO(m\sqrt n)$ time algorithms for Diameter and Eccentricities of Roditty and Vassilevska W.~\cite{RV13} and Cairo et al.~\cite{cairo}. Specifically, we prove the following theorem.
 \begin{theorem}\label{thm:205}
There is an $O(n^{2.045})$ time randomized algorithm that with high probability outputs an almost $3/2$-approximation $\tilde{D}$ to the Diameter $D$ and almost $5/3$-approximations $e(v)$ to all Eccentricities $\epsilon(v)$ in an unweighted undirected graph:
\begin{enumerate}
\item $\frac{2D-1}{3} \leq \tilde{D}\leq D$.
\item For every vertex $v$, $\frac{3\epsilon(v)-1}{5}\leq \tilde{\epsilon}(v)\leq \epsilon(v)$.
\end{enumerate}
\end{theorem}

To achieve this, we give an efficient implementation using fast matrix multiplication of the $\tO(m\sqrt n)$ time algorithms of \cite{cairo} and \cite{RV13}.

The main overhead of the $\tilde{O}(m\sqrt n)$ time algorithms \cite{cairo,RV13} is in computing the distances from a set $S$ of $O(\sqrt n\log n)$ vertices. Computing the set $S$ itself can be done in linear time. In particular, $S$ is defined as the union of a set $W$, a set $T$, and a vertex $w$. The set $W$ is simply a random sample. The vertex $w$ is the farthest vertex from $W$, which can be computed in linear time via BFS from a dummy vertex adjacent to every vertex in $W$. The set $T$ is defined as the closest $\sqrt n$ vertices to $w$, and can be computed by BFS from $w$. After one knows all distances from every $s\in S$ to every $v\in V$, it takes linear time to output the Diameter and Eccentricity estimates.

The main idea of our algorithms is as follows. If the Diameter is of size $\leq O(\log n)$, then one does not need all distances between $S$ and $V$, but only those that are $O(\log n)$. Small distances are easy to compute with matrix multiplication. Let $A$ be the adjacency matrix and $A_S$ be its submatrix formed by just the rows in $S$. Then we can find the distances for all pairs in $S\times V$ at distance $\leq t$ by computing $A_S\times A^{t-1}$, which can be computed by performing $t-1$ matrix products of dimension $|S|\times n$ by $n\times n$, and this can be accomplished in $O(tn^{2.05})$ time~\cite{legallrectnew,legallrect}. If on the other hand the Diameter is $D\geq 100\log n$, then one can use an $\tilde{O}(n^2)$ time algorithm by Dor et al.~\cite{DorHZ00} to compute estimates of all pairwise distances with an additive error at most $4\log n$. The maximum distance estimate computed, minus $4\log n$, will be between $0.96 D$ and $D$, giving a really good approximation already. A similar argument works for Eccentricities, and also for $S$-$T$ Diameter.


Below we recap the guarantees of the $\Ot(m\sqrt n)$ time approximation algorithms of \cite{cairo, RV13}.

\begin{theorem}[\cite{cairo, RV13}] \label{cairoalgs} 
The following can be computed in $\tilde{O}(m\sqrt{n})$ time with high probability:
\begin{enumerate}
\item an estimate $\hat{D}$ of the graph Diameter $D$, such that $\frac{2}{3} D-\frac{1}{3}\leq \hat{D}\leq D$,
\item for every vertex $v$, an estimate $e(v)$ of its Eccentricity $\epsilon(v)$, such that $\frac{3}{5} \epsilon(v)-\frac{1}{5}\leq e(v)\leq \epsilon(v)$.
\end{enumerate}
\end{theorem}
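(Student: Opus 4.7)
The plan is to implement both parts with a common sampling scheme in the style of Roditty--Vassilevska W., then do separate case analyses for Diameter and for Eccentricities. First I would sample a set $W \subseteq V$ of $\Theta(\sqrt{n}\log n)$ vertices uniformly at random. A standard coupon-collector argument shows that with high probability $W \cap N_{\sqrt n}(v) \neq \emptyset$ for every $v \in V$, i.e., $W$ intersects the $\sqrt n$ closest vertices to every vertex. Next, with a single BFS from a dummy source attached to $W$, I compute $d(W,v)$ for every $v$ and let $w := \arg\max_v d(W,v)$. Then I run BFS from $w$ in order to both find $\epsilon(w)$ and extract $T := N_{\sqrt n}(w)$. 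Finally I run BFS from every vertex in $S := W \cup \{w\} \cup T$. Since $|S| = O(\sqrt n \log n)$, the total work is $\tilde{O}(m\sqrt n)$.

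For the Diameter bound, let $a,b$ be endpoints with $d(a,b)=D$, and set $\hat D$ equal to the largest distance observed in any BFS tree computed above. I would split into cases based on $d(w,\{a,b\})$. If $\epsilon(w) \geq 2D/3 - 1/3$, then the BFS from $w$ already certifies $\hat D \geq 2D/3 - 1/3$. Otherwise both $d(w,a)$ and $d(w,b)$ are strictly smaller, which forces $w$ to lie within a narrow band around the middle of any shortest $a$-$b$ path. In this regime, by the choice of $w$ maximizing $d(W,\cdot)$, every sampled vertex $s \in W$ is within distance roughly $D/3$ of $w$; combined with the property $W \cap N_{\sqrt n}(w) \neq \emptyset$, this implies that $T$ covers a large ball around $w$ and in particular contains a vertex $t$ lying on the shortest path from $w$ towards $a$ (or $b$) at distance $\approx D/3$, so that $d(t,a) \geq 2D/3 - 1/3$. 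Since we performed BFS from $t$, this value is included in $\hat D$.

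For the Eccentricities bound, I would assign to each vertex $v$ the estimate $e(v) := \max_{s \in S} d(v,s)$, which satisfies $e(v) \leq \epsilon(v)$ trivially. Let $u^* := \arg\max_u d(v,u)$, so $\epsilon(v) = d(v,u^*)$. The analysis splits on the interplay of $v$, $w$, and $u^*$. If $d(v,w) \geq 3\epsilon(v)/5 - 1/5$, then $e(v) \geq d(v,w)$ closes the case. Otherwise, using $W \cap N_{\sqrt n}(u^*) \neq \emptyset$ and the maximality of $d(W,w)$, I would argue that $d(W,w)$ must be reasonably large, which in turn forces $T$ to contain a vertex $t$ that is within a controlled distance from $u^*$, so that the triangle inequality gives $d(v,t) \geq d(v,u^*) - d(u^*,t) \geq 3\epsilon(v)/5 - 1/5$.

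The main obstacle is tuning the constants $3/5$ and $-1/5$ in the Eccentricities case: the ``bad'' configuration where $u^*$ is far from $w$, $v$ is close to $w$, and no $s \in W$ already witnesses $d(v,s) \geq 3\epsilon(v)/5$ simultaneously requires a delicate triangle-inequality accounting that trades the neighborhood radius in $T$, the maximality of $d(W,w)$, and the sampling guarantee on $W$. The Diameter case is strictly easier because we only need to balance two distances $d(w,a)$ and $d(w,b)$ along a single shortest path, whereas for Eccentricities the worst-case vertex $u^*$ varies with $v$ and forces the argument to be uniform over all $v$.
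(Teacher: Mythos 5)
First, a framing point: the paper does not prove Theorem~\ref{cairoalgs} at all --- it is imported from \cite{RV13,cairo} (see also \cite{ChechikLRSTW14,aingworth}), and the paper merely restates the algorithm as Algorithm~\ref{cairoalg} in order to accelerate it with rectangular matrix products. Your sampling scheme ($W$ of size $\Theta(\sqrt n\log n)$, then $w=\argmax_v d(W,v)$, then $T=N_{\sqrt n}(w)$, BFS from $S=W\cup\{w\}\cup T$) is exactly the cited algorithm, and your Diameter sketch is close to the standard analysis, but one step is inverted: maximality of $d(W,w)$ does not imply that the vertices of $W$ are \emph{close} to $w$, and if they were close you could conclude nothing about $T$ covering a ball around $w$. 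The correct chain needs the case split you dropped: either some $s\in W$ has $d(s,b)\lesssim D/3$ for a Diameter endpoint $b$ (then $d(s,a)\gtrsim 2D/3$ and BFS from $s$ already certifies the bound), or $d(W,b)>D/3$, hence $d(W,w)>D/3$ by the choice of $w$, hence the $W$-vertex guaranteed to lie in $N_{\sqrt n}(w)$ is \emph{far} from $w$, so $T\supseteq B(w,\lfloor D/3\rfloor)$; then (unless $d(w,b)\geq 2D/3$, in which case BFS from $w$ suffices) the vertex $t$ at distance $\approx D/3$ from $w$ on a shortest $w$--$b$ path lies in $T$, is within $\approx D/3$ of $b$, and therefore has $d(t,a)\gtrsim 2D/3$.

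The Eccentricities part has a genuine gap: the estimator $e(v)=\max_{s\in S}d(v,s)$ cannot achieve $\tfrac{3}{5}\epsilon(v)-\tfrac{1}{5}$, and your claim that $T$ must contain a vertex within controlled distance of $u^*$ is false, because $w$ can be far from $W$ yet close to $v$ and very far from $u^*$, so that $T$ is a ball around a vertex near $v$ containing nothing near $u^*$. Concretely: let $v$ be adjacent to a hub $h$ whose $\Theta(n)$ neighbors form the bulk of the graph, attach a path $v\!-\!a_1\!-\!a_2\!-\!a_3\!-\!a_4\!-\!u^*$, add $\Theta(\sqrt n)$ vertices adjacent to both $h$ and $a_3$, and add a private path $v\!-\!c\!-\!w_0$. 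With high probability every vertex of $W$ is within distance $2$ of $v$, $d(W,u^*)=3<4=d(W,w_0)$, so the algorithm picks $w=w_0$, and $T=N_{\sqrt n}(w_0)\subseteq B(v,2)$; hence $e(v)=2$ while $\epsilon(v)=5$, violating the claimed bound $\tfrac{3}{5}\cdot 5-\tfrac{1}{5}=2.8$. This is precisely why the cited algorithms (see the last line of Algorithm~\ref{cairoalg}) use $\tilde{\epsilon}(v)=\max\{d(w,v),\max_{x\in W}d(x,v),\max_{x\in T}(\epsilon(x)-d(x,v))\}$: the vertices of $T$ that rescue such a $v$ are \emph{close} to $v$, and their value lies in their exactly computed eccentricities via the reverse triangle inequality $\epsilon(v)\geq\epsilon(x)-d(x,v)$, not in their distance to $v$. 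Your plan needs this extra term, together with the corresponding case analysis (split on whether $d(W,u^*)$ is small, using $d(W,w)\geq d(W,u^*)$ and the coverage $T\supseteq B(w,\cdot)$), before the $\tfrac{3}{5}\epsilon(v)-\tfrac{1}{5}$ guarantee can be established.
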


Using Seidel's algorithm~\cite{Seidel} we can compute all the distances exactly, and hence the above parameters as well, all in $O(n^\omega)$ time for $\omega< 2.373$. We will show that for dense graphs, we can obtain the same approximation guarantees as in Theorem~\ref{cairoalgs}, in time $O(n^{2.05})$.

Let us compare to our $O(n^2\log n)$ time algorithms. For Diameter $D=3h+z$, the $O(n^2\log n)$ time algorithm returns an estimate $2h-1$ when $z=0,1$ and $2h$ when $z=2$. The estimate $\hat{D}$ here is $\geq (2D-1)/3= 2h+ (2z-1)/3$, which is $\geq 2h$ when $z=0$ and $\geq 2h+1$ when $z=1,2$.

For Eccentricities, the $O(n^2\log n)$ time algorithm returns estimates $e(v)\geq 3\epsilon(v)/5 -1$, and here we return a better estimate $e(v)\geq (3\epsilon -1)/5$.

We will rely on two known algorithms. The first is from a paper by Dor, Halperin and Zwick~\cite{DorHZ00} on additive approximations of All-Pairs Shortest Paths (APSP). Among many other results, \cite{DorHZ00} show that in $\tilde{O}(n^2)$ time, one can compute for all pairs of vertices $u,v$, an estimate $d'(u,v)$ of their distance $d(u,v)$ so that $d(u,v)\leq d'(u,v)\leq d(u,v)+a\log n$ for an explicit constant $a\leq 4$.

The second is an algorithm for the following truncated multi-source shortest paths problem: given an integer $Q$, a graph $G=(V,E)$ and a set $S$, compute the distances $d(s,v)$ for every $s\in S$ and $v\in V$ for which $d(s,v)\leq Q$. 

The algorithm uses fast matrix multiplication and is quite straightforward. Let $A$ be the $n\times n$ Boolean matrix with rows and columns indexed by $V$, so that $A[u,v]=1$ if there is an edge between $u$ and $v$ or $u=v$, and $A[u,v]=0$ otherwise; i.e. $A$ is the adjacency matrix added to the identity matrix.
Let $A_S$ be the $|S|\times n$ submatrix of $A$ consisting of the rows indexed by vertices of $S$.
For an integer $i\geq 1$, let $A^i$ be the $i$-th power of $A$ under the Boolean matrix product. Here, $A^i[u,v]=1$ if and only if the distance between $u$ and $v$ is at most $i$. Define $A^0$ as the identity matrix. Consider $A_S \cdot A^i$ for any choice of $i\geq 0$ (under the Boolean matrix product). Here, $(A_S \cdot A^i)[s,v]=1$ if and only if the distance between $s$ and $v$ is at most $i+1$. 
Thus, if we compute $D_i:=A_S\cdot A^i$ for every $1\leq i< Q$, we would know the distance from every $s\in S$ to every $v\in V$, whenever this distance is at most $Q$.
Computing these matrix products can easily be done by performing the following $Q-1$ Boolean products of an $|S|\times n$ matrix by an $n\times n$ matrix: let $D_0=A_S$; then for each $i$ from $1$ to $t-1$, compute $D_i:=D_{i-1}\cdot A$. Thus, the running time is $O(Q\cdot \mathcal{M}(|S|,n,n))$ where $\mathcal{M}(|S|,n,n)$ is the runtime of multiplying an $|S|\times n$ matrix by an $n\times n$ matrix.

Armed with these two algorithms, let us recap Roditty et al.'s (and Cairo et al.'s) approximation algorithm and see how to modify it.
The algorithm proceeds as follows: Let $D$, $R$ and $\epsilon(v)$ denote the Diameter and Radius of $G$ and the Eccentricity of vertex $v$, respectively.\\
%
%
%
%
%

\begin{algorithm}
\caption{RV/CGR Algorithm}\label{cairoalg}
\begin{algorithmic}[1]
\State Using BFS (see \cite{RV13} and \cite{cairo}), in $O(m+n)$ time compute $W,w,T$, where $W\subseteq V$ is a uniformly chosen subset of size $O(\sqrt n \log n)$, $w$ is the furthest vertex from $W$ and $T$ are the closest $\sqrt n$ vertices to $w$.
Let $S=\{w\}\cup W\cup T$.

\State For every $s\in S$ and every $v\in V$, compute the distance $d(s,v)$ between $s$ and $v$; set $\epsilon(s)=\max_v d(s,v)$.

\State Set $\tilde{D}=\max_{x\in S} \epsilon(x)$. 


\State Set for every $v\in V$, $\tilde{\epsilon}(v)=\max\{d(w,v),\max_{x\in W} d(x,v), \max_{x\in T} (\epsilon(x)-d(x,v))\}$.
\end{algorithmic}
\end{algorithm}

The runtime bottleneck in the above algorithm is step (2) which
runs in $\tilde{O}(mn^{1/2})$ time if one uses BFS through each vertex of $S$.
Let us describe how to modify the algorithm. We will replace (2) with a truncated distance computation and also use the algorithm of Dor, Halperin and Zwick to handle large distances that we might have ignored in the truncated computation.\\

\begin{algorithm}
\caption{Our Modified Approximation.}\label{ouralg}
\begin{algorithmic}[1]
\Procedure{FasterApproximation}{}
\BState \emph{First part: Handle Large Distances}:
\State \multiline{Use Dor, Halperin and Zwick's algorithm to compute distance estimates $d'(\cdot,\cdot)$ so that for every $u,v\in V$, $d(u,v)\leq d'(u,v)\leq d(u,v)+a\log n$. 
Let $X=3a\log n$.}
\State Set $\tilde{D}_1 = \max_{u,v\in V} d'(u,v) - a\log n$.
\State For every $v\in V$, set $\tilde{\epsilon}_1(v) = \max_{u} d'(u,v) - a\log n$.
\BState \emph{Second Part: Handle Small Distances}: 
\State \multiline{Using BFS (see \cite{RV13} and \cite{cairo}), in $O(m+n)$ time compute $W,w,T$, where $W\subseteq V$ is a uniformly chosen subset of size $O(\sqrt n \log n)$, $w$ is the furthest vertex from $W$ and $T$ are the closest $\sqrt n$ vertices to $w$.
Let $S=\{w\}\cup W\cup T$.}
\State \multiline{Let $Q=2(X+a\log n)=8a\log n$. 
For every $s\in S$ and every $v\in V$ whose distance $d(s,v)$ is at most $Q$, compute $d(s,v)$. Let $d_{\leq}(s,v)$ denote $d(s,v)$ if we have computed it, and $\infty$ otherwise. Set $\epsilon_{\leq}(s)=\max_v d_{\leq}(s,v)$. }
\State Set $\tilde{D}_2=\max_{x\in S} \epsilon_{\leq}(x)$. 
\State \multiline{$\forall v\in V$, set $\tilde{\epsilon}_2(v)=\max\{d_\leq(w,v), \max_{x\in W} d_\leq(x,v),  \max_{y\in T} (\epsilon_\leq(y)-d_\leq(y,v))\}$. If $\epsilon_\leq(y)$ and $d_\leq(y,v))$ are both infinite, $\tilde{\epsilon}_2(v)$ is set to $\infty$.}
\BState \emph{Third Part: Set $\tilde{D},\tilde{\epsilon}(\cdot)$}:
\State If $\tilde{D}_1\geq X$, 
set $\tilde{D}=\tilde{D}_1$, and otherwise set $\tilde{D}=\tilde{D}_2$. 
\State \multiline{For every $v\in V$, if there exists some $x\in S$ such that $d'(x,v)\geq X+a\log n$, then set $\tilde{\epsilon}(v)=\tilde{\epsilon}_1(v)$, and otherwise $\tilde{\epsilon}(v)=\tilde{\epsilon}_2(v)$.}
\EndProcedure
\end{algorithmic}
\end{algorithm}

Consider our modified algorithm, \textsc{FasterApproximation}.
Now we will prove several claims.

\begin{claim} The running time of algorithm \textsc{FasterApproximation} is $\tilde{O}(\mathcal{M}(\sqrt{n},n,n))$.
\end{claim}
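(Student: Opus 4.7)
The plan is to go through \textsc{FasterApproximation} step by step and show that every part is bounded by $\tilde{O}(M(\sqrt n, n, n))$, so that the overall runtime is dominated by a single call to multiply an $\tilde{O}(\sqrt n) \times n$ Boolean matrix by an $n \times n$ Boolean matrix, repeated a polylogarithmic number of times. Note that $M(\sqrt n, n, n) \geq n^2$ trivially, since the algorithm must read the $n \times n$ factor, so any $\tilde{O}(n^2)$ subroutine is absorbed.

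First, for the ``Large Distances'' part: Dor, Halperin and Zwick's additive $O(\log n)$ approximation for APSP runs in $\tilde{O}(n^2)$ time, and both $\tilde D_1$ and the $\tilde\epsilon_1(v)$ can be extracted from the distance-estimate matrix in $O(n^2)$ additional time by one pass over each row and column. All of this is $\tilde{O}(n^2) \subseteq \tilde{O}(M(\sqrt n, n, n))$.

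The meat of the argument is the ``Small Distances'' part. Sampling $W$, finding $w$ and $T$, and fixing $S = \{w\} \cup W \cup T$ can be done with $O(1)$ BFS calls, in $O(m+n) = O(n^2)$ time. For the truncated distance computation, I will use exactly the matrix-multiplication routine described just before the algorithm: let $A$ be the adjacency matrix plus the identity and $A_S$ its $|S| \times n$ submatrix on rows $S$. Iteratively compute $D_0 := A_S$ and $D_i := D_{i-1} \cdot A$ (Boolean product) for $i = 1, \ldots, Q - 1$, where $Q = 8a \log n$. Then $d_\leq(s, v)$ can be read off as the smallest $i$ such that $D_i[s,v] = 1$. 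Each product is an $|S| \times n$ by $n \times n$ Boolean product; since $|S| = O(\sqrt n \log n)$, each costs $\tilde{O}(M(\sqrt n, n, n))$ time by standard padding/partitioning (cut the $\sqrt n \log n$ rows into $O(\log n)$ blocks of $\sqrt n$ rows and apply rectangular matrix multiplication to each block). Summing over $Q = O(\log n)$ iterations, the total cost is $\tilde{O}(M(\sqrt n, n, n))$. Finally, computing $\epsilon_\leq(s)$, $\tilde D_2$, and the $\tilde\epsilon_2(v)$ from the $d_\leq$ table takes $O(|S| \cdot n) = \tilde{O}(n^{1.5})$ time, which is dominated.

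The ``Third Part'' only inspects the previously computed quantities: deciding $\tilde D$ takes $O(1)$ time, and for each $v \in V$ deciding $\tilde\epsilon(v)$ needs at most $|S|$ lookups into the $d'$ table, for a total of $\tilde{O}(n^{1.5})$ time. Summing all three parts gives $\tilde{O}(M(\sqrt n, n, n))$, as claimed. I do not foresee a real obstacle here; the only slightly non-routine point is justifying the $\tilde{O}(M(\sqrt n, n, n))$ cost of the rectangular Boolean product when $|S|$ is $\sqrt n \log n$ rather than exactly $\sqrt n$, but this is handled by the standard block-decomposition trick and absorbed in the $\tilde{O}$.
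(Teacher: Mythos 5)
Your proposal is correct and follows essentially the same route as the paper: bound the Dor--Halperin--Zwick part and all bookkeeping by $\tilde{O}(n^2)\leq \tilde{O}(M(\sqrt n,n,n))$, and charge the truncated distance computation as $O(\log n)$ iterated rectangular Boolean products of an $|S|\times n$ matrix with $A$, partitioned into polylogarithmically many $\sqrt n\times n\times n$ products since $|S|=\tilde{O}(\sqrt n)$. Nothing is missing; the block-decomposition point you flag is exactly how the paper handles it as well.
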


\begin{proof}
The Dor, Halperin, Zwick part of the algorithm (Step 3) runs in $\tilde{O}(n^2)$ time. Step 8 runs in $O((X+a\log n)\cdot \mathcal{M}(|S|,n,n))$ time where $S=\{w\}\cup W\cup T$, using the iterated rectangular matrix product algorithm. Recall that 
$X+a\log n = O(\log n)$. Thus Step 8 runs in $\tilde{O}(\mathcal{M}(|S|,n,n))$ time.
Since $|S|=\tilde{O}(\sqrt n)$ and we can partition an $|S|\times n\times n$ matrix product into $\polylog n$, $n^{1/2}\times n\times n$ matrix products, the runtime of the step is $\tilde{O}(\mathcal{M}(n^{1/(2)},n,n))$. Steps 10 and 13 run in $O(n|S|)<\tilde{O}(n^2)$ time. The rest of the steps run in linear time. Since $\mathcal{M}(n^{1/2},n,n)\geq n^2$ (one must at least read the input), the total running time is  $\tilde{O}(\mathcal{M}(n^{1/2},n,n))$.
\end{proof}

\begin{claim}
$\frac{2D-1}{3} \leq \tilde{D}\leq D$.
\end{claim}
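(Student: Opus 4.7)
The plan is to handle the two cases separately based on whether the Diameter is small (and hence fully captured by the truncated matrix-multiplication phase) or large (and hence well-approximated by the additive-error DHZ estimate of \cite{DorHZ00}).

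First I would establish the upper bound $\tilde{D}\leq D$. For the DHZ branch, $\tilde{D}_1=\max_{u,v}d'(u,v)-a\log n\leq\max_{u,v}(d(u,v)+a\log n)-a\log n=D$. For the matrix-multiplication branch, $\epsilon_\leq(s)\leq\epsilon(s)\leq D$ whenever $\epsilon_\leq(s)$ is finite, so $\tilde{D}_2\leq D$ once I check that $\epsilon_\leq(s)$ is in fact finite in the regime in which this branch is taken.

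Next I would verify the lower bound in each of the two cases determined by Step 10 of Algorithm~\ref{ouralg}. In the case $\tilde{D}_1\geq X$, I would use the trivial inequality $\max_{u,v}d'(u,v)\geq D$ to conclude $\tilde{D}_1\geq D-a\log n$; moreover $\tilde{D}_1\leq D$ together with $\tilde{D}_1\geq X=3a\log n$ forces $D\geq 3a\log n$, so $D-a\log n\geq 2D/3\geq(2D-1)/3$, yielding $\tilde{D}\geq(2D-1)/3$ as required. In the case $\tilde{D}_1<X$, the inequality $d(u,v)\leq d'(u,v)$ combined with $\max d'(u,v)<X+a\log n=4a\log n$ gives $D<4a\log n\leq Q$, so every pairwise distance is at most $Q$; in particular $d_\leq(s,v)=d(s,v)$ for every $s\in S,v\in V$, and hence $\tilde{D}_2=\max_{x\in S}\epsilon(x)$ is precisely the estimate produced by the RV/CGR algorithm (Algorithm~\ref{cairoalg}) on $G$. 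Theorem~\ref{cairoalgs} then immediately gives $(2D-1)/3\leq\tilde{D}_2\leq D$.

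The only delicate point is that these two cases use essentially disjoint arguments bridged by the threshold $X$: $X$ must be small enough (relative to the truncation parameter $Q$) that whenever the DHZ branch is not triggered, all true distances fit inside the window $Q$, yet large enough that whenever the DHZ branch is triggered, the additive error $a\log n$ is at most a $1/3$-fraction of $D$. The choices $X=3a\log n$ and $Q=8a\log n$ are exactly calibrated for this, so the main obstacle is really just carefully tracking these inequalities; nothing deeper is needed.
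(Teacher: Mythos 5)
Your proof is correct and follows essentially the same two-case argument as the paper: when $\tilde{D}_1\geq X$ you use $D\geq\tilde{D}_1\geq X=3a\log n$ to bound the additive error $a\log n$ by $D/3$, and when $\tilde{D}_1<X$ you conclude $D<X+a\log n\leq Q$ so all truncated distances from $S$ are exact and the guarantee of Theorem~\ref{cairoalgs} carries over to $\tilde{D}_2$. The calibration of $X$ and $Q$ you highlight is exactly the point the paper's proof relies on, so nothing further is needed.
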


\begin{proof}
Suppose that $\tilde{D}_1\geq X$. The algorithm returns $\tilde{D}=\tilde{D}_1=\max_{u,v} d'(u,v)-a\log n$. By the guarantee on $d'$, we have $D-a\log n \leq \tilde{D}_1\leq D$. Hence $\tilde{D}\geq D(1-(a\log n)/D)\geq D(1-(a\log n)/X) = 
2D/3 \geq (2D-1)/3$.

Suppose now that $\tilde{D}_1< X$. This means that $D<X+a\log n$ and every distance in the graph is $\leq X+a\log n$. In the second part of the algorithm we set $Q=2(X+a\log n)$, and hence every distance is computed exactly: for every $s\in S$, $v\in V$, $d_\leq(s,v)=d(s,v)$. Hence the second part of the algorithm will be identical to the RV/CGR algorithm and hence we get the same guarantees: $(2D-1)/3\leq \tilde{D}\leq D$.
\end{proof}

%
%
%
\begin{claim}
For every vertex $v$, $\frac{3\epsilon(v)-1}{5} \epsilon(v)\leq \tilde{\epsilon}(v)\leq \epsilon(v)$.
\end{claim}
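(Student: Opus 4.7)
I will prove the two inequalities $\tilde{\epsilon}(v) \leq \epsilon(v)$ and $\tilde{\epsilon}(v) \geq (3\epsilon(v)-1)/5$ separately. The upper bound is routine: $\tilde{\epsilon}_1(v) = \max_u d'(u,v) - a\log n \leq \epsilon(v)$ by the Dor--Halperin--Zwick guarantee $d'(u,v) \leq d(u,v) + a\log n$; and every term inside $\tilde{\epsilon}_2(v)$ is at most $\epsilon(v)$ since $d_\leq(s,v) \leq d(s,v) \leq \epsilon(v)$ and $\epsilon_\leq(y) - d_\leq(y,v) \leq \epsilon(y) - d(y,v) \leq \epsilon(v)$ by the triangle inequality.

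For the lower bound I will split on the branching condition that selects $\tilde{\epsilon}_1$ versus $\tilde{\epsilon}_2$. In Case 1, the condition triggers, so some vertex achieves $d'(\cdot,v) \geq X + a\log n$. Using $d(u,v) \geq d'(u,v) - a\log n$ this gives $\epsilon(v) \geq X = 3a\log n$, while $\tilde{\epsilon}_1(v) \geq \max_u d(u,v) - a\log n = \epsilon(v) - a\log n \geq (2/3)\epsilon(v)$, which comfortably exceeds $(3\epsilon(v)-1)/5$ once $\epsilon(v) \geq 3a\log n$.

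The heart of the argument is Case 2, in which the condition fails and therefore $\max_u d'(u,v) < X + a\log n = 4a\log n$, forcing $\epsilon(v) < 4a\log n = Q/2$. I will then show that the $Q$-truncation inside Step 8 and inside $\epsilon_\leq(\cdot)$ is a no-op for every quantity that $\tilde{\epsilon}_2(v)$ inspects: for every $s \in V$, $d(s,v) \leq d'(s,v) < 4a\log n < Q$, so $d_\leq(s,v) = d(s,v)$; and for every $y \in T$, the triangle inequality gives $\epsilon(y) \leq d(y,v) + \epsilon(v) < 8a\log n = Q$, so $\epsilon_\leq(y) = \epsilon(y)$. Consequently $\tilde{\epsilon}_2(v)$ is literally the estimate that the $\tilde{O}(m\sqrt{n})$-time RV/CGR algorithm of Theorem~\ref{cairoalgs} would produce from the same $w$, $W$, and $T$, so the bound $\tilde{\epsilon}_2(v) \geq (3\epsilon(v)-1)/5$ is inherited directly.

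The main subtle point I anticipate is extracting the clean bound $\epsilon(v) < 4a\log n$ from the failure of the branching condition: this requires reading the existential quantifier as ranging over all $u \in V$ (equivalently, as the condition $\tilde{\epsilon}_1(v) < X$). Otherwise a far vertex $u^*$ outside $S$ could let $\epsilon(v)$ grow while the truncated $\epsilon_\leq(y)$ saturates at $Q$, and $\tilde{\epsilon}_2(v)$ could slip below $(3\epsilon(v)-1)/5$. Once this reading is fixed, the reduction to Theorem~\ref{cairoalgs} makes the rest mechanical.
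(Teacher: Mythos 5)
Your proof follows essentially the same route as the paper's: the same case split on whether some vertex has $d'(\cdot,v)\geq X+a\log n$, the same computation giving $\tilde{\epsilon}_1(v)\geq \tfrac{2}{3}\epsilon(v)$ in the first case, and the same observation in the second case that $\epsilon(v)<X+a\log n$ forces every quantity used by $\tilde{\epsilon}_2(v)$ (in particular $\epsilon(y)\leq \epsilon(v)+d(y,v)<Q$ for $y\in T$) to be computed exactly, so the guarantee of the RV/CGR algorithm is inherited. Your reading of the branching condition as ranging over all of $V$ (equivalently $\tilde{\epsilon}_1(v)\geq X$) is exactly the reading the paper's proof adopts in its second case, even though Step 13 of the algorithm is phrased with $x\in S$; your caution on this point is warranted and consistent with the paper. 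One small correction: your standalone justification of the upper bound on $\tilde{\epsilon}_2(v)$ has the inequalities reversed, since under the paper's convention $d_\leq(s,v)\in\{d(s,v),\infty\}$, so $d_\leq(s,v)\geq d(s,v)$ and $\epsilon_\leq(y)-d_\leq(y,v)$ can exceed $\epsilon(y)-d(y,v)$ (indeed be $\infty$). This slip is harmless to the overall claim because $\tilde{\epsilon}(v)=\tilde{\epsilon}_2(v)$ only in your Case 2, where your own exactness argument gives $d_\leq=d$ and $\epsilon_\leq=\epsilon$ for every quantity involved, and then both the lower and the upper RV/CGR bounds are inherited; the upper bound should be derived there rather than asserted unconditionally.
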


\begin{proof} Fix $v$.
Suppose first that there exists some $x$ such that $d'(x,v)\geq X+a\log n$. Then $\epsilon(v)\geq\tilde{\epsilon}_1(v)=\max_u d'(u,v) - a\log n \geq \epsilon(v)-a\log n = \epsilon(v)(1-a\log n/\epsilon(v))$. Since $\epsilon(v)\geq d(x,v)\geq d'(x,v)-a\log n \geq X$, we get that $\tilde{\epsilon}_1(v)\geq \epsilon(v)(1-a\log n / X) = 2\epsilon(v)/3$. 

Now suppose that for all $x\in V$, $d'(x,v)<X+a\log n$. Then, also for all $x\in V$, $d(x,v)<X+a\log n$ and $\epsilon(v)<X+a\log n$.
Consider all the quantities needed in the second part of the algorithm to compute $\tilde{\epsilon}_2(v)$: 
\begin{itemize}
\item $d_\leq(w,v)$: since $\forall x\in V$, $d(x,v)<X+a\log n$, $d_\leq(w_i,v)=d(w_i,v)$ for each $w_i$;
\item $d_\leq(x,v)$ for every $x\in W$: as above, $d_\leq(x,v)=d(x,v)$;
\item $\epsilon_\leq(x)-d_\leq(x,v)$ for all $x\in T$: here, $\epsilon(x)\leq \epsilon(v)+d(x,v)\leq 2\epsilon(v)<2(X+a\log n)$. Since we compute all distances from vertices in $S$ up to $2(X+a\log n)$ and $x\in S$, $\epsilon_\leq(x)=\epsilon(x)$. Also as in the above bullets, $d_\leq(x,v)=d(x,v)$. 
\end{itemize}
Thus all the quantities needed are the correct ones and $\tilde{\epsilon}(v)=\tilde{\epsilon}_2(v)$ inherits the same guarantees as in the algorithm by Cairo et al.
\end{proof}

From Le Gall and Urrutia~\cite{legallrectnew} (see also,~\cite{legallrect}) we obtain that $\mathcal{M}(\sqrt{n},n,n)\leq O(n^{2.044183})$. This completes the proof of Theorem~\ref{thm:205}

Finally we note that our approach also works to speed up our almost $2$-approximation algorithm for $S$-$T$ Diameter as well, giving an $O(n^{2.045})$ time almost-$2$ approximation algorithm. The main reason is that, like in the Diameter approximation algorithm, if the $S$-$T$ Diameter is very large (say $D_{S,T}>100a\log n$), then the $+a\log n$ APSP algorithm with $a\log n$ subtracted will return an estimate that is at least $D_{S,T}-a\log n>0.99D_{S,T}$. On the other hand, our $S$-$T$ Diameter approximation algorithm only needs to know the distances up to $D_{S,T}$ to compute an estimate of $D_{S,T}$, and so if $D_{S,T}\leq 100a\log n$, then we only need to compute $O(\log n)$ matrix products of dimension $O(\sqrt n \log n)\times n\times n$ again.

%

		
	\bibliographystyle{alpha}
	\bibliography{references}

	
%
%
%
%
%
%
%
%
%
%
%

\end{document}